\newcommand{\ignore}[1]{}{}
\newtheorem{theorem}{Theorem}
\newtheorem{proposition}{Proposition}
\newtheorem{lemma}{Lemma}
\newtheorem{remark}{Remark}
\newcommand{\RR}{\R}
\newcommand{\ep}{\epsilon}
\newcommand{\argmin}{\mathop{\rm arg\min}}
\def\0{\boldsymbol{0}}
\def\b{\boldsymbol{b}}
\def\x{\boldsymbol{x}}
\def\R{\mathbb{R}}
\def\D{\boldsymbol{D}}
\def\hD{\widehat{\boldsymbol{D}}}
\def\A{\boldsymbol{A}}
\def\z{\boldsymbol{z}}
\def\H{\boldsymbol{H}}
\def\x{\boldsymbol{x}}
\def\y{\boldsymbol{y}}
\def\be{\boldsymbol{\beta}}
\def\X{\boldsymbol{X}}
\def\S{\boldsymbol{\Sigma}}
\def\0{\boldsymbol{0}}
\def\x{\boldsymbol{x}}
\def\be{\boldsymbol{\beta}}
\def\X{\boldsymbol{X}}
\def\S{\boldsymbol{\Sigma}}
\def\figspace{0.39}
\def\pr{\mathbb{P}} 
\def\ep{\mathbb{E}} 
\newcommand{\hbe}{\widehat{\be}}
\newcommand{\hS}{\widehat{\S}}
\newcommand{\mbE}{\mathbb{E}}
\newcommand{\mbR}{\mathbb{R}}
\newcommand{\mbP}{\mathbb{P}}
\newcommand{\tY}{\widetilde{Y}}
\newcommand{\tX}{\widetilde{\boldsymbol{X}}}
\newcommand{\bX}{\bm{X}}
\newcommand{\bA}{\bm{A}}
\newcommand{\bC}{\bm{C}}
\newcommand{\balpha}{\bm{\alpha}}
\newcommand{\bbeta}{\bm{\beta}}
\newcommand{\bhbeta}{\widehat{\bbeta}}
\newcommand{\btbeta}{\widetilde{\bbeta}}
\newcommand{\bZ}{\bm{Z}}
\newcommand{\btZ}{\widetilde{\bZ}}
\newcommand{\bG}{\bm{G}}
\newcommand{\btheta}{\bm{\theta}}
\newcommand{\bv}{\bm{v}}
\newcommand{\bSigma}{\bm{\Sigma}}
\newcommand{\bhSigma}{\widehat{\bSigma}}
\newcommand{\norm}[1]{\Vert#1\Vert}
\newcommand{\Norm}[1]{\left\Vert#1\right\Vert}
\newcommand{\abs}[1]{\vert#1\vert}
\newcommand{\Abs}[1]{\left\vert#1\right\vert}
\newcommand{\ind}[1]{\mathds{1}[#1]}
\newcommand{\Ind}[1]{\mathds{1}\left[#1\right]}
\let\tilde\widetilde
\newtheorem{rep@theorem}{\rep@title}
\newcommand{\newreptheorem}[2]{%
	\newenvironment{rep#1}[1]{%
		\def\rep@title{#2 \ref{##1}}%
		\begin{rep@theorem}}%
		{\end{rep@theorem}}}
\definecolor{DSgray}{cmyk}{0,1,0,0}
\author{Xi Chen
	\footnote{New York University, New York, USA, Email: xchen3@stern.nyu.edu} ~
	Weidong Liu
	\footnote{Shanghai Jiao Tong University, Shanghai, China, Email: weidongl@sjtu.edu.cn} ~
	Xiaojun Mao
	\footnote{Fundan University,
		Shanghai,
		China,
		Email: maoxj@fudan.edu.cn} ~
	Zhuoyi Yang
	\footnote{New York University,
		New York,
		USA,
		Email: zyang@stern.nyu.edu}}
\begin{document}
	\title{Distributed High-dimensional  Regression Under a Quantile Loss Function}
	\author{Xi Chen
		\footnote{New York University, New York, USA, Email: xchen3@stern.nyu.edu} ~
		Weidong Liu
		\footnote{Shanghai Jiao Tong University, Shanghai, China, Email: weidongl@sjtu.edu.cn} ~
		Xiaojun Mao
		\footnote{Fundan University,
			Shanghai,
			China,
			Email: maoxj@fudan.edu.cn} ~
		Zhuoyi Yang
		\footnote{New York University,
			New York,
			USA,
			Email: zyang@stern.nyu.edu}}
	
	\date{}
	\maketitle
\begin{abstract}
	
	This paper studies distributed estimation and support recovery for high-dimensional linear regression model with heavy-tailed noise. To deal with heavy-tailed noise whose variance can be infinite, we adopt the quantile regression loss function instead of the commonly used squared loss. However, the non-smooth quantile loss  poses new challenges to high-dimensional distributed estimation in both computation and theoretical development. To address the challenge, we transform the response variable and establish a new connection between quantile regression and ordinary linear regression. Then, we provide a distributed estimator that is both computationally and communicationally efficient, where  only the gradient information is communicated at each iteration. Theoretically, we show that, after a constant number of iterations, the proposed estimator achieves a near-oracle convergence  rate without any restriction on the number of machines. Moreover, we establish the theoretical guarantee for the support recovery. The simulation analysis is provided to demonstrate the effectiveness of our method. 

\end{abstract}
\noindent
\textbf{keywords:}
Distributed estimation; high-dimensional linear model; quantile loss; robust estimator; support recovery

\section{Introduction}

{
	The development of internet technology has led to the generation of modern data that exhibits several challenges in statistical estimation:
	\begin{enumerate}
		\item The first challenge comes from the scalability of the data. In particular, modern large-scale data usually cannot be fit into memory or are collected in a distributed environment.  For example, a personal computer usually has a limited memory size in GBs; while the data stored on a hard disk could have a size in TBs. In addition, sensor network data are naturally collected by many sensors. For these types of large-scale data, traditional methods, which load all the data into memory and run a certain optimization procedure (e.g., Lasso),  are no longer applicable due to both storage and computation issues. 
		
		\item The second challenge comes from the dimensionality of data. High-dimensional data analysis has been an important research area in statistics over the past decade. A sparse model is commonly adopted in high-dimensional literature and support recovery is an important task for high-dimensional analysis (see, e.g., \cite{zhao2006model,wainwright2009sharp,buhlmann2011statistics,tibshirani2015statistical}). There are some recent work on statistical estimation for high-dimensional distributed data (see, e.g., \cite{zhao2014general}, \cite{lee2017communication}, \cite{battey2018distributed}). However, these work usually adopt a de-biased approach, which leads to a dense estimated coefficient vector. Moreover, the \emph{support recovery} problem in a distributed setting still largely remains open. 
		
		\item The third challenge comes from heavy-tailed noise, which is prevalent in practice (see, e.g., \cite{hsu2016loss,fan2017estimation,chen2018robust,sun2018adaptive,zhou2018new}). When the finite variance assumption for the noise does not exist, most existing theories based on least squares or Huber loss in robust statistics will no longer be applicable. 
	\end{enumerate}
}

The main purpose of the paper is to provide a new estimation approach for high-dimensional linear regression in a distributed environment and establish the theoretical results on both estimation and support recovery. More specifically, we consider the following linear model, 
\begin{equation}\label{eq:model}
	Y= \X^{\rm T} \be^* +e,
\end{equation}
where $\X=(1, X_{1}, \ldots, X_{p})^{\rm T}$ is a $(p+1)$-dimensional vector,  $\be^*=(\beta^{*}_{0},\beta_1^*,\ldots, \beta^*_{p})^{\rm T}$ is the true regression coefficient, with $\beta^*_0$ being the intercept, and $e$ is the noise. 
We only assume that $e$ is independent of the covariate vector $(X_{1}, \ldots, X_{p})^{\rm T}$ and the density function of  $e$ exists.  { It is worthwhile noting that the independence assumption has been adopted in estimating robust linear models when using a quantile loss function (see, e.g., \cite{zou2008composite,fan2014adaptive}).  In Remark \ref{rmk:independent}, we will briefly comment on how to extend our method to the case when the noise is not independent with covariates.}  Furthermore, we allow the dimension $p$ to be much larger than the sample size $n$ (e.g., $p=o(n^\nu)$ for some $\nu>0$). We assume that $\be^*$ is a sparse vector with $s$ non-zero elements. 


In this paper,  we allow a very heavy-tailed noise $e$, whose variance can be infinite (e.g., Cauchy distribution). For such a heavy-tailed noise, the squared-loss based Lasso approach is no longer applicable. 
To address this challenge, we can assume without loss of generality that $\pr(e\leq 0)=\tau$ for a specified  quantile level $\tau\in (0,1)$ (otherwise, we can shift the first component to be $\beta^*_0-q_{\tau}$ so that this assumption holds, where $q_{\tau}$ is the $\tau$-th quantile of $e$). Then, it is easy to see that
\begin{equation*}
	\be^{*}=\argmin_{\be\in \R^{p+1}}\mathbb{E}\rho_{\tau}(Y-\X^{\rm T}\be),
\end{equation*}
where $\rho_\tau(x)=x(\tau-\ind{x\leq 0})$ (see, e.g., \cite{koenker2005quantile}) is known as the quantile regression (QR) loss function.
Given $n$ \emph{i.i.d.} samples  $(\X_{i},Y_{i})$ for  $1\leq i\leq n$, the high-dimensional QR estimator takes the following form,
\begin{eqnarray}\label{eq:QR}
	\hbe=\argmin\limits_{\be\in\R^{p+1}}\frac{1}{n}\sum_{i=1}^{n}\rho_{\tau}(Y_{i}-\X^{\rm T}_{i}\be)+\lambda_{n}|\be|_{1},
\end{eqnarray}
where $|\be|_{1}$ the $\ell_1$-regularization of $\be$, and $\lambda_n$ is the regularization parameter. 


{ It is worthwhile noting that in robust statistical literature, the MOM (median of means) has been adopted to corrupted data in high-dimensional settings \citep{hsu2014heavy,lugosi2016risk,lecue2017robust,lugosi2017regularization,lecue2018learning}. However, the MOM is a multi-stage method that requires data splitting. Moreover, when true regression coefficients are sparse, support recovery guarantee is not available in existing MOM literature. Moreover, the quantile loss has been a useful approach to deal with heavy-tailed noise, see, e.g., \cite{fan2014adaptive} for single quantile level and \cite{zou2008composite} for multiple quantile levels. However, the existing literature does not address the challenging issue on efficient distributed implementation, which is the main focus of this paper.
}

Although the adoption of QR loss provides robustness to heavy-tailed noises, it also poses new challenges due to limited computation power and memory to store data especially when the sample size and dimension are both large. Therefore, distributed estimation procedure becomes increasingly important. The main purpose of the paper is to develop a new estimation approach for high-dimensional QR and establish the theoretical results on both \emph{estimation} and \emph{support recovery}. In fact, as  we will survey in the next paragraph, the support recovery problem in a high-dimensional distributed setting still largely remains as an open problem.

In a distributed setting, let us assume $n$ samples are stored in $L$ local machines.  In particular, we split the data index set $\{1,2,\ldots,n\}$ into $\mathcal{H}_{1},\ldots,\mathcal{H}_{L}$, where $\mathcal{H}_k$ denotes the set of indices on the $k$-th machine. For the ease of illustration, we assume that the data are evenly distributed ($n/L$ is an integer) and each local machine has the sample size $|\mathcal{H}_k|=m=n/L$  (see Remark \ref{rmk:batchsize} at the end of Section \ref{sec:theory} for the discussion on general data partitions). On each machine, one can construct a local estimator $\hbe_{k}$ by solving
\begin{eqnarray}\label{eq:local}
	\hbe_{k}=\argmin\limits_{\be\in\R^{p+1}}\frac{1}{m}\sum_{i\in\mathcal{H}_{k}}\rho_\tau(Y_{i}-\X^{\rm T}_{i}\be)+\lambda_{m}|\be|_{1}.
\end{eqnarray}
Then the final estimator of $\be^{*}$  can be naturally taken as the averaging estimator $\hbe_{avg}=\frac{1}{L}\sum_{k=1}^{L}\hbe_{k}$. This method is usually known as averaging divide-and-conquer approach (see, e.g., \cite{li2013statistical,zhao2016partially,fan2017distributed,shi2018massive,banerjee2019divide}). Although this method enjoys low communication cost (i.e., one-shot communication), the obtained estimator is usually no longer sparse. Instead of constructing the local estimator in its original form as in \eqref{eq:local}, there are a number of works that construct a de-biased estimator as the local estimator, and then take the average (see, e.g., \cite{zhao2014general,lee2017communication,battey2018distributed}). The de-biased estimator has been popular in high-dimensional statistics (see, e.g., \cite{belloni2013least,van2014asymptotically,zhang2014confidence,javanmard2014confidence} and references therein). \cite{zhao2014general} studied the averaging divide-and-conquer approach for high-dimensional QR based on de-biased estimator. There are several issues of the averaging de-biased estimator for high-dimensional distributed estimation. First, due to de-biasing, the local estimator on each machine is no longer sparse and thus the final averaging estimator cannot be used for support recovery. Second, the de-biased approach needs to estimate a $p\times p$ precision matrix $\S^{-1}$, which requires each machine  to solve $p$ optimization problems (see, e.g., Eq. (3.17) in \cite{zhao2014general}), while each optimization problem involves computing a variant of the CLIME estimator \citep{cai2011constrained}.  In other words, instead of solving one $p$-dimensional optimization as in \eqref{eq:local}, the de-biased estimator requires to solve $(p+1)$  optimization problems. This would be computationally very expensive especially when $p$ is large. Finally, the theoretical result of the averaging estimator requires that the number of machines $L$ is not too large.  For example, in high-dimensional QR, the theoretical development in \cite{zhao2014general} requires $L=o(n^{1/3}/(s \log^{5/3}(\max(p,n))))$, where $s$ is the number of non-zero elements in $\be^*$.
It would be an interesting theoretical question on how to remove such a constraint on $L$. In \cite{wang2017efficient}, \cite{jordan2018communication} and \cite{fan2019communication}, they develop iterative methods with multiple rounds of aggregations (instead of one-shot averaging), which relax the condition on the number of machines. However, their methods and theory require the loss function to be second-order differentiable and thus cannot be applied to the \emph{non-smooth} QR loss. 
We also note that \cite{chen2019} studied distributed QR problem in a low dimensional setting, where $\be^*$ is dense and $p$ grows much more slowly than $n$.

In this paper, we propose a new distributed estimator for estimating high-dimensional linear model with heavy-tailed noise. We first show that the estimation of regression coefficient $\be^*$ can be resorted to a penalized least squares optimization problem with a pseudo-response $\tY_{i}$ instead of $Y_{i}$. This leads to a pooled estimator, which essentially solves a Lasso problem with the squared loss based on $\tY_i$, without requiring any moment condition on the noise term. This pooled estimator is computationally much more efficient than solving high-dimensional QR (\ref{eq:QR})  in a single machine setting. 

Moreover,  our result establishes an interesting connection between the QR estimation and the ordinary linear regression. This connection translates a non-smooth objective function to a smooth one, which greatly facilitates computation in a distributed setting. Given the transformed  penalized least squares formulation, we further provide a communication efficient distributed algorithm, which runs iteratively and only communicates $(p+1)$-dimensional gradient information at each iteration (instead of the $(p+1) \times (p+1)$ matrix information).  Our distributed algorithm  is essentially an approximate Newton method (see, e.g., \cite{shamir2014communication}), which uses gradient information to approximate Hessian information and thus allows efficient communication.  In this paper, we provide a more intuitive derivation of the method simply based on the standard Lasso theory. 

Then we establish the theoretical properties of the proposed distributed  estimator.  We first establish the convergence rate in $\ell_2$-norm for one iteration (Theorem \ref{thm:betainf}).  Based on this result, we further characterize the convergence rate for multiple iterations. We show that, after a constant number of iterations, our method achieves a near-oracle rate of $\sqrt{s\log(\max(p,n))/n}$ (Theorem \ref{thm:betainft}).  This rate is identical to the rate of $\ell_1$-regularized QR in a single machine setting \citep{belloni2011l1}, and almost matches the oracle rate  $\sqrt{s/n}$ (upto a logarithmic factor) where the true support is known. Furthermore, we provide the support recovery result of the distributed estimator. We first show that the estimated support is a subset of the true support with high probability (Theorem \ref{thm:support} and \ref{thm:supportt}). Then we characterize the ``beta-min'' condition for the exact support recovery, and we show that the ``beta-min'' condition becomes weaker as the number of iterations increases (Theorem \ref{thm:supportt}). Again, after a constant number of iterations, the lower bound in our ``beta-min'' condition matches the ideal case with all the samples on a single machine.  To the best of our knowledge, this is the first support recovery result for high-dimensional robust distributed estimation.

\subsection{Paper Organization and Notations} 
The rest of our paper is organized as follows. In Section \ref{sec:method} we define the estimator and provide our algorithm. In Section \ref{sec:theory} we provide the theoretical guarantee for the convergence rate and support recovery for our estimator. Numerical experiments based on simulation are provided in Section \ref{sec:sim} to illustrate the performance of the estimator. Section \ref{sec:conclusion} gives some concluding remarks and future directions. The proofs of main theoretical results is relegated to the Appendix \ref{sec:proofsupp}.

For a vector $\bv=(v_{1},\dots,v_{n})^{\rm T}$, define $\abs{\bv}_{1}=\sum_{i=1}^{n}\abs{v_{i}}$ and $\abs{\bv}_{2}=\sqrt{\sum_{i=1}^{n}v_{i}^{2}}$. For a matrix $\bA=(a_{ij})\in\mbR^{p\times q}$, define $\abs{\bA}_{\infty}=\max_{1\le i\le p,1\le j \le q}\abs{a_{ij}}$, $\norm{\bA}_{L_{1}}=\max_{1\le j\le q}\sum_{i=1}^{p}\abs{a_{ij}}$, $\norm{\bA}_{\mathrm{op}}=\max_{\abs{v}_2=1} \abs{\bA v}_2$, and $\norm{\bA}_{\infty}=\max_{1\le i\le p}\sum_{j=1}^{q}\abs{a_{ij}}$. For two sequences $a_n$ and $b_n$ we say $a_n \asymp b_n$ if and only if both $a_n = O(b_n) $ and $b_n = O(a_n) $ hold. For a matrix $\bA$, define $\Lambda_{\text{max}}(\bA)$ and $\Lambda_{\text{min}}(\bA)$ to be the largest and smallest eigenvalues of $\bA$ respectively. For a matrix $\bA\in \R^{m\times n}$ and two subsets of indices $S=\{s_1,\ldots,s_r\}\subseteq\{1,\ldots,m\}$ and $T = \{t_1,\ldots,t_q\}\subseteq \{1,\ldots,n\}$, we use $\A_{S\times T}$ to denote the $r$ by $q$ submatrix given by $(a_{s_it_j})$.  We use $C,c,c_0,c_1,\ldots$ to denote constants whose value may change from place to place, which do not depend on $n$, $p$, $s$ and $m$. 

\section{Methodology}\label{sec:method}
In this section, we introduce the proposed method. We start with a robust estimator with Lasso (REL), which establishes the connection between quantile regression (QR) and ordinary linear regression in a single machine setting. This proposed estimator  will motivate the construction of our distributed estimator.

\subsection{Robust Estimator with Lasso (REL)}
Our method is inspired by the Newton-Raphson method. Consider the following stochastic optimization problem,
\begin{equation}\label{eq:sto_opt}
	\be^{*}=\argmin_{\be\in \R^{p+1}}\ep [G(\be;\X,Y)],
\end{equation}
where $G(\be;\X,Y)$ is the loss function. In $G(\be;\X,Y)$, $\X$ and $Y$ are random covariates and response and $\be$ is the coefficient vector of interest. To solve this stochastic optimization problem, the population version of the Newton-Raphson iteration takes the following form
\begin{eqnarray}\label{eq:onestep}
	\tilde{\be}_{1}=\be_{0}-\H(\be_{0})^{-1}\ep [g(\be_{0};\X,Y)],
\end{eqnarray}
where $\be_0$ is an initial solution, $g(\be;\X,Y)$ is the subgradient of the loss function $G(\be;\X,Y)$ with respect to $\be$, and $\H(\be):=\partial \ep [g(\be;\X,Y)]/\partial \be $ denotes the population Hessian matrix of $\mathbb{E}G(\be;\X,Y)$. In particular, let us consider the case where $G(\be;\X,Y)$ is the QR loss, i.e.,
\begin{equation}\label{eq:G}
	G(\be;\X,Y) = \rho_\tau (Y-\X^{\rm T}\be).
\end{equation}
Given $G(\be;\X,Y)$ in \eqref{eq:G}, the subgradient and Hessian matrix take the form of
$g(\be;\X,Y)=\X(\ind{Y-\X^{\rm T}\be\leq 0}-\tau)$ and $\H(\be)=\ep (\X\X^{\rm T}f(\X^{\rm T}(\be-\be^{*})))$, respectively. Here, $f(x)$
is the density function of the noise $e$. When the initial estimator $\be_{0}$ is close to the true parameter $\be^{*}$,  $\H(\be_0)$ will be close to $\H(\be^*) = \S f(0)$, where $\S = \ep \X\X^{\rm T}$ is the population covariance matrix of the covariates $\X$. Using $\H(\be^*)$ in \eqref{eq:onestep} motivates the following iteration,
\begin{align}\label{eq:newton}
	\be_{1}=\be_{0}-\H(\be^*)^{-1}\ep [g(\be_{0};\X,Y)]= \be_{0}-\S^{-1}f^{-1}(0)\ep [g(\be_{0};\X,Y)].
\end{align}
Further, under some regularity conditions, we have the following Taylor expansion of $\ep[g(\be_0;\X,Y)]$ at $\be^*$,
\begin{align*}
	\ep[g(\be_0;\X,Y)] =& \H(\be^*)(\be_0-\be^{*})+O(|\be_0-\be^{*}|_2^2)\\
	=&\S f(0)(\be_0-\be^{*})+O(|\be_0-\be^{*}|_2^2).
\end{align*}
Combine it with \eqref{eq:newton}, and it is easy to see that
\begin{align*}
	|\be_{1}-\be^{*}|_{2} =& |\be_0-\S^{-1}f^{-1}(0)\left(\S f(0)(\be_0-\be^{*})+O(|\be_0-\be^{*}|_2^2)\right)-\be^{*}|_2
	\\
	=&O(|\be_{0}-\be^{*}|^{2}_{2}).
\end{align*}
In summary, if we have a consistent estimator $\be_{0}$, we can refine it by the Newton-Raphson iteration in \eqref{eq:newton}.

Next, we show how to translate the Newton-Raphson iteration into a least squares optimization problem. First we rewrite the equation \eqref{eq:newton} to be
\begin{eqnarray*}
	\be_{1}&=&\S^{-1}\Big{(}\S\be_{0}-f^{-1}(0)\ep[g(\be_{0};\X,Y)]\Big{)}\cr
	&=&\S^{-1}\ep \Big{[}\X\Big{\{}\X^{\rm T}\be_{0}-f^{-1}(0)(\ind{Y\leq\X^{\rm T}\be_{0}}-\tau)\Big{\}}\Big{]}.
\end{eqnarray*}
Let us  define a new response variable $\tY$ as 
\begin{eqnarray*}
	\tY=\X^{\rm T}\be_{0}-f^{-1}(0)(\ind{Y\leq\X^{\rm T}\be_{0}}-\tau).
\end{eqnarray*}
Then $\be_{1} = \S^{-1}\ep(\X\tY)$ is the best linear regression coefficient of $\tY$ on $\X$, i.e., $\be_{1}=\argmin_{\be\in \R^{p+1}}\ep (\tY-\X^{\rm T}\be)^{2}$.
To further encourage the sparsity of the estimator, it is natural to consider the following $\ell_1$-regularized problem,
\begin{eqnarray}\label{eq:l1}
	\be_{1,\lambda}=\argmin_{\be\in \R^{p+1}} \frac{1}{2}\ep (\tY-\X^{\rm T}\be)^{2}+\lambda|\be|_{1},
\end{eqnarray}
where $\be_{1,\lambda}$ is sparse and can achieve a better convergence rate than $\be_{0}$. So far, we have shown that if we have a consistent estimator $\be_{0}$ of $\be^{*}$, then the estimation of the high-dimensional sparse $\be^{*}$ can be implemented by solving a penalized least squares optimization in \eqref{eq:l1} instead of the penalized QR optimization.
It is well known that the latter optimization problem is computationally expensive when $n$ is large since the QR loss is non-smooth.  More importantly, the transformation from QR loss to least squares will greatly facilitate the development of the distributed estimator. In particular, our distributed estimator is derived from the Lasso theory, which is based on the squared loss (see Section \ref{sec:dist}).

Now, we are ready to define the empirical version of $\be_{1,\lambda}$ in a single machine setting. Let $\hbe_{0}$ be an initial estimator of $\be^{*}$ and  $\widehat{f}(0)$ be an estimator of the density $f(0)$. We use $\hbe_{0}$ to denote the empirical version of the initial estimator, which is distinguished from the population version $\be_0$.  Given $n$ \emph{i.i.d.} samples $(\X_i,Y_i)$ from \eqref{eq:model}, for each $1\le i\le n$, we construct
\begin{eqnarray*}
	\tY_{i}=\X^{\rm T}_{i}\hbe_{0}-\widehat{f}^{-1}(0)(\ind{Y_{i}\leq\X^{\rm T}_{i}\hbe_{0}}-\tau).
\end{eqnarray*}
It is natural to estimate $\be^{*}$ by the empirical version of \eqref{eq:l1}:
\begin{eqnarray}\label{eq:pool}
	\hbe_{pool}=\argmin\limits_{\be\in\RR^{p+1}}\Big{\{} \frac{1}{2n}\sum_{i=1}^{n}(\widetilde{Y}_{i}-\X^{\rm T}_{i}\be)^{2}+\lambda_{n}|\be|_{1}\Big{\}}.
\end{eqnarray}
We note that in a single machine setting, computing this pooled estimator essentially solves a Lasso problem, which is computationally much more efficient than solving an $\ell_1$-regularized QR problem. 

Finally,  we choose $\widehat{f}(0)$ to be a kernel density estimator of $f(0)$:
\begin{eqnarray*}
	\widehat{f}(0)=\frac{1}{nh}\sum_{i=1}^{n}K\Big{(}\frac{Y_{i}-\X^{\rm T}_{i}\widehat{\be}_{0}}{h}\Big{)},
\end{eqnarray*}
where $K(x)$ is a kernel function which satisfies the condition (C3) (see Section 3) and $h\to 0$ is the bandwidth. The selection of bandwidth will be discussed in our theoretical results (see Section \ref{sec:theory}).

In the next section, we will introduce a distributed robust estimator with Lasso which can estimate $\be^{*}$ with a near-oracle convergence rate.

\subsection{Distributed Robust Estimator with Lasso}
\label{sec:dist}

Given our new proposed estimator $\hbe_{pool}$, we can use the approximate Newton method to solve the distributed estimation problem. To illustrate this technique from the Lasso theory, we first consider a general convex quadratic optimization as follows,
\begin{eqnarray}\label{eq:bhat}
	\widehat{\be}=\argmin\limits_{\be\in\RR^{p+1}}\frac{1}{2}\be^{\rm T}\A\be-\be^{\rm T}\b+\lambda_{n}|\be|_{1},
\end{eqnarray}
where $\A$ is a non-negative definite matrix and $\b$ is a vector in $\RR^{p+1}$.
From standard Lasso theory (see \cite{buhlmann2011statistics}), we have the following proposition.
\begin{proposition}\label{prop:0}
	Assume the following conditions hold
	\begin{eqnarray}\label{cd1}
		|\A\be^{*}-\b|_{\infty}\leq \lambda_{n}/2,
	\end{eqnarray}
	\begin{eqnarray}\label{cd2}
		\min_{\delta: |\delta|_{1}\leq c_{1}\sqrt{s}|\delta|_{2}}\frac{\delta^{\mathrm{T}}\A\delta}{|\delta|^{2}_{2}}\geq c_{2},\quad c_{1},c_{2}>0.
	\end{eqnarray}
	where $s$ is the sparsity of $\be^{*}$, i.e., $s=\sum_{j=0}^{p}\ind{\beta^{*}_{j}\neq 0}$. Then we have
	\begin{eqnarray}\label{prop4}
		|\hbe-\be^{*}|_{2}\leq c\sqrt{s}\lambda_{n}, 
	\end{eqnarray}
	for some constant $c>0$.
\end{proposition}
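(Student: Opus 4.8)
The plan is to follow the standard ``basic inequality'' argument from Lasso theory, adapted to the generic quadratic objective in \eqref{eq:bhat}. Write $\de = \hbe - \be^*$. Since $\hbe$ minimizes $F(\be) := \frac12 \be^{\rm T}\A\be - \be^{\rm T}\b + \lambda_n |\be|_1$, comparing $F(\hbe) \le F(\be^*)$ and expanding the quadratic part exactly (it has no higher-order terms) gives
\begin{equation*}
\tfrac12 \de^{\rm T}\A\de + (\A\be^* - \b)^{\rm T}\de + \lambda_n |\hbe|_1 \le \lambda_n |\be^*|_1 .
\end{equation*}
Applying H\"older's inequality, $|(\A\be^*-\b)^{\rm T}\de| \le |\A\be^*-\b|_\infty |\de|_1$, together with the dual-norm condition \eqref{cd1} that $|\A\be^*-\b|_\infty \le \lambda_n/2$, yields
\begin{equation*}
\tfrac12 \de^{\rm T}\A\de + \lambda_n |\hbe|_1 \le \tfrac{\lambda_n}{2} |\de|_1 + \lambda_n |\be^*|_1 .
\end{equation*}

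Next I would carry out the cone-projection step. Let $S = \{ j : \beta^*_j \neq 0\}$, so $|S| = s$. Using $|\hbe|_1 = |\hbe_S|_1 + |\de_{S^c}|_1$ (since $\be^*_{S^c}=0$) and the triangle inequality $|\be^*|_1 - |\hbe_S|_1 \le |\de_S|_1$, the displayed inequality becomes, after discarding the nonnegative term $\frac12\de^{\rm T}\A\de$,
\begin{equation*}
\lambda_n |\de_{S^c}|_1 \le \tfrac{\lambda_n}{2}\big(|\de_S|_1 + |\de_{S^c}|_1\big) + \lambda_n |\de_S|_1,
\end{equation*}
which simplifies to $|\de_{S^c}|_1 \le 3 |\de_S|_1$, hence $|\de|_1 \le 4|\de_S|_1 \le 4\sqrt{s}\,|\de_S|_2 \le 4\sqrt{s}\,|\de|_2$. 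Thus $\de$ lies in the cone on which the restricted eigenvalue condition \eqref{cd2} applies, provided the constant $c_1$ there is chosen to be at least $4$ (or one re-derives the cone constant consistently with $c_1$); this bookkeeping of matching constants is the one place requiring a little care.

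Finally I would combine the two ingredients. From the first displayed inequality, keeping the quadratic term, $\frac12 \de^{\rm T}\A\de \le \frac{\lambda_n}{2}|\de|_1 + \lambda_n(|\be^*|_1 - |\hbe|_1) \le \frac{3\lambda_n}{2}|\de_S|_1 \le \frac{3\lambda_n}{2}\sqrt{s}\,|\de|_2$. By \eqref{cd2}, $\de^{\rm T}\A\de \ge c_2 |\de|_2^2$, so $\frac{c_2}{2}|\de|_2^2 \le \frac{3\lambda_n}{2}\sqrt{s}\,|\de|_2$, giving $|\de|_2 \le (3/c_2)\sqrt{s}\,\lambda_n$, which is \eqref{prop4} with $c = 3/c_2$. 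The argument is entirely deterministic given the two stated conditions, so there is no real ``obstacle''; the only subtlety is ensuring the cone constant produced by the basic inequality (the factor $3$, coming from the choice $\lambda_n/2$ in \eqref{cd1}) is compatible with the $c_1\sqrt s$ appearing in the restricted-eigenvalue assumption \eqref{cd2} — so I would either state \eqref{cd2} with $c_1 = 3$ or track a generic $c_1$ and note $c$ depends on $c_1,c_2$.
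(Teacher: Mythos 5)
Your proof is correct and follows essentially the same standard Lasso basic-inequality argument as the paper: identical derivation of the cone condition $|\de_{S^c}|_1\le 3|\de_S|_1$ and hence $|\de|_1\le 4\sqrt{s}\,|\de|_2$, followed by the restricted-eigenvalue bound. The only (minor) difference is the last step: the paper invokes the first-order condition $|\A\hbe-\b|_\infty\le\lambda_n$ and bounds $\de^{\rm T}\A\de\le|\A\de|_\infty|\de|_1$, whereas you retain the quadratic term directly from the basic inequality — both are valid and your constant-matching remark about $c_1$ in \eqref{cd2} applies equally to the paper's proof.
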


Note that the condition \eqref{cd2} is known as the compatibility condition, which is used to provide the $\ell_2$-consistency of the Lasso estimator. For the purpose of completeness, we include a proof of Proposition \ref{prop:0} in the Appendix \ref{sec:proofsupp}. As one can see from \eqref{cd1}, if we can choose a matrix $\A$ and a vector $\b$ such that $\lambda_{n}$ is as small as possible, we can obtain a fast convergence rate of $\hbe$. 

Now let us discuss how to use Proposition \ref{prop:0} to develop our distributed estimator. Suppose that $n$ samples are stored in $L=n/m$ machines and each local machine has $m$ samples. We first split  the data index set $\{1,2,\ldots,n\}$ into $\mathcal{H}_{1},\ldots,\mathcal{H}_{L}$ with  $|\mathcal{H}_{k}|=m$ and the $k$-th machine stores samples $\{(\X_{i},Y_i): \; i\in \mathcal{H}_{k}\}$. 
Let us define
\begin{eqnarray}\label{eq:local_hessian}
	\hS_{k}=\frac{1}{m}\sum_{i\in \mathcal{H}_{k}}\X_{i}\X^{\rm T}_{i},\quad\hS = \frac{1}{n}\sum_{i=1}^n \X_i\X_i^{\rm T} = \frac{1}{L} \sum_{k=1}^L \hS_k,
\end{eqnarray}
as the sample covariance matrix on the $k$-th machine and the sample covariance matrix of the entire dataset, respectively. It is worthwhile noting that our algorithm does not need to explicitly compute and communicate $\hS_{k}$ (for $k \neq 1$) (see Algorithm \ref{alg:1} for more details).

In Proposition \ref{prop:0}, we first choose $\A = \hS_1$ to be the sample covariance matrix computed on the first machine. Our goal is to construct a vector $\b$ such that $|\A\be^* - \b|_\infty $ can be as small as possible. Note that
\begin{align}\label{eq:Ab}
	\nonumber\A\be^*-\b =& \hS_1\be^*-\b\\
	=& \hS \be^* +(\hS_1-\hS)\be^* - \b.
\end{align}
It can be proved that $\hS\be^*$ is close to $\z_n := \frac{1}{n}\sum_{i=1}^n \X_i\tY_i$ (see Proposition \ref{prop:Bnbeta0} in the Appendix \ref{sec:proofsupp}). We note that $\z_n$ can be computed effectively in a distributed setting since 
$$\z_n = \frac{1}{L} \sum_{k=1}^L \z_{nk},\quad\z_{nk}=\frac{1}{m}\sum_{i\in \mathcal{H}_{k}}\X_{i}\tY_{i},$$
where 
$\z_{nk}$ can be computed on the $k$-th local machine. Therefore we can rewrite $\eqref{eq:Ab}$ as
\begin{equation*}
	\begin{aligned}
		|\A\be^*-\b|_{\infty} =& |\hS\be^*-\z_n+\z_n+(\hS_1-\hS)\be^*-\b|_{\infty}\\
		\leq & |\hS\be^*-\z_n|_{\infty}+|\z_n+(\hS_1-\hS)\be^*-\b|_{\infty}.
	\end{aligned}
\end{equation*}
Since $\be^*$ is unknown, in order to make the second term as small as possible, it is natural to set $$\b = \z_n +(\hS_1-\hS)\hbe_0.$$ For $\A = \hS_1$ and $\b = \z_n +(\hS_1-\hS)\hbe_0$, we can prove that (see Eq. \eqref{easytoshow} in the proof of Theorem \ref{thm:betainf} and \ref{thm:betainft})
\begin{eqnarray*}
	|\hS_{1}\be^{*}-\b|_{\infty}\leq \lambda_{n}/2,
\end{eqnarray*}
for some specified $\lambda_{n}$ (see Theorem \ref{thm:betainf}). With $\A$ and $\b$ in place, the equation \eqref{eq:bhat} leads to the following $\ell_1$-regularized quadratic programming,
\begin{align}\label{eq:beta_dist}
	\hbe^{(1)}=\argmin\limits_{\be\in\RR^{p+1}}\frac{1}{2m}\sum_{i\in \mathcal{H}_1}(\X^{\rm T}_{i}\be)^{2}-\be^{\rm T}\Big{\{}\z_{n}+(\hS_{1}-\hS)\hbe_{0}\Big{\}}+\lambda_{n}|\be|_{1}.
\end{align}
Note that when $m=n$, we have $\hbe^{(1)}=\hbe_{pool}$. In other words, when the data is pooled on a single machine, the proposed distributed estimator automatically reduces to $\hbe_{pool}$ in \eqref{eq:pool}. We also note that $\hS \hbe_0$ in the vector $\b$ can be computed effectively in a distributed manner. In particular, each local machine computes and communicates a $(p+1)$-dimensional vector $\hS_k \hbe_{0}=\frac{1}{m}\sum_{i\in \mathcal{H}_{k}}\X_{i}(\X^{\rm T}_{i}\hbe_{0})$ to the first machine. Then the first machine computes $\hS \hbe_0$ by
\[
\hS \hbe_0 =\frac{1}{L} \sum_{k=1}^L \hS_k \hbe_{0}.
\]
Our algorithm only communicates $\z_{nk} = \frac{1}{m}\sum_{i\in \mathcal{H}_k} \X_i\tY_i$ and $\hS_k\hbe_0$ to the first machine at each iteration. Therefore, the per-iteration communication complexity is only $O(p)$ and there is no need to communicate the $(p+1)\times (p+1)$ sample covariance matrix $\hS_k$.

Given \eqref{eq:beta_dist} as the estimator from the first iteration, it is easy to construct an iterative estimator. In particular, let $\widehat{\be}^{(t-1)}$ be the distributed REL in the $(t-1)$-th iteration. Define
\[
\widehat{f}^{(t)}\left(0\right)=\frac{1}{nh_{t}}\sum_{i=1}^{n}K\left(\frac{Y_{i}-\bX_{i}^{\rm T}\bhbeta^{(t-1)}}{h_{t}}\right),
\]
as the density estimator in the $t$-th iteration where $h_{t}\to 0$ is the bandwidth for the $t$-th iteration. The bandwidth $h_t$ shrinks as $t$ grows, whose rate will be specified in Theorem \ref{thm:betainft}.
Let us define 
\begin{equation}\label{eq:ytilde}
	\tY_{i}^{(t)}=\bX_{i}^{\rm T}\bhbeta^{(t-1)}-(\widehat{f}^{(t)}\left(0\right))^{-1}\left(\Ind{Y_{i}\le \bX_{i}^{\rm T}\bhbeta^{(t-1)}}-\tau\right),
\end{equation}
and
\[
\z_{n}^{(t)}=\frac{1}{n}\sum_{i=1}^{n}\bX_{i}\tY_{i}^{(t)}. 
\]
As in \eqref{eq:beta_dist}, our distributed estimator $\bhbeta^{(t)}$ is the solution of the following $\ell_1$-regularized quadratic programming problem:
\begin{align}\label{eq:betat}
	\hbe^{(t)}=\argmin_{\be\in\mbR^{p+1}}\frac{1}{2m}\sum_{i\in \mathcal{H}_1}(\X^{\rm T}_{i}\be)^{2}-\be^{\rm T}\left\{\z_{n}^{(t)}+\left(\bhSigma_{1}-\bhSigma\right)\hbe^{(t-1)}\right\}+\lambda_{n,t}\Abs{\be}_{1}.
\end{align}
It is worthwhile noting that the convex optimization problem \eqref{eq:betat} has been extensively studied in the optimization literature and several efficient optimization methods have been developed, e.g., FISTA \citep{beck2009fast}, active set method \citep{solntsev2015algorithm}, and PSSgb (Projected Scaled Subgradient,  Gafni-Bertsekas variant, \citep{schmidt2010graphical}). In our experiments, we adopt the PSSgb optimization method for solving \eqref{eq:betat}. We present the entire distributed estimation procedure in Algorithm \ref{alg:1}.

\begin{algorithm}[!t]
	\caption{{\small Distributed high-dimensional QR estimator}}
	\label{alg:1}
	\hspace*{\algorithmicindent} \hspace{-0.72cm}  {\textbf{Input:} Data on local machines $\{\X_i,Y_i:\;i\in \mathcal{H}_k\}$ for $k=1,\ldots, L$, the number of iterations $t$, quantile level $\tau$, kernel function $K$, a sequence of bandwidths $h_g$ for $g=1,\ldots, t$ and the regularization parameters $\lambda_0$, $\lambda_{n,g}$ for $g=1,\ldots,t$.}
	
	\begin{algorithmic}[1]
		\State Compute the initial estimator $\hbe^{(0)} = \hbe_0$ based on $\{\X_i,Y_i:\;i\in \mathcal{H}_1\}$:
		\begin{eqnarray}\label{ag0}
			\hbe_{0} = \argmin\limits_{\be\in\R^{p+1}}\frac{1}{m}\sum_{i\in\mathcal{H}_{1}}\rho_\tau(Y_{i}-\X^{\rm T}_{i}\be)+\lambda_{0}|\be|_{1}.
		\end{eqnarray}
		\For{$g=1,2 \ldots, t$}
		\State Transmit $\hbe^{(g-1)}$ to all local machines.
		\For{$k=1,\dots, L$}
		\State The $k$-th machine computes $	\widehat{f}^{(g,k)}\left(0\right):=\frac{1}{m}\sum_{i\in \mathcal{H}_k}K\left(\frac{Y_{i}-\bX_{i}^{\rm T}\bhbeta^{(g-1)}}{h_{g}}\right)$ and sends it back to the first machine.
		\EndFor
		\State The first machine computes $\widehat{f}^{(g)}\left(0\right)$ based on 
		\[
		\widehat{f}^{(g)}\left(0\right)=\frac{1}{L}\sum_{k=1}^L\widehat{f}^{(g,k)}\left(0\right).
		\]
		\State Transmit $\widehat{f}^{(g)}\left(0\right)$ to all local machines.
		\For{$k=1,\dots, L$}
		\State The $k$-th machine computes $\hS_k\hbe^{(g-1)}$ and $\z_{nk}=\frac{1}{m}\sum_{i\in \mathcal{H}_k}\X_i\tY_i^{(g)}$ based on \eqref{eq:ytilde} and sends them back to the first machine.
		\EndFor
		\State Compute the estimator $\widehat{\be}^{(g)}$ on the first machine based on \eqref{eq:betat}.
		\EndFor
		
	\end{algorithmic}
	\textbf{Output:}  The final estimator $\widehat{\be}^{(t)}$.
\end{algorithm}

For the choice of the initial estimator $\hbe_{0}$, we propose to solve the high-dimensional QR problem using the data on the first machine, i.e.,
\begin{equation}\label{eq:init}
	\begin{aligned}
		\hbe_{0} = \argmin\limits_{\be\in\R^{p+1}}\frac{1}{m}\sum_{i\in\mathcal{H}_{1}}\rho_\tau(Y_{i}-\X^{\rm T}_{i}\be)+\lambda_{0}|\be|_{1}.
	\end{aligned}
\end{equation}
Note that although this paper uses the \eqref{eq:init} as the initial estimator, one can adopt any estimator as $\hbe_0$ as long as it satisfies the condition (C6) (see Section \ref{sec:theory}).

We assume the quantile level $\tau$ is pre-specified in Algorithm \ref{alg:1}. Our paper mainly focuses on the algorithm for distributed estimation under a general $\tau$ and develop the related theoretical results.  Different choices of $\tau$ correspond to different loss functions we want to use and different parameters we are interested in.  The choice of  $\tau$ to fit the model is a separate topic which clearly depends on the practical problem  and the parameters we are interested in. For example, without the covariate $\bm{X}$ (for briefness),  $\beta^{*}_{0}$ is the $\tau$-quantile of $Y$ and the choice of $\tau$ depends on what  quantile of $Y$ we are interested in. In extreme climate studies, people would like to choose $\tau$ as some large values ($0.9$ and $0.99$) or small values ($0.1$ and $0.01$) to evaluate the extreme climate performance. In economic domain, to learn the problem associated with median salary, we can simply set $\tau=0.5$.


\section{Theoretical Results}\label{sec:theory}
In this section we provide the theoretical results for our distributed method. We define
\begin{eqnarray*}
	S=\{0\leq i\leq p: \beta^{*}_{i}\neq 0\},
\end{eqnarray*}
as the support of $\be^*$ and $s=|S|$.
We assume the following regular conditions.

\vspace{3mm}

(C1) The density function of the noise $f(\cdot)$ is bounded and Lipschitz continuous (i.e., $\abs{f(x)-f(y)}\le C_L\abs{x-y}$ for any $x,y\in\mbR$ and some constant $C_L>0$). Moreover, we assume $f(0)>c>0$ for some constant $c$.

(C2) Suppose that $\bSigma=\ep \X\X^{\mathrm{T}}$ satisfies
\begin{equation}\label{eqn:irres}
	\Norm{\bSigma_{S^{c}\times S}\bSigma_{S\times S}^{-1}}_{\infty}\le 1-\alpha,
\end{equation}
for some $0<\alpha<1$. Also assume that $c_{0}^{-1}\le\Lambda_{\text{min}}(\bSigma)\le\Lambda_{\text{max}}(\bSigma)\le c_{0}$ for some constant $c_{0}>0$. 

(C3) Assume that the kernel function $K(\cdot)$ is integrable with $\int_{-\infty}^\infty K(u)\mathrm{d}u = 1$. Moreover, assume that $K(\cdot)$  satisfies $K(u)=0$ if $|u|\ge 1$. Further, assume $K(\cdot)$ is differentiable and its derivative $K'(\cdot)$ is bounded. 

(C4) We assume that the covariate $\bX$ satisfies the sub-Gaussian condition for some $t>0$ and $C>0$, $$\sup_{\abs{\btheta}_{2}=1}\mbE\exp(t(\btheta^{\rm T}\bX)^2)\le C.$$

(C5) The dimension $p$ satisfies $p=O(n^{\nu})$ for some $\nu>0$. The local sample size $m$ on each machine satisfies $m\geq n^{c}$ for some $0<c<1$, and the sparsity level $s$ satisfies $s=O(m^{r})$ for some $0<r<1/3$.

(C6) The initial estimator $\bhbeta_{0}$ satisfies $\abs{\bhbeta_{0}-\bbeta^{*}}_{2}=O_{\pr}(\sqrt{s(\log n)/m})$. Furthermore, assume that $\pr(\text{supp}(\bhbeta_{0})\subseteq S)\rightarrow 1$.\vspace{2mm}

Condition (C1) is a regular condition on the smoothness of the density function $f(\cdot)$. { Condition (C2) is the standard irrepresentable condition, which is commonly adopted to establish support recovery in high-dimensional statistics literature (see, e.g., \cite{zhao2006model,wainwright2009sharp,buhlmann2011statistics,tibshirani2015statistical})}. 
Condition (C3) is a standard condition on the kernel function $K(\cdot) $ (see an example of $K(\cdot)$ in Section \ref{sec:sim}).  
Condition (C4) is a regular condition on the distribution of $\X$ while Condition (C5) is on dimension $p$, local sample size $m$ and sparsity level $s$.  
The conditions $m \geq n^c$ for some $0 < c <1$ and $s=O(m^r)$ make sure that our algorithm achieves the near-oracle convergence rate only using a finite number of iterations (see Eq.~\eqref{eq:t} below).  Condition (C6) is a condition on the convergence rate and support recovery of the initial estimator. Note that in Algorithm \ref{alg:1}, the initial estimator $\hbe_0$ is proposed as the solution to the high-dimensional QR problem using data on the first machine. 
It can be shown that $\hbe_0$ in \eqref{eq:init} fulfills condition (C6) under conditions (C1), (C2), (C4), (C5) and some regularity conditions \citep{fan2014adaptive}. In addition, we also show that the condition (C6) is satisfied for the proposed estimator for the $t$-th iteration $\widehat{\beta}^{(t)}$, which serves as the initial estimator for the $(t+1)$-th iteration, in Theorems \ref{thm:betainf}--\ref{thm:supportt}. We also note that by $p=O(n^\nu)$ in (C5), we have that $\log(\max(n,p))=C_1 \log(n)$ for some constant $C_1>0$. Therefore, we will use $\log(n)$ in our convergence rates (instead of $\log(\max(n,p))$) for notational simplicity.

Let $\{a_{n}\}$ be the convergence rate of the initial estimator, i.e., $\abs{\bhbeta_{0}-\bbeta^{*}}_{2}=O_{\pr}(a_{n})$. By condition (C6) we can assume that $a_{n}=\sqrt{s(\log n)/m}$. We first provide the convergence rate for $\hbe^{(1)}$ after one iteration.

\begin{theorem}\label{thm:betainf}
	Let $\abs{\bhbeta_{0}-\bbeta^{*}}_{2}=O_{\pr}(a_{n})$ and choose the bandwidth $h\asymp a_{n}$, take 
	$$\lambda_n=C_{0}\left(\sqrt{\frac{\log n}{n}}+a_{n}\sqrt{\frac{s\log n}{m}}\right),$$ with $C_{0}$ being a sufficiently large constant. Under (C1)-(C6), we have
	\begin{equation}\label{eqn:betainf}
		\Abs{\bhbeta^{(1)}-\bbeta^{*}}_{2}=O_{\pr}\left(\sqrt{\frac{s\log n}{n}}+a_{n}\sqrt{\frac{s^{2}\log n}{m}}\right).
	\end{equation}
\end{theorem}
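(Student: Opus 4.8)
The plan is to apply Proposition~\ref{prop:0} with $\A=\hS_{1}$ and $\b=\z_{n}+(\hS_{1}-\hS)\hbe_{0}$, so that $\hbe^{(1)}$ in \eqref{eq:beta_dist} coincides with the generic estimator \eqref{eq:bhat}. The conclusion \eqref{prop4} gives $|\hbe^{(1)}-\be^{*}|_{2}\le c\sqrt{s}\,\lambda_{n}$, and plugging in the stated $\lambda_{n}\asymp\sqrt{(\log n)/n}+a_{n}\sqrt{s(\log n)/m}$ yields exactly the rate $\sqrt{s(\log n)/n}+a_{n}\sqrt{s^{2}(\log n)/m}$ claimed in \eqref{eqn:betainf}. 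So the entire content of the proof is verifying the two hypotheses of Proposition~\ref{prop:0}: the deviation bound \eqref{cd1}, i.e.\ $|\hS_{1}\be^{*}-\b|_{\infty}\le\lambda_{n}/2$, and the restricted eigenvalue / compatibility condition \eqref{cd2} for $\A=\hS_{1}$.

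For the compatibility condition \eqref{cd2}, I would argue that $\hS_{1}=\frac{1}{m}\sum_{i\in\mathcal H_{1}}\X_{i}\X_{i}^{\rm T}$ concentrates around $\S$: under the sub-Gaussian condition (C4), standard matrix concentration (e.g.\ restricted to the cone $\{\delta:|\delta|_{1}\le c_{1}\sqrt{s}|\delta|_{2}\}$ via a covering/chaining argument) gives $\sup_{\delta\ne 0}|\delta^{\rm T}(\hS_{1}-\S)\delta|/|\delta|_{2}^{2}=O_{\pr}(\sqrt{s(\log p)/m})$ on that cone, which is $o(1)$ because $s=O(m^{r})$ with $r<1/3$ and $m\ge n^{c}$ (so $s\log p/m\to 0$) by (C5). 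Combined with $\Lambda_{\min}(\S)\ge c_{0}^{-1}>0$ from (C2), this gives $\delta^{\rm T}\hS_{1}\delta/|\delta|_{2}^{2}\ge c_{2}>0$ on the cone with probability tending to one.

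The deviation bound \eqref{cd1} is the main obstacle and the heart of the argument. Using the decomposition \eqref{eq:Ab}, write
\begin{equation*}
\hS_{1}\be^{*}-\b=\big(\hS\be^{*}-\z_{n}\big)+\big(\hS_{1}-\hS\big)\big(\be^{*}-\hbe_{0}\big),
\end{equation*}
so it suffices to control each term in $|\cdot|_{\infty}$ by $\lambda_{n}/4$. For the second term I would bound $|(\hS_{1}-\hS)(\be^{*}-\hbe_{0})|_{\infty}$ using that $\hbe_{0}-\be^{*}$ is supported on $S$ (up to high probability, by (C6)) with $\ell_{2}$-norm $O_{\pr}(a_{n})$, so it reduces to controlling $\max_{j}|(\hS_{1}-\hS)_{j,S}|_{2}$; an entrywise concentration of $\hS_{1}-\hS$ of order $\sqrt{(\log p)/m}$ (again via (C4)) and $|S|=s$ give a bound of order $a_{n}\sqrt{s(\log n)/m}$, matching the second piece of $\lambda_{n}$. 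The first term $|\hS\be^{*}-\z_{n}|_{\infty}=|\frac{1}{n}\sum_{i}\X_{i}(\X_{i}^{\rm T}\be^{*}-\tY_{i}^{(1)})|_{\infty}$ is exactly what Proposition~\ref{prop:Bnbeta0} (cited in the excerpt) is designed to handle: writing $\X_{i}^{\rm T}\be^{*}-\tY_{i}^{(1)}=\X_{i}^{\rm T}(\be^{*}-\hbe_{0})+\widehat f^{-1}(0)(\ind{Y_{i}\le\X_{i}^{\rm T}\hbe_{0}}-\tau)$, one centers the indicator term, uses a Taylor/Bahadur-type expansion around $\be^{*}$ together with the Lipschitz density (C1) and the kernel estimator error $\widehat f(0)-f(0)=O_{\pr}(h+\sqrt{(\log n)/(nh)}+a_{n})$ under (C1), (C3) with $h\asymp a_{n}$, and bounds the resulting empirical process over the $p$ coordinates by a maximal/Bernstein inequality. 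The delicate part is bookkeeping the cross terms so that the stochastic part contributes $\sqrt{(\log n)/n}$ and the bias from plugging in $\hbe_{0}$ and $\widehat f(0)$ contributes at most $a_{n}\sqrt{s(\log n)/m}$ (note $a_{n}^{2}=s(\log n)/m$, so terms like $a_{n}^{2}$ are absorbed); I would lean on the statement of Proposition~\ref{prop:Bnbeta0} for the precise form rather than re-deriving it. Once both \eqref{cd1} and \eqref{cd2} hold with probability tending to one, Proposition~\ref{prop:0} closes the argument.
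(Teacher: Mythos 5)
Your proposal is correct, and its quantitative content coincides with the paper's: the deviation bound you verify is exactly the paper's Eq.~\eqref{easytoshow} (same decomposition $\hS_{1}\be^{*}-\b=(\hS\be^{*}-\z_{n})+(\hS_{1}-\hS)(\be^{*}-\hbe_{0})$, with the first term handled by Proposition~\ref{prop:Bnbeta0} and the second by entrywise concentration of $\hS_1-\hS$ combined with $\mathrm{supp}(\hbe_0)\subseteq S$ from (C6), noting $a_n^2\le a_n\sqrt{s\log n/m}$), and your restricted-eigenvalue check for $\hS_1$ matches the paper's final step, where the simpler entrywise bound $|\hS_1-\S|_\infty=O_{\pr}(\sqrt{\log n/m})$ together with $s=o((m/\log n)^{1/2})$ already suffices on the cone, so no chaining is needed. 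The one genuine difference is the top-level mechanism: you invoke Proposition~\ref{prop:0} directly, obtaining the cone property $|\hbe^{(1)}-\be^*|_1\lesssim\sqrt{s}\,|\hbe^{(1)}-\be^*|_2$ from the basic inequality under \eqref{cd1}, whereas the paper runs a primal--dual witness construction: it defines the $S$-restricted solution $\btbeta$, uses Lemma~\ref{lem:strictless} (which relies on the irrepresentable condition (C2)) to show the dual variables on $S^c$ are strictly below one, and concludes $\hbe^{(1)}=\btbeta$ with probability tending to one before rerunning the compatibility argument. Your route is more economical and does not need the irrepresentable condition for the $\ell_2$ rate itself; the paper's heavier route buys the support containment $\widehat S^{(1)}\subseteq S$ as a by-product, which is what allows $\hbe^{(1)}$ to play the role of the initial estimator (satisfying the support part of (C6), required in the proof of Proposition~\ref{prop:Bnbeta0}) in the recursion behind Theorem~\ref{thm:betainft}, and which is the substance of Theorems~\ref{thm:support} and \ref{thm:supportt}. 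So for Theorem~\ref{thm:betainf} in isolation your argument is complete; just be aware it would not by itself support the iteration or the support-recovery results without adding the witness step.
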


With the choice of the bandwidth $h$ shrinking at the same rate as $a_{n}$, conclusion \eqref{eqn:betainf} shows that one iteration enables a refinement of the estimator with its rate improved from $a_{n}$ to $\max\{\sqrt{s(\log n)/n},a_{n}\sqrt{s^{2}(\log n)/m}\}$ where $ \sqrt{s^{2}(\log n)/m} = o(1)$ by condition (C5). By recursive applications of Theorem \ref{thm:betainf}, we provide the convergence rate for the multi-iteration estimator $\hbe^{(t)}$. The next theorem shows that an iterative refinement of the initial estimator will improve the estimation accuracy and achieve a near-oracle rate after a constant number of iterations.

In particular, let us define
\begin{eqnarray}\label{eq:a}
	a_{n,g}=\sqrt{\frac{s \log n}{n}}+s^{(2g+1)/2}\left(\frac{\log n}{m}\right)^{(g+1)/2},\quad 0\leq g\leq t.
\end{eqnarray}
From Theorem \ref{thm:betainft} below, we can see that $a_{n,g}$ is the convergence rate of the estimator $\hbe^{(g)}$ after $g$ iterations.

\begin{theorem}\label{thm:betainft}
	Assume that the initial estimator $\bhbeta_{0}$ satisfies $\abs{\bhbeta_{0}-\bbeta^{*}}_{2}=O_{\pr}(\sqrt{s(\log n)/m})$. Let $h_{g}\asymp a_{n,g-1}$ for $1\le g\le t$, and take
	\begin{equation}\label{eq:lambda}
		\begin{aligned}
			\lambda_{n,g}=C_{0}\left(\sqrt{\frac{\log n}{n}}+a_{n,g-1}\sqrt{\frac{s\log n}{m}}\right),
		\end{aligned}
	\end{equation} with $C_{0}$ being a sufficiently large constant. Under (C1)-(C6), we have
	\begin{equation}\label{eq:bt}
		\Abs{\bhbeta^{(t)}-\bbeta^{*}}_{2}=O_{\pr}\left(\sqrt{\frac{s \log n}{n}}+s^{(2t+1)/2}\left(\frac{\log n}{m}\right)^{(t+1)/2}\right).
	\end{equation}	
\end{theorem}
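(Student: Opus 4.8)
The plan is to prove Theorem~\ref{thm:betainft} by induction on the iteration index $g$, using Theorem~\ref{thm:betainf} as the one-step engine and checking that the rate sequence $a_{n,g}$ defined in \eqref{eq:a} is a fixed point of the recursion that Theorem~\ref{thm:betainf} induces. Concretely, I would first verify the base case: by assumption (C6), $|\bhbeta^{(0)}-\bbeta^*|_2 = O_\pr(\sqrt{s(\log n)/m}) = O_\pr(a_{n,0})$, since $a_{n,0} = \sqrt{s\log n/n} + s^{1/2}(\log n/m)^{1/2} \asymp \sqrt{s\log n/m}$ (the first term is dominated by the second because $m \le n$). I also need the support-recovery part of (C6) for $\bhbeta^{(0)}$, which holds by assumption.

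For the inductive step, suppose $|\bhbeta^{(g-1)}-\bbeta^*|_2 = O_\pr(a_{n,g-1})$ and $\pr(\mathrm{supp}(\bhbeta^{(g-1)})\subseteq S)\to 1$, i.e.\ $\bhbeta^{(g-1)}$ satisfies (C6) with $a_n$ replaced by $a_{n,g-1}$. Then I would apply Theorem~\ref{thm:betainf} with initial estimator $\bhbeta^{(g-1)}$, bandwidth $h_g \asymp a_{n,g-1}$, and regularization parameter $\lambda_{n,g} = C_0(\sqrt{\log n/n} + a_{n,g-1}\sqrt{s\log n/m})$ exactly as in \eqref{eq:lambda}. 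Theorem~\ref{thm:betainf} then yields
\begin{equation*}
	|\bhbeta^{(g)}-\bbeta^*|_2 = O_\pr\!\left(\sqrt{\frac{s\log n}{n}} + a_{n,g-1}\sqrt{\frac{s^2\log n}{m}}\right).
\end{equation*}
The remaining task is the algebraic identity: plugging in $a_{n,g-1} = \sqrt{s\log n/n} + s^{(2g-1)/2}(\log n/m)^{g/2}$ and expanding, the cross term $\sqrt{s\log n/n}\cdot\sqrt{s^2\log n/m}$ is of smaller order than $\sqrt{s\log n/n}$ (because $\sqrt{s^2\log n/m} = o(1)$ by (C5)), while $s^{(2g-1)/2}(\log n/m)^{g/2}\cdot\sqrt{s^2\log n/m} = s^{(2g+1)/2}(\log n/m)^{(g+1)/2}$, which is exactly the second term of $a_{n,g}$. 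Hence $|\bhbeta^{(g)}-\bbeta^*|_2 = O_\pr(a_{n,g})$, closing the induction; setting $g=t$ gives \eqref{eq:bt}.

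One subtlety I need to handle carefully is that Theorem~\ref{thm:betainf} is stated for a single application with a \emph{deterministic} rate $a_n$, whereas here each $a_{n,g-1}$ arises as the stochastic-order bound on the previous iterate. I would address this by conditioning: on the high-probability event where $|\bhbeta^{(g-1)}-\bbeta^*|_2 \le C a_{n,g-1}$ and $\mathrm{supp}(\bhbeta^{(g-1)})\subseteq S$, the hypotheses of Theorem~\ref{thm:betainf} are met deterministically with $a_n = Ca_{n,g-1}$, and since $t$ is a constant (not growing with $n$), a union bound over the finitely many iterations keeps the total failure probability $o(1)$. I also need to track that the constants $C_0$ in $\lambda_{n,g}$ and the implicit constants in the $O_\pr$ bounds do not blow up across iterations; because the number of iterations is constant, the constants compose multiplicatively a constant number of times and remain bounded. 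A further bookkeeping point is that the support-recovery claim $\pr(\mathrm{supp}(\bhbeta^{(g)})\subseteq S)\to 1$ needed to feed the next step is not part of the displayed conclusion of Theorem~\ref{thm:betainf}; I would invoke the companion support results (Theorems~\ref{thm:support}/\ref{thm:supportt}) — or the portion of the proof of Theorem~\ref{thm:betainf} establishing them — to propagate this through the induction.

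The main obstacle, and the part deserving the most care, is precisely this propagation of \emph{both} the $\ell_2$ rate and the sparsity/support-containment property simultaneously through the recursion while controlling constants and failure probabilities; the rate algebra in \eqref{eq:a} itself is routine once the recursion $a_{n,g} \asymp \sqrt{s\log n/n} + a_{n,g-1}\sqrt{s^2\log n/m}$ is established. One should also note that the near-oracle term $\sqrt{s\log n/n}$ is a floor: it is a fixed point of the map $a \mapsto \sqrt{s\log n/n} + a\sqrt{s^2\log n/m}$ up to constants (since $\sqrt{s^2\log n/m}<1$ eventually), which is why after $t = O(1)$ iterations — specifically once $s^{(2t+1)/2}(\log n/m)^{(t+1)/2} \lesssim \sqrt{s\log n/n}$, which happens for a constant $t$ under (C5) — the bound in \eqref{eq:bt} collapses to the near-oracle rate $\sqrt{s\log n/n}$.
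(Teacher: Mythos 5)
Your proposal is correct and follows essentially the same route as the paper: the paper also proves Theorem \ref{thm:betainft} by induction on the iteration index, invoking Theorem \ref{thm:betainf} with initial estimator $\bhbeta^{(k-1)}$ and rate $a_{n,k-1}$, and then performing exactly the rate algebra $a_{n,k-1}\sqrt{s^{2}\log n/m} \lesssim \sqrt{s\log n/n}+s^{(2k+1)/2}(\log n/m)^{(k+1)/2}$ that you describe. The subtleties you flag (propagating $\mathrm{supp}(\bhbeta^{(g)})\subseteq S$ via the primal-dual witness construction so that (C6) holds for the next step, and using that $t$ is a constant to control constants and failure probabilities) are precisely what the paper relies on, albeit implicitly.
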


It can be shown that when the iteration number $t$ is sufficiently large, i.e.,
\begin{equation}\label{eq:t}
	t\geq \frac{\log (n/m)}{\log (c_0m/(s^2\log n))},\quad \text{for some }c_0>0,
\end{equation}
the second term in \eqref{eq:bt} is dominated by the first term, and the convergence rate in \eqref{eq:bt} becomes $\abs{\bhbeta^{(t)}-\bbeta^{*}}_{2}=O_{\pr}(\sqrt{s(\log n)/n})$. We note that this rate matches the convergence rate of the $\ell_1$-regularized QR estimator in a single machine setup (see \cite{belloni2011l1}). Moreover, it  nearly matches the oracle convergence rate $\sqrt{s/n}$ (upto a logarithmic factor)  when the support of $\be^*$ is known. We also note that the conditions $m \geq n^c$ and $s = o(m^{1/3})$ in (C5) ensure that the right hand side of \eqref{eq:t} is bounded by a constant, which implies that a constant number of iterations would guarantee a near-oracle rate of $\hbe^{(t)}$.

The following theorems provide results on support recovery of the proposed estimators $\hbe^{(1)}$ and $\hbe^{(t)}$. Recall $S=\{j:\beta^{*}_{j}\neq0\}$ is the support of $\bbeta^{*}$. Let $\hbe^{(1)}=(\widehat{\beta}_{0}^{(1)},\widehat{\beta}_{1}^{(1)},\ldots,\widehat{\beta}_{p}^{(1)})^{\mathrm{T}}$ and
\[
\widehat{S}^{(1)}=\left\{j:\widehat{\beta}_{j}^{(1)}\neq0\right\}.
\]
\begin{theorem}\label{thm:support}
	Assume that the conditions in Theorem \ref{thm:betainf} hold.
	
	(i) We have $\widehat{S}^{(1)}\subseteq S$ with probability tending to one.
	
	(ii) In addition, suppose that for a sufficiently large constant $C>0$,
	\begin{equation}\label{eqn:sigcon}
		\underset{j\in S}{\min}\Abs{\beta^{*}_{j}}\ge C\|\S^{-1}_{S\times S}\|_{\infty}\left(\sqrt{\frac{\log n}{n}}+a_{n}\sqrt{\frac{s\log n}{m}}\right).
	\end{equation}
	Then we have $\widehat{S}^{(1)}= S$ with probability tending to one.
\end{theorem}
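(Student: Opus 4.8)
The plan is to analyze the KKT (first-order optimality) conditions of the quadratic program \eqref{eq:beta_dist} defining $\hbe^{(1)}$, using the standard primal-dual witness construction. To prove part (i), I would first restrict attention to the event $\mathcal{E}$ on which $\abs{\hbe^{(1)}-\be^*}_2 = O_\pr(\sqrt{s\log n/n} + a_n\sqrt{s^2\log n/m})$ (from Theorem \ref{thm:betainf}), on which $\operatorname{supp}(\hbe_0)\subseteq S$ (from (C6)), and on which the relevant concentration bounds hold, namely $\abs{\hS_1\be^*-\b}_\infty \le \lambda_n/2$ (Eq.~\eqref{easytoshow}) together with control of the cross terms $\norm{(\hS_1)_{S^c\times S}((\hS_1)_{S\times S})^{-1}}_\infty \le 1-\alpha/2$ and $\Lambda_{\min}((\hS_1)_{S\times S}) \ge c$, which follow from (C2), (C4), sub-Gaussian concentration of sample covariances, and $s = o(m^{1/3})$ in (C5). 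The witness construction then proceeds by solving the restricted program over vectors supported on $S$, verifying via the strict dual feasibility inequality (using the irrepresentable condition (C2) with the $\alpha/2$ slack absorbing the sampling error) that the resulting vector, extended by zeros off $S$, satisfies the KKT conditions of the full program; by strict convexity on the active block it is the unique minimizer, which forces $\widehat{S}^{(1)}\subseteq S$.

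For part (ii), I would build on the witness solution $\hbe^{(1)}_S$ from part (i), which is characterized by the restricted stationarity equation $(\hS_1)_{S\times S}\,\hbe^{(1)}_S = \b_S - \lambda_n \hat{z}_S$ with $\abs{\hat z_S}_\infty \le 1$. Subtracting $(\hS_1)_{S\times S}\be^*_S$ from both sides and inverting gives
\begin{equation*}
\hbe^{(1)}_S - \be^*_S = ((\hS_1)_{S\times S})^{-1}\bigl((\b - \hS_1\be^*)_S - \lambda_n \hat z_S\bigr),
\end{equation*}
so that $\abs{\hbe^{(1)}_S - \be^*_S}_\infty \le \norm{((\hS_1)_{S\times S})^{-1}}_\infty\bigl(\abs{(\hS_1\be^*-\b)_S}_\infty + \lambda_n\bigr)$. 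Using $\abs{\hS_1\be^*-\b}_\infty \le \lambda_n/2$ and replacing $\norm{((\hS_1)_{S\times S})^{-1}}_\infty$ by $\norm{\S^{-1}_{S\times S}}_\infty$ up to a constant factor (again via sample covariance concentration and $s=o(m^{1/3})$), this yields an $\ell_\infty$ bound of order $\norm{\S^{-1}_{S\times S}}_\infty\lambda_n$, i.e., of the order appearing on the right side of \eqref{eqn:sigcon}. Under the beta-min condition \eqref{eqn:sigcon} with a sufficiently large constant $C$, every coordinate $j\in S$ satisfies $\abs{\hbe^{(1)}_j} \ge \abs{\beta^*_j} - \abs{\hbe^{(1)}_j - \beta^*_j} > 0$, hence $S \subseteq \widehat{S}^{(1)}$; combined with part (i) this gives $\widehat{S}^{(1)} = S$ with probability tending to one.

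The main obstacle I anticipate is controlling the perturbation of the empirical irrepresentable and eigenvalue quantities relative to their population counterparts in (C2): one needs $\norm{(\hS_1)_{S^c\times S}((\hS_1)_{S\times S})^{-1}}_\infty$ to stay bounded by $1-\alpha/2$ uniformly, which requires high-probability bounds on $\abs{\hS_1 - \S}_\infty$ of order $\sqrt{\log n/m}$ (from (C4) sub-Gaussianity and a union bound over $O(p^2)$ entries, valid since $p = O(n^\nu)$) and then propagating these through the matrix inverse on the $s\times s$ block, where the $\ell_\infty\to\ell_\infty$ operator norm picks up a factor that must be absorbed using $s = o(m^{1/3})$. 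A secondary subtlety is that $\b = \z_n + (\hS_1-\hS)\hbe_0$ is itself random and correlated with $\hS_1$; the bound $\abs{\hS_1\be^*-\b}_\infty\le\lambda_n/2$ is asserted to be available from the proof of Theorem \ref{thm:betainf} (Eq.~\eqref{easytoshow}), so I would invoke it directly rather than re-deriving it, but care is needed to ensure the same high-probability event supports both the estimation bound and the support-recovery argument simultaneously.
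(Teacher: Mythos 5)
Your proposal is correct and follows essentially the paper's own route: part (i) is the primal--dual witness construction that the paper already carries out inside the proof of Theorem \ref{thm:betainf} (Lemma \ref{lem:strictless} gives strict dual feasibility on $S^{c}$, hence $\hbe^{(1)}=\btbeta$ and $\widehat{S}^{(1)}\subseteq S$), and part (ii) is the same $\ell_\infty$-bound-plus-beta-min argument, the only difference being that you invert the empirical block $\bhSigma_{1,S\times S}$ directly while the paper decomposes $\btbeta_{S}-\bbeta^{*}_{S}$ through the population inverse $\bSigma_{S\times S}^{-1}$ and bounds the perturbation terms separately; both give $\Abs{\btbeta_{S}-\bbeta^{*}_{S}}_{\infty}\le C\norm{\bSigma_{S\times S}^{-1}}_{\infty}\bigl(\sqrt{(\log n)/n}+a_{n}\sqrt{s(\log n)/m}\bigr)$, and your replacement of $\norm{\bhSigma_{1,S\times S}^{-1}}_{\infty}$ by a constant times $\norm{\bSigma_{S\times S}^{-1}}_{\infty}$ is legitimate because $\norm{\bSigma_{S\times S}^{-1}}_{\infty}\ge 1/(c_{0}\sqrt{s})$ while $\norm{\bhSigma_{1,S\times S}^{-1}-\bSigma_{S\times S}^{-1}}_{\infty}\le\sqrt{s}\,\norm{\bhSigma_{1,S\times S}^{-1}-\bSigma_{S\times S}^{-1}}_{\mathrm{op}}=O_{\pr}\bigl(\sqrt{s(s+\log n)/m}\bigr)=o_{\pr}(1/\sqrt{s})$ under (C5). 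One bookkeeping caution in part (i): strict dual feasibility cannot be closed with only $\abs{\bhSigma_{1}\bbeta^{*}-\b}_{\infty}\le\lambda_{n}/2$, since the resulting bound on $\abs{Z_{j}}$ for $j\in S^{c}$ is roughly $(1-\alpha/2)+2\lambda_{n}^{-1}\abs{\bhSigma_{1}\bbeta^{*}-\b}_{\infty}$, which then exceeds one; you need this gradient-type term to be a small multiple of $\alpha\lambda_{n}$, which is exactly what Eq.~\eqref{easytoshow} supplies (it is $O_{\pr}(1)\lambda_{n}/C_{0}$ with $C_{0}$ sufficiently large) and is how the paper's Lemma \ref{lem:strictless} closes the argument.
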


Based on Theorem \ref{thm:support}, we can further obtain the support recovery result for $\hbe^{(t)}$, which requires a weaker condition on $\underset{j\in S}{\min}\Abs{\beta^{*}_{j}}$.
Denote $\hbe^{(t)}=(\widehat{\beta}_{0}^{(t)},\widehat{\beta}_{1}^{(t)},\ldots,\widehat{\beta}_{p}^{(t)})^{\mathrm{T}}$ and
\[
\widehat{S}^{(t)}=\left\{j:\widehat{\beta}_{j}^{(t)}\neq0\right\}.
\]
\begin{theorem}\label{thm:supportt} Assume the conditions in Theorem \ref{thm:betainft} hold.
	
	(i) We have $\widehat{S}^{(t)}\subseteq S$ with probability tending to one.
	
	(ii) In addition, suppose that for a sufficiently large constant $C>0$,
	\begin{equation}\label{eqn:sigcont}
		\underset{j\in S}{\min}\Abs{\beta^{*}_{j}}\ge C\|\S^{-1}_{S\times S}\|_{\infty}\left(\sqrt{\frac{\log n}{n}}+s^{t}\left(\frac{\log n}{m}\right)^{(t+1)/2}\right).
	\end{equation}
	Then we have $\widehat{S}^{(t)}= S$ with probability tending to one.
\end{theorem}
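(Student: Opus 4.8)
The plan is to run the primal--dual witness (PDW) construction for the $\ell_1$-penalized quadratic program~\eqref{eq:betat} that defines $\hbe^{(t)}$, exactly as in the proof of Theorem~\ref{thm:support} (the case $t=1$), and to propagate the estimates through the iterations by induction on $t$: Theorem~\ref{thm:support} is the base case, and Theorem~\ref{thm:betainft} supplies the $\ell_2$-rate $\abs{\hbe^{(g)}-\be^*}_2=O_\pr(a_{n,g})$ at every stage. Fix $t$, write $\A=\hS_1$, $\b=\z_n^{(t)}+(\hS_1-\hS)\hbe^{(t-1)}$ and $\lambda=\lambda_{n,t}$, so that $\hbe^{(t)}$ is the minimizer in~\eqref{eq:bhat}, and decompose the ``effective noise'' as
\[
\W:=\b-\A\be^{*}=\bigl(\z_n^{(t)}-\hS\be^{*}\bigr)+(\hS_1-\hS)\bigl(\hbe^{(t-1)}-\be^{*}\bigr).
\]
The first summand is the statistical error of the pseudo-responses, controlled by the Proposition~\ref{prop:Bnbeta0}-type bound already used in the proofs of Theorems~\ref{thm:betainf}--\ref{thm:betainft}: with bandwidth $h_t\asymp a_{n,t-1}$, conditions (C1), (C3), (C4) give $\abs{\z_n^{(t)}-\hS\be^{*}}_\infty=O_\pr(\sqrt{\log n/n}+a_{n,t-1}\sqrt{s\log n/m})$. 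For the second summand the induction hypothesis provides $\mathrm{supp}(\hbe^{(t-1)})\subseteq S$ and $\abs{\hbe^{(t-1)}-\be^{*}}_2=O_\pr(a_{n,t-1})$, while (C4) and $m\ge n^{c}$ give $\abs{\hS_1-\hS}_\infty=O_\pr(\sqrt{\log n/m})$, so $\abs{(\hS_1-\hS)(\hbe^{(t-1)}-\be^{*})}_\infty\le\abs{\hS_1-\hS}_\infty\sqrt{s}\,\abs{\hbe^{(t-1)}-\be^{*}}_2=O_\pr(a_{n,t-1}\sqrt{s\log n/m})$. Hence $\abs{\W}_\infty=O_\pr(\lambda/C_0)$, so for $C_0$ large $\abs{\W}_\infty\le\lambda/2$ with probability tending to one; together with the compatibility condition (C2) this is the input to Proposition~\ref{prop:0} that re-derives the rate $a_{n,t}$ and closes the $\ell_2$ part of the induction.

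For part (i) I would build the oracle vector $\btbeta$ supported on $S$, defined by $\btbeta_{S^c}=0$ and $\A_{S\times S}\btbeta_S-\b_S+\lambda\,\sgn(\btbeta_S)=0$, and verify strict dual feasibility: substituting $\btbeta$ into the stationarity condition of~\eqref{eq:betat} on $S^c$ gives
\[
\widehat z_{S^c}=\A_{S^c\times S}\A_{S\times S}^{-1}\sgn(\btbeta_S)+\lambda^{-1}\bigl(\W_{S^c}-\A_{S^c\times S}\A_{S\times S}^{-1}\W_S\bigr).
\]
The first term is handled by transferring the irrepresentable condition~\eqref{eqn:irres} from $\bSigma$ to $\A=\hS_1$: since $\abs{\hS_1-\bSigma}_\infty=O_\pr(\sqrt{\log n/m})$, $\norm{\bSigma_{S\times S}^{-1}}_\infty=O(\sqrt{s})$ and $s=O(m^{r})$ with $r<1/3$ (so that $s^{3/2}\sqrt{\log n/m}=o(1)$), a Neumann-series argument gives $\norm{\A_{S\times S}^{-1}}_\infty=(1+o_\pr(1))\norm{\bSigma_{S\times S}^{-1}}_\infty$ and $\norm{\A_{S^c\times S}\A_{S\times S}^{-1}}_\infty\le1-\alpha/2$ with probability tending to one; the second term is $\le\alpha/4$ by the $\ell_\infty$-bound on $\W$. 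Therefore $\abs{\widehat z_{S^c}}_\infty<1$, and the standard PDW lemma forces $\hbe^{(t)}=\btbeta$, so $\widehat S^{(t)}\subseteq S$ with probability tending to one — which also provides $\mathrm{supp}(\hbe^{(t)})\subseteq S$ for the induction at level $t+1$.

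For part (ii), the $S$-block equation gives $\btbeta_S-\be^{*}_S=\A_{S\times S}^{-1}(\W_S-\lambda\,\sgn(\btbeta_S))$, hence $\abs{\btbeta_S-\be^{*}_S}_\infty\le\norm{\A_{S\times S}^{-1}}_\infty(\abs{\W_S}_\infty+\lambda)\le C\norm{\bSigma_{S\times S}^{-1}}_\infty\lambda$ with probability tending to one. Using~\eqref{eq:a}, \eqref{eq:lambda} and (C5) one checks $\lambda_{n,t}\asymp\sqrt{\log n/n}+s^{t}(\log n/m)^{(t+1)/2}$ — the cross term $s\log n/\sqrt{nm}$ inside $a_{n,t-1}\sqrt{s\log n/m}$ is dominated because $s^{2}\log n=o(m)$. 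Thus the ``beta-min'' condition~\eqref{eqn:sigcont} with $C$ large enough yields $\min_{j\in S}\abs{\beta^{*}_j}>2\abs{\btbeta_S-\be^{*}_S}_\infty$, so $\btbeta_j\ne0$ and $\sgn(\btbeta_j)=\sgn(\beta^{*}_j)$ for every $j\in S$ (this also retroactively justifies the sign pattern used in part (i)); combined with $\widehat S^{(t)}\subseteq S$ this gives $\widehat S^{(t)}=S$ with probability tending to one.

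The main obstacle is the $\ell_\infty$-control of the recursively-defined noise $\W$ in the first step: through $\hbe^{(t-1)}$ and the kernel estimate $\widehat f^{(t)}(0)$ (whose bandwidth $h_t\asymp a_{n,t-1}$ shrinks with $t$), $\W$ accumulates both the pseudo-response error and the error of the previous iterate, and one must show it stays at the $\lambda_{n,t}$-scale uniformly in $t$. Moreover part (ii) requires the $S^c$-projected noise to be pushed below $(\alpha/4)\lambda$ rather than merely $O(\lambda)$; it is precisely here, and in transferring~\eqref{eqn:irres} to $\hS_1$, that the conditions $m\ge n^{c}$ and $s=O(m^{r})$ with $r<1/3$ are used.
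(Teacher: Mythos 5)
Your proposal is correct and follows essentially the same route as the paper: a primal--dual witness construction for the iteration-$t$ program with $\A=\hS_1$, transfer of the irrepresentable condition from $\S$ to $\hS_1$ (the paper does this via operator-norm perturbation bounds in Lemma \ref{lem:strictless}, you via an $\ell_\infty$ Neumann series, under the same conditions $m\ge n^c$, $s=O(m^r)$ with $r<1/3$), an $\ell_\infty$-bound on the restricted solution combined with the beta-min condition, and the extension to general $t$ by replacing the initial estimator with $\hbe^{(t-1)}$ — which the paper phrases as a recursive application of Theorems \ref{thm:betainf}/\ref{thm:support} and you phrase as an explicit induction. The only point to tidy is in part (i): since the beta-min condition is not assumed there, define the oracle vector $\btbeta$ as the solution of the $S$-restricted penalized problem and use its subgradient $\btZ_S$ with $\abs{\btZ_S}_\infty\le 1$ (as the paper does) rather than $\sgn(\btbeta_S)$; your dual-feasibility bound only uses $\abs{\btZ_S}_\infty\le1$, so the argument is unchanged.
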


Note that the ``beta-min'' condition gets weaker as $t$ increases. When $t$ satisfies \eqref{eq:t}, the condition \eqref{eqn:sigcont} will reduce to $\underset{j\in S}{\min}\Abs{\beta^{*}_{j}}\ge C\|\S^{-1}_{S\times S}\|_{\infty}\sqrt{\frac{\log n}{n}}$, which matches the rate of the lower bound for the ``beta-min'' condition in Lasso in a single machine setting (see \cite{wainwright2009sharp}). 

Furthermore, we state the results in both Theorem \ref{thm:support} and \ref{thm:supportt} by a high-probability statement ``with probability tending to one''. The convergence rate actually can be represented as $1-q_{n}$, where $q_{n}=O(1-\mathbb{P}(\text{supp}(\hat{\bm{\beta}}_{0})\subseteq S))+O(n^{-\gamma})$ is a small quantity goes to $0$ when both $n$ and $p$ go to $\infty$. More specifically, the convergence rate  depends on the convergence rate $\mathbb{P}(\text{supp}(\hat{\bm{\beta}}_{0})\subseteq S)\rightarrow 1$ for the initial estimator $\hat{\bm{\beta}}_0$. Below we further provide two remarks on our method.

\begin{remark}\label{rmk:batchsize}
	It is worthwhile noting that we assume the data is evenly split only for the ease of discussions. In fact, the local sample size $m$ in our theoretical results is the sample size on the first machine  in Algorithm \ref{alg:1} (a.k.a. the central machine in distributed computing).  As long as the sample size $m$ on the first machine is specified, our method does not depend on the partition of the entire dataset.
\end{remark}

{
	\begin{remark}\label{rmk:independent}
		We note that the proposed estimator can be generalized to the case when the noise $e$ and the covariates $\X$ are not independent. More specifically, without the independence assumption, we assume $\pr(e\leq 0 |\X) = \tau$ for some specified $\tau \in (0,1)$. The Hessian matrix becomes $\H(\be^*)=\ep (\X\X^{\rm T}f(0|\X))$. Although $\H(\be^*)$ no longer takes the form of $\S f(0)$ when the noise depends on covariates, it can be approximate by
		\[
		\D_h(\be_0)=
		\ep\left(\X\X^{\rm T}\frac{1}{h}K\left(\frac{Y-\X^{\rm T}\be_0}{h}\right)\right),
		\] 
		for a positive kernel function $K(\cdot)$ (i.e., $K(x) >0$ for all $x$). Let $\hbe_{0}$ be an initial estimator of $\be^{*}$.  Given $n$ \emph{i.i.d.} samples $(\X_i,Y_i)$ from \eqref{eq:model}, for each $1\le i\le n$, we construct the following quantities:
		\[
		\gamma_{i,h} = \sqrt{\frac{1}{h}K\left(\frac{Y_i-\X_i^{\rm T}\hbe_0}{h}\right)},\quad  \tX_{i,h} =  \gamma_{i,h} \X_i,\quad \hD_h = \frac{1}{n} \sum_{i=1}^n \tX_{i,h}\tX_{i,h}^{\rm T},
		\]
		\begin{eqnarray*}
			\tY_{i,h}=\tX_{i,h}^{\rm T}\hbe_{0}-\frac{\ind{Y_i\leq\X_i^{\rm T}\hbe_{0}}-\tau}{\gamma_{i,h}}.
		\end{eqnarray*}
		Then, we can construct the pooled estimator (i.e., the counterpart of \eqref{eq:pool}) by solving the following Lasso problem with both transformed input $\tX_{i,h}$ and response $\tY_{i,h}$:
		\begin{eqnarray}\label{eq:dependent}
			\hbe=\argmin\limits_{\be\in\RR^{p+1}}\Big{\{} \frac{1}{2n}\sum_{i=1}^{n}(\widetilde{Y}_{i,h}-\tX^{\rm T}_{i,h}\be)^{2}+\lambda_{n}|\be|_{1}\Big{\}}.
		\end{eqnarray}
		Using a similar distributed approach described in Section \ref{sec:dist},  the pooled estimator in Eq. \eqref{eq:dependent} can be extended into a distributed estimator. 
		
		Although the extension to the dependent case seems relatively straightforward, the nonparametric estimation of the conditional density $f(0|\X)$ has the issue of ``curse of dimensionality'', especially when $\X$ is high-dimensional. Without any strong assumption on $f(0|\X)$, it requires a huge number of local samples to construct an accurate estimator $\hD_{1,h} = \frac{1}{m} \sum_{i\in \mathcal{H}_1} \tX_{i,h}\tX_{i,h}^{\rm T}$ in the distributed implementation.  We leave more investigation of the dependent noise case to future work.
		
	\end{remark}
}

\section{Simulation Study}\label{sec:sim}
In this section, we report  the simulation studies to illustrate the performance of our distributed REL.
\subsection{Simulation Setup}
We consider the following linear model
\[
Y_i = \X_i^\mathrm{T}\be^* +e_i, \quad i=1,2,\ldots,n,
\]
where $\X_i^\mathrm{T}=(1,X_{i,1},\ldots,X_{i,p})$ is a $(p+1)$-dimensional covariate vector and\linebreak[4] $(X_{i,1},\ldots,X_{i,p})$s are drawn $i.i.d.$ from a multivariate normal distribution $N(0,\S)$. The covariance matrix $\S$ is constructed by $\S_{ij} = 0.5 ^{|i-j|}$ for $1\leq i,j\leq p$. We fix the dimension $p=500$ and choose the loss function to be the QR loss with quantile level $\tau = 0.3$. Note that other choices of $\tau$ lead to similar results in the experiment. We provide additional experimental results for $\tau=0.5$ in the appendix. Let $s$ be the sparsity level and the true coefficient is set to $$\be^* = (\frac{10}{s},\frac{20}{s},\frac{30}{s},\ldots,\frac{10(s-1)}{s},10,0,0\ldots,0).$$ We consider the following three noise distributions:
\begin{enumerate}
	\item Normal: the noise $e_i \sim \mathrm{N}(0,1)$.
	\item Cauchy: the noise $e_i\sim \mathrm{Cauchy}(0,1)$.
	\item Exponential: the noise $e_i\sim \mathrm{exp}(1)$.
\end{enumerate} 
We note that the variance of the Cauchy distribution is infinite.
The initial estimator is computed by directly solving  the $\ell_1$-regularized QR optimization using only the data on the first machine (see Eq. \eqref{ag0}). At each iteration, 
the constant $C_0$ in the regularization parameter $\lambda_{n,g}$ in \eqref{eq:lambda} is chosen by validation. In particular, we choose $C_0$ to minimize the quantile loss on an independently generated validation dataset with the sample size $n$. Moreover, we could also apply cross-validation or an information criterion such as BIC to choose $\lambda_{n}$.

For the choice of the kernel function $K(\cdot)$, we use a biweight kernel function
\[
K(x) = \begin{cases}
	0, & \text{if} \quad x\leq -1,\\
	-\frac{315}{64}x^6+\frac{735}{64}x^4-\frac{525}{64}x^2+\frac{105}{64}, & \text{if} \quad -1\leq x\leq 1,\\
	0, & \text{if} \quad x \geq 1.
\end{cases}
\]
It is easy to verify that $K(\cdot)$ satisfies the condition (C3). We also note that other choices of $K(\cdot)$ provide similar results. 

From Theorem \ref{thm:betainf} and \ref{thm:betainft} in Section \ref{sec:theory}, the bandwidth is set to $h_g =ca_{n,g-1}$ for some constant $c>0$, where $a_{n,g-1}$ is defined in \eqref{eq:a}. In our simulation study, we choose $h_g = \sqrt{\frac{s \log n}{n}}+s^{-1/2}\left(c_0\frac{s^2\log n}{m}\right)^{(g+1)/2}$ (i.e., set the constant $c=1$) for convenience. Note that the constant $c_0$ is used to ensure that $\frac{s^2\log n}{m}<1$, and we set $c_0=0.1$ in the following experiments. In fact, our algorithm is quite robust with respect to the choice of the bandwidth (see the sensitivity analysis in Section \ref{sec:sensitivity}). All the results reported in this section are average of 100 independent runs of simulations.

We compare the performance of the proposed distributed REL (dist REL for short) with other two approaches:
\begin{enumerate}
	\item Averaging divide-and-conquer (Avg-DC) which computes the $\ell_1$-regularized QR (see Eq. \eqref{eq:local}) on each local machine and combines the local estimators by taking the average.
	\item Robust estimator with Lasso (REL) on a single machine with pooled data (see Eq. \eqref{eq:pool}), which is denoted by pooled REL.
\end{enumerate}
Note that the $\ell_1$-regularized QR estimator in \eqref{eq:QR} and the de-biased averaging divide-and-conquer estimator (see \cite{zhao2014general}) are not included in most comparisons because they are computationally very expensive to be implemented in our setting, with large $n$ and $p$.  Moreover, the de-biased estimator generates  a dense estimated coefficient due to the de-biasing procedure. In the experiment on computation efficiency, we compare the running time of our method to the $\ell_1$-regularized QR estimator. The result shows that our method achieves a similar performance as the $\ell_1$-regularized QR estimator and it is computationally much more efficient. 

\subsection{Effect of the Number of Iterations}
We first show the performance of our distribute REL by varying the number of iterations. We fix the sample size $n=10000$, local sample size $m=500$, the sparsity level $s=20$ and dimension $p=500$. We plot the $\ell_2$-error from the true QR coefficients versus the number of iterations. Since the Avg-DC only requires one-shot communication, we use a horizontal line to show its performance. The results are shown in Figure \ref{iter}. 
\begin{figure}[!ht]
	\centering
	\addtolength{\leftskip} {-4cm}
	\addtolength{\rightskip}{-4cm}
	\subfloat[Normal noise]{
		\includegraphics[width=\figspace\textwidth]{./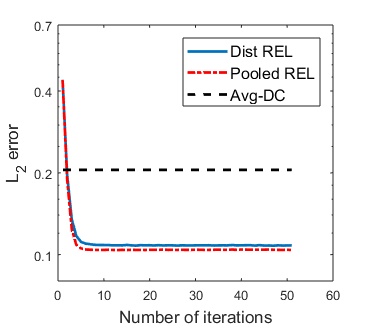}
		\label{iter_normal}}
	\hspace{-1.5em}
	\subfloat[Cauchy noise]{
		\includegraphics[width=\figspace\textwidth]{./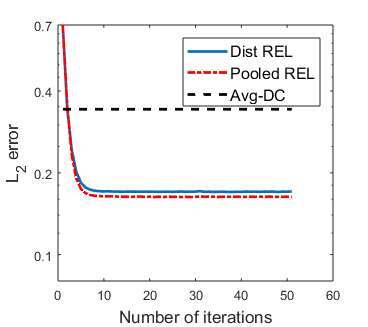}
		\label{iter_cauchy}}		
	\hspace{-1.5em}
	\subfloat[Exponential noise]{
		\includegraphics[width=\figspace\textwidth]{./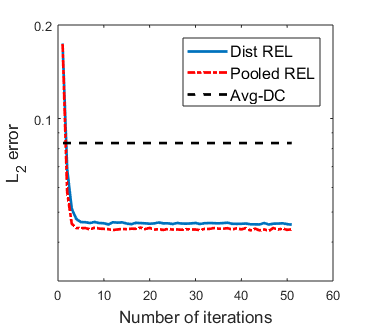}
		\label{iter_exp}}
	\caption{The $\ell_2$-error from the true QR coefficient versus the number of iterations. The sample size $n$ is fixed to $n=10000$ and the local sample size $m$ is 500.}\label{iter}
\end{figure}
From the result, both pooled REL and distributed REL outperform the Avg-DC algorithm and become stable after a few iterations.  Therefore, for the rest of the experiments in this section, we use 50 as the number of iterations in the algorithm. Moreover, the distributed REL almost matches the performance of pooled REL for all three noises.

\subsection{Effect of the QR Loss Under Heavy-Tailed Noise}
We study the effect of the QR loss in the presence of heavy-tailed noise. We compare with the standard Lasso estimator in a single machine setting with pooled data. We vary the sample size $n$ and compute the $F_1$-score and the $\ell_2$-error for the distributed REL, Pooled REL, Avg-DC, and the Lasso estimator.  The $F_1$-score is defined as 
\[
F_{1}=\left({\frac {\mathrm {recall} ^{-1}+\mathrm {precision} ^{-1}}{2}}\right)^{-1}=2\cdot {\frac {\mathrm {precision} \cdot \mathrm {recall} }{\mathrm {precision} +\mathrm {recall} }},
\]
which is commonly used as an evaluation of support recovery (note that $F_1$-score=1 implies perfect support recovery). In Table \ref{robust_normal}, \ref{robust_cauchy} and \ref{robust_exp}, we report the results for all three types of noises.
\begin{table}
	\caption{The $F_1$-score and $\ell_2$-error of the distributed REL, pooled REL, Avg-DC, and Lasso estimator under different sample size $n$. Noises are generated from normal distribution. The local sample size is fixed to $m=500$.\label{robust_normal}}
	\centering
	\resizebox{\textwidth}{!}{%
		\begin{tabular}{c|cc|cc|cc|cc}
			\hline
			\multirow{2}{*}{$n$} & \multicolumn{2}{c|}{Dist REL} & \multicolumn{2}{c|}{Pooled REL} & \multicolumn{2}{c|}{Avg-DC} & \multicolumn{2}{c}{Lasso} \\ \cline{2-9} 
			& $F_1$-score & $\ell_2$-error & $F_1$-score & $\ell_2$-error & $F_1$-score & $\ell_2$-error & $F_1$-score & $\ell_2$-error \\ \hline
			2500 & 0.90 & 0.189 & 0.83 & 0.183 & 0.23 & 0.255 & 1.00 & 0.161 \\
			5000 & 0.95 & 0.138 & 0.91 & 0.132 & 0.14 & 0.221 & 1.00 & 0.113 \\
			10000 & 0.97 & 0.102 & 0.93 & 0.097 & 0.10 & 0.203 & 1.00 & 0.079 \\
			15000 & 0.98 & 0.085 & 0.96 & 0.083 & 0.09 & 0.196 & 1.00 & 0.065 \\
			20000 & 0.99 & 0.073 & 0.96 & 0.069 & 0.08 & 0.192 & 1.00 & 0.056 \\
			25000 & 0.99 & 0.067 & 0.97 & 0.050 & 0.08 & 0.196 & 1.00 & 0.050 \\ \hline
		\end{tabular}%
	}
\end{table}
\begin{table}
	\caption{The $F_1$-score and $\ell_2$-error of the distributed REL, pooled REL, Avg-DC, and Lasso estimator under different sample size $n$. Noises are generated from Cauchy distribution. The local sample size is fixed to $m=500$.\label{robust_cauchy}}
	\centering
	\resizebox{\textwidth}{!}{%
		\begin{tabular}{c|cc|cc|cc|cc}
			\hline
			\multirow{2}{*}{$n$} & \multicolumn{2}{c|}{Dist REL} & \multicolumn{2}{c|}{Pooled REL} & \multicolumn{2}{c|}{Avg-DC} & \multicolumn{2}{c}{Lasso} \\ \cline{2-9} 
			& $F_1$-score & $\ell_2$-error & $F_1$-score & $\ell_2$-error & $F_1$-score & $\ell_2$-error & $F_1$-score & $\ell_2$-error \\ \hline
			2500 & 0.84 & 0.320 & 0.75 & 0.312 & 0.25 & 0.436 & 0.25 & 151.4 \\
			5000 & 0.92 & 0.229 & 0.85 & 0.221 & 0.16 & 0.380 & 0.26 & 138.8 \\
			10000 & 0.96 & 0.168 & 0.89 & 0.160 & 0.11 & 0.349 & 0.27 & 128.3 \\
			15000 & 0.98 & 0.139 & 0.92 & 0.132 & 0.09 & 0.338 & 0.25 & 132.1 \\
			20000 & 0.97 & 0.118 & 0.93 & 0.113 & 0.08 & 0.329 & 0.26 & 121.0 \\
			25000 & 0.98 & 0.107 & 0.94 & 0.101 & 0.08 & 0.330 & 0.23 & 120.8 \\ \hline
		\end{tabular}%
	}
\end{table}
\begin{table}
	\caption{The $F_1$-score and $\ell_2$-error of the distributed REL, pooled REL, Avg-DC, and Lasso estimator under different sample size $n$. Noises are generated from exponential distribution. The local sample size is fixed to $m=500$.\label{robust_exp}}
	\centering
	\resizebox{\textwidth}{!}{%
		\begin{tabular}{c|cc|cc|cc|cc}
			\hline
			\multirow{2}{*}{$n$} & \multicolumn{2}{c|}{Dist REL} & \multicolumn{2}{c|}{Pooled REL} & \multicolumn{2}{c|}{Avg-DC} & \multicolumn{2}{c}{Lasso} \\ \cline{2-9} 
			& $F_1$-score & $\ell_2$-error & $F_1$-score & $\ell_2$-error & $F_1$-score & $\ell_2$-error & $F_1$-score & $\ell_2$-error \\ \hline
			2500 & 0.96 & 0.093 & 0.91 & 0.089 & 0.25 & 0.115 & 1.00 & 0.102 \\
			5000 & 0.98 & 0.069 & 0.92 & 0.066 & 0.15 & 0.101 & 1.00 & 0.094 \\
			10000 & 0.99 & 0.051 & 0.96 & 0.048 & 0.10 & 0.092 & 1.00 & 0.069 \\
			15000 & 0.99 & 0.043 & 0.97 & 0.040 & 0.09 & 0.089 & 1.00 & 0.054 \\
			20000 & 1.00 & 0.037 & 0.98 & 0.034 & 0.08 & 0.086 & 1.00 & 0.048 \\
			25000 & 0.99 & 0.033 & 0.98 & 0.031 & 0.08 & 0.087 & 1.00 & 0.043 \\ \hline
		\end{tabular}%
	}
\end{table}
As expected, when the noise is normal, the Lasso estimator has smaller $\ell_2$-error and better support recovery. However, when the noise has a slightly heavier tail (e.g., exponential noise), both the distributed REL and pooled REL outperform the Lasso estimator in $\ell_2$-error. In the case of heavy-tailed noise (e.g., Cauchy noise), the Lasso approach completely fails with very large $\ell_2$-errors while the distributed REL is much better in both $\ell_2$-error and support recovery. It is clear that the Lasso estimator is not robust to heavy-tailed noises, and therefore we omit the Lasso estimator in the rest of the simulation studies.

Another interesting phenomena revealed in Tables \ref{robust_normal}-\ref{robust_exp} is that, in terms of the $F_1$-score, the distributed REL is slightly better than pooled REL. This is indeed affected by the selection of regularization parameter $\lambda_n$. According to our Theorem \ref{thm:betainf}, we set $\lambda_n$ for the first round on the order of $s \log n/m$, where $m$ is the local sample size and $n$ the total sample size. For the pooled estimator where $m=n$, this term becomes $s \log n/n$, which becomes smaller. Therefore, our distributed estimator has already eliminated many features for the first round due to a larger regularization parameter, which leads to a slightly better precision. It is noted that this also happens in the following experiments.

\subsection{Effect of Sample Size and Local Sample Size}
\label{sec:effect}
In this section, we investigate how the performance of the distributed REL changes with the total sample size $n$ and the local sample size $m$. We also compare our estimator with the Communication-efficient Surrogate Likelihood (CSL) estimator proposed in \cite{jordan2018communication}. The original method in \cite{jordan2018communication} requires  second-order differentiable loss functions, which is not directly applicable to quantile loss function. Thus, we adopt a smoothing technique to smooth the QR loss function as in \cite{horowitz1998bootstrap, chen2019}. We fix sparsity level $s=20$, $p=500$, and vary the sample size $n\in\{5000,10000,20000\}$ and the local sample size $m\in\{200,500,1000\}$. The precision, recall of the support recovery and the $\ell_2$-error are reported for each estimator. The results are shown in Table \ref{mn_normal}, \ref{mn_cauchy} and \ref{mn_exp}.
\begin{table}
	\caption{The $\ell_2$-error, precision, and recall of the three estimators under different combinations of the sample size $n$ and local sample size $m$. Noises are generated from normal distribution.\label{mn_normal}}
	\centering
	\resizebox{\textwidth}{!}{%
		\begin{tabular}{c|c|ccc|ccc|ccc}
			\hline
			\multicolumn{2}{c|}{$m$} & \multicolumn{3}{c|}{200} & \multicolumn{3}{c|}{500} & \multicolumn{3}{c}{1000} \\ \hline
			\multicolumn{2}{c|}{$n$} & 5000 & 10000 & 20000 & 5000 & 10000 & 20000 & 5000 & 10000 & 20000 \\ \hline
			\multirow{3}{*}{\begin{tabular}[c]{@{}c@{}}Pooled\\ REL\end{tabular}} & Precision & 0.79 & 0.85 & 0.92 & 0.79 & 0.89 & 0.93 & 0.78 & 0.85 & 0.92 \\
			& Recall & 1.00 & 1.00 & 1.00 & 1.00 & 1.00 & 1.00 & 1.00 & 1.00 & 1.00 \\
			& $\ell_2$-error & 0.136 & 0.098 & 0.071 & 0.138 & 0.101 & 0.073 & 0.135 & 0.100 & 0.072 \\ \hline
			\multirow{3}{*}{\begin{tabular}[c]{@{}c@{}}Dist\\ REL\end{tabular}} & Precision & 0.98 & 0.99 & 1.00 & 0.91 & 0.95 & 0.98 & 0.83 & 0.89 & 0.95 \\
			& Recall & 1.00 & 1.00 & 1.00 & 1.00 & 1.00 & 1.00 & 1.00 & 1.00 & 1.00 \\
			& $\ell_2$-error & 0.154 & 0.111 & 0.081 & 0.142 & 0.105 & 0.076 & 0.137 & 0.102 & 0.074 \\ \hline
			\multirow{3}{*}{\begin{tabular}[c]{@{}c@{}}Avg\\ DC\end{tabular}} & Precision & 0.05 & 0.04 & 0.04 & 0.08 & 0.06 & 0.05 & 0.13 & 0.08 & 0.06 \\
			& Recall & 1.00 & 1.00 & 1.00 & 1.00 & 1.00 & 1.00 & 1.00 & 1.00 & 1.00 \\
			& $\ell_2$-error & 0.348 & 0.328 & 0.314 & 0.225 & 0.205 & 0.199 & 0.180 & 0.156 & 0.145 \\ \hline
			\multirow{3}{*}{CSL} & Precision & 0.86 & 0.85 & 0.88 & 0.08 & 1.00 & 1.00 & 1.00 & 1.00& 1.00 \\
			& Recall & 0.95 & 0.93 & 0.94 & 1.00 & 1.00 & 1.00 & 1.00 & 1.00 & 1.00 \\
			& $\ell_2$-error & 0.480 & 0.455 & 0.452 & 0.218 & 0.201 & 0.190 & 0.154 & 0.141 & 0.098 \\ \hline
		\end{tabular}
	}
\end{table}
\begin{table}
	\caption{The $\ell_2$-error, precision, and recall of the three estimators under different combinations of the sample size $n$ and local sample size $m$. Noises are generated from Cauchy distribution.\label{mn_cauchy}}
	\centering
	\resizebox{\textwidth}{!}{%
		\begin{tabular}{c|c|ccc|ccc|ccc}
			\hline
			\multicolumn{2}{c|}{$m$} & \multicolumn{3}{c|}{200} & \multicolumn{3}{c|}{500} & \multicolumn{3}{c}{1000} \\ \hline
			\multicolumn{2}{c|}{$n$} & 5000 & 10000 & 20000 & 5000 & 10000 & 20000 & 5000 & 10000 & 20000 \\ \hline
			\multirow{3}{*}{\begin{tabular}[c]{@{}c@{}}Pooled\\ REL\end{tabular}} & Precision & 0.72 & 0.84 & 0.89 & 0.75 & 0.82 & 0.88 & 0.70 & 0.81 & 0.87 \\
			& Recall & 1.00 & 1.00 & 1.00 & 1.00 & 1.00 & 1.00 & 1.00 & 1.00 & 1.00 \\
			& $\ell_2$-error & 0.220 & 0.159 & 0.118 & 0.221 & 0.161 & 0.116 & 0.221 & 0.156 & 0.114 \\ \hline
			\multirow{3}{*}{\begin{tabular}[c]{@{}c@{}}Dist\\ REL\end{tabular}} & Precision & 0.98 & 0.99 & 1.00 & 0.86 & 0.91 & 0.95 & 0.76 & 0.87 & 0.92 \\
			& Recall & 1.00 & 1.00 & 1.00 & 1.00 & 1.00 & 1.00 & 1.00 & 1.00 & 1.00 \\
			& $\ell_2$-error & 0.251 & 0.181 & 0.134 & 0.230 & 0.169 & 0.122 & 0.223 & 0.158 & 0.117 \\ \hline
			\multirow{3}{*}{\begin{tabular}[c]{@{}c@{}}Avg\\ DC\end{tabular}} & Precision & 0.05 & 0.04 & 0.04 & 0.08 & 0.06 & 0.04 & 0.14 & 0.08 & 0.06 \\
			& Recall & 1.00 & 1.00 & 1.00 & 1.00 & 1.00 & 1.00 & 1.00 & 1.00 & 1.00 \\
			& $\ell_2$-error & 0.704 & 0.671 & 0.667 & 0.375 & 0.355 & 0.332 & 0.291 & 0.245 & 0.235 \\ \hline
			\multirow{3}{*}{CSL} & Precision & 0.09 & 0.12 & 0.12 & 0.28 & 0.35 & 0.48 & 0.64 & 0.77& 0.89 \\
			& Recall & 0.91 & 0.93 & 0.90 & 0.97 & 0.97 & 0.98 & 0.98 & 0.98 & 0.99 \\
			& $\ell_2$-error & 0.834 & 0.790 & 0.728 & 0.324 & 0.327 & 0.312 & 0.255 & 0.195 & 0.171 \\ \hline
		\end{tabular}
	}
\end{table}
\begin{table}
	\caption{The $\ell_2$-error, precision, and recall of the three estimators under different combinations of the sample size $n$ and local sample size $m$. Noises are generated from exponential distribution.\label{mn_exp}}
	\centering
	\resizebox{\textwidth}{!}{%
		\begin{tabular}{c|c|ccc|ccc|ccc}
			\hline
			\multicolumn{2}{c|}{$m$} & \multicolumn{3}{c|}{200} & \multicolumn{3}{c|}{500} & \multicolumn{3}{c}{1000} \\ \hline
			\multicolumn{2}{c|}{$n$} & 5000 & 10000 & 20000 & 5000 & 10000 & 20000 & 5000 & 10000 & 20000 \\ \hline
			\multirow{3}{*}{\begin{tabular}[c]{@{}c@{}}Pooled\\ REL\end{tabular}} & Precision & 0.90 & 0.98 & 0.98 & 0.88 & 0.94 & 0.96 & 0.86 & 0.93 & 0.96 \\
			& Recall & 1.00 & 1.00 & 1.00 & 1.00 & 1.00 & 1.00 & 1.00 & 1.00 & 1.00 \\
			& $\ell_2$-error & 0.060 & 0.045 & 0.031 & 0.059 & 0.042 & 0.032 & 0.059 & 0.042 & 0.030 \\ \hline
			\multirow{3}{*}{\begin{tabular}[c]{@{}c@{}}Dist\\ REL\end{tabular}} & Precision & 1.00 & 1.00 & 1.00 & 0.95 & 0.98 & 0.99 & 0.91 & 0.95 & 0.98 \\
			& Recall & 1.00 & 1.00 & 1.00 & 1.00 & 1.00 & 1.00 & 1.00 & 1.00 & 1.00 \\
			& $\ell_2$-error & 0.076 & 0.061 & 0.043 & 0.062 & 0.044 & 0.034 & 0.060 & 0.042 & 0.031 \\ \hline
			\multirow{3}{*}{\begin{tabular}[c]{@{}c@{}}Avg\\ DC\end{tabular}} & Precision & 0.05 & 0.04 & 0.04 & 0.07 & 0.06 & 0.04 & 0.15 & 0.09 & 0.05 \\
			& Recall & 1.00 & 1.00 & 1.00 & 1.00 & 1.00 & 1.00 & 1.00 & 1.00 & 1.00 \\
			& $\ell_2$-error & 0.168 & 0.162 & 0.154 & 0.090 & 0.084 & 0.079 & 0.072 & 0.062 & 0.054 \\ \hline
			\multirow{3}{*}{CSL} & Precision & 0.86 & 0.85 & 0.88 & 0.08 & 1.00 & 1.00 & 1.00 & 1.00& 1.00 \\
			& Recall & 0.95 & 0.93 & 0.94 & 1.00 & 1.00 & 1.00 & 1.00 & 1.00 & 1.00 \\
			& $\ell_2$-error & 0.480 & 0.455 & 0.452 & 0.218 & 0.201 & 0.190 & 0.154 & 0.141 & 0.098 \\ \hline
		\end{tabular}
	}
\end{table}

\begin{figure}[p]
	\centering
	\addtolength{\leftskip} {-4cm}
	\addtolength{\rightskip}{-4cm}
	\subfloat[Normal noise]{
		\includegraphics[width=\figspace\textwidth]{./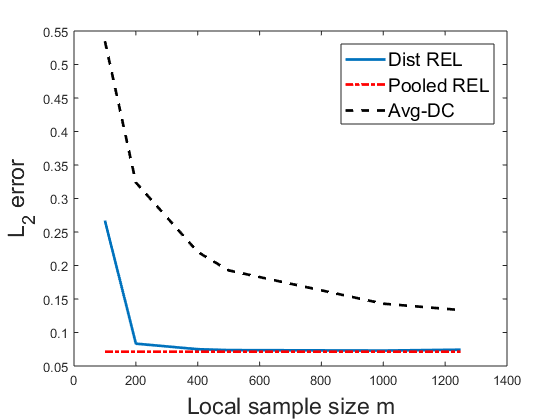}}
	\hspace{-1.5em}
	\subfloat[Cauchy noise]{
		\includegraphics[width=\figspace\textwidth]{./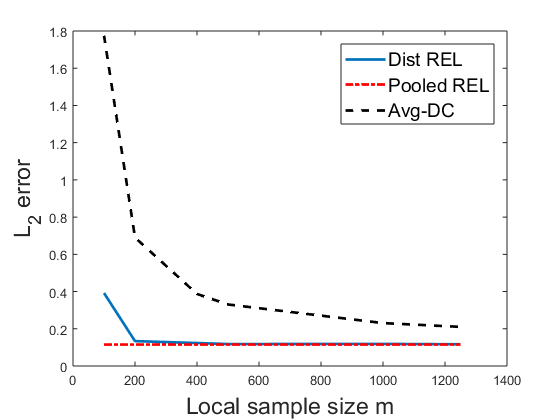}}
	\hspace{-1.5em}
	\subfloat[Exponential noise]{
		\includegraphics[width=\figspace\textwidth]{./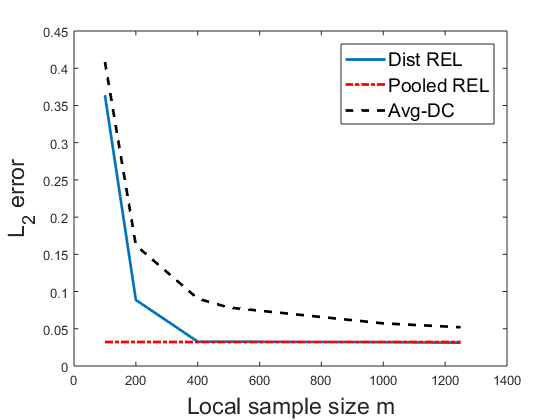}}
	\caption{The $\ell_2$-error from the true QR coefficient versus the local sample size $m$, with the total sample size fixed to $n = 20000$.}\label{m}
\end{figure}
\begin{figure}[p]
	\centering
	\addtolength{\leftskip} {-4cm}
	\addtolength{\rightskip}{-4cm}
	\subfloat[Normal noise]{
		\includegraphics[width=\figspace\textwidth]{./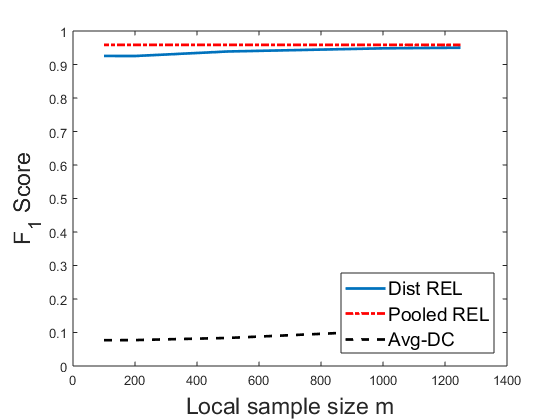}}
	\hspace{-1.5em}
	\subfloat[Cauchy noise]{
		\includegraphics[width=\figspace\textwidth]{./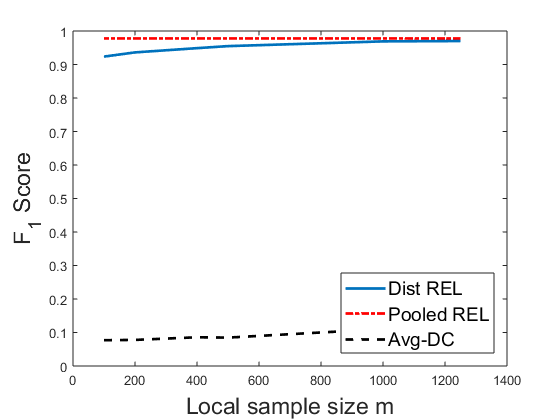}}
	\hspace{-1.5em}
	\subfloat[Exponential noise]{
		\includegraphics[width=\figspace\textwidth]{./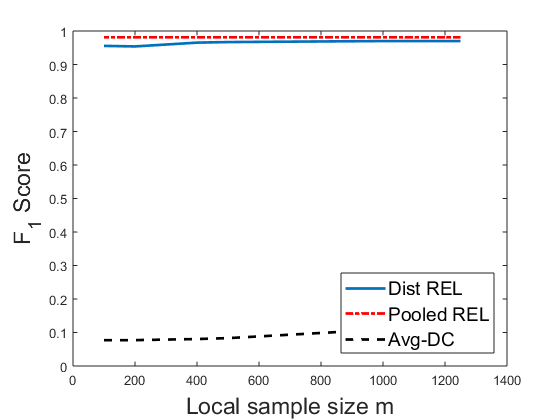}}
	\caption{The $F_1$-score versus the local sample size $m$, with the total sample size fixed to $n = 20000$.}\label{m_f1}
\end{figure}

From the results, we observe that both distributed REL and pooled REL outperform the Avg-DC algorithm and CSL estimator in all settings. The $\ell_2$-error of the distributed REL improves as the local sample size $m$ grows and it becomes close to pooled REL when $m$ is large. This is expected since the pooled REL is a special case of distributed REL with $m=n$. We also observe that the precision and recall of the distributed REL are both close to 1, which indicates good support recovery. In particular, the recall of our distributed REL is always 1, implying that all the relevant variables are selected. The precision of our method is close to 1, which indicates that only a very small number of irrelevant variables are selected. On the other hand, the precision of Avg-DC is very small because the averaging procedure results in a dense estimator, especially when $m$ is small. In addition, the performance of CSL estimator heavily depends on $m$. For example, for Cauchy error distribution in Table \ref{mn_cauchy}, a smaller $m$ leads to a relatively poor performance. 

For better visualization, with the sample size $n=20000$ fixed, we vary the local sample size $m$ and plot the $\ell_2$-error and $F_1$-score of pooled REL, distributed REL and Avg-DC estimator.
The results are presented in Figure \ref{m} and \ref{m_f1}. Similarly, in Figure \ref{n} and \ref{n_f1}, we fix the local sample size $m=500$ and vary the total sample size $n$.

\begin{figure}[!t]
	\centering
	\addtolength{\leftskip} {-4cm}
	\addtolength{\rightskip}{-4cm}
	\subfloat[Normal noise]{
		\includegraphics[width=\figspace\textwidth]{./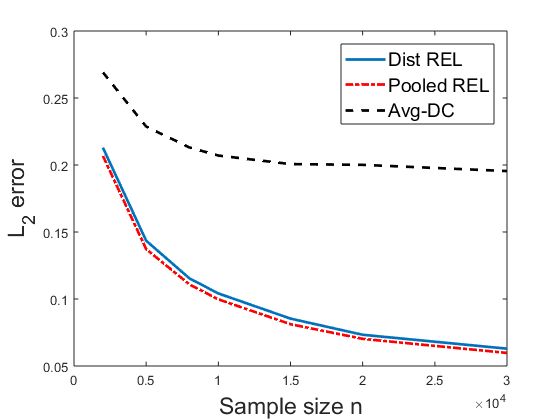}}
	\hspace{-1.5em}
	\subfloat[Cauchy noise]{
		\includegraphics[width=\figspace\textwidth]{./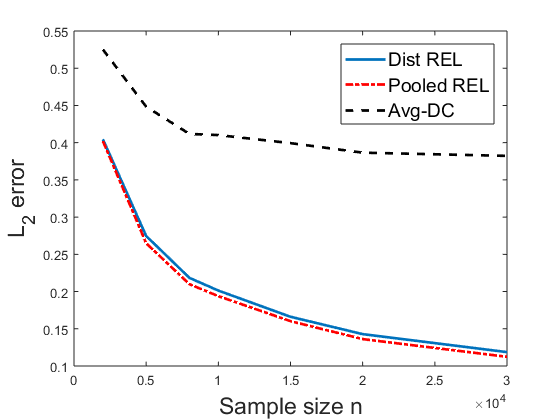}}
	\hspace{-1.5em}
	\subfloat[Exponential noise]{
		\includegraphics[width=\figspace\textwidth]{./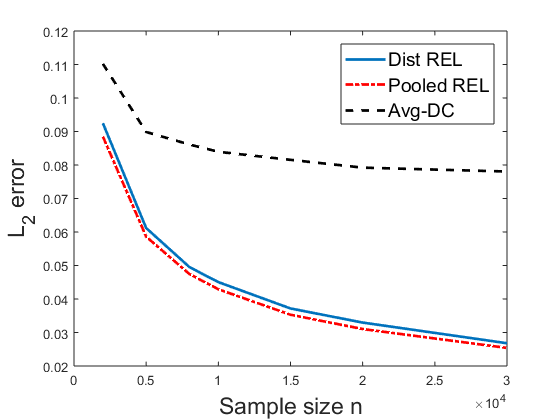}}
	\caption{The $\ell_2$-error from the true QR coefficient versus the sample size $n$, with the local sample size fixed to $m=500$.}\label{n}
\end{figure}
\begin{figure}[!t]
	\centering
	\addtolength{\leftskip} {-4cm}
	\addtolength{\rightskip}{-4cm}
	\subfloat[Normal noise]{
		\includegraphics[width=\figspace\textwidth]{./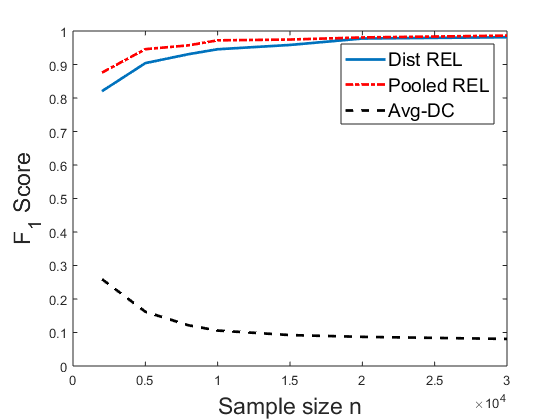}}
	\hspace{-1.5em}
	\subfloat[Cauchy noise]{
		\includegraphics[width=\figspace\textwidth]{./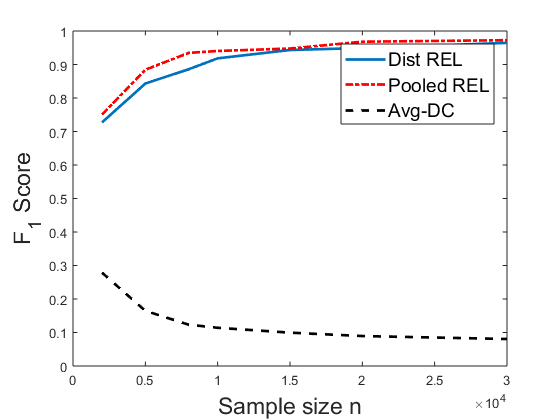}}
	\hspace{-1.5em}
	\subfloat[Exponential noise]{
		\includegraphics[width=\figspace\textwidth]{./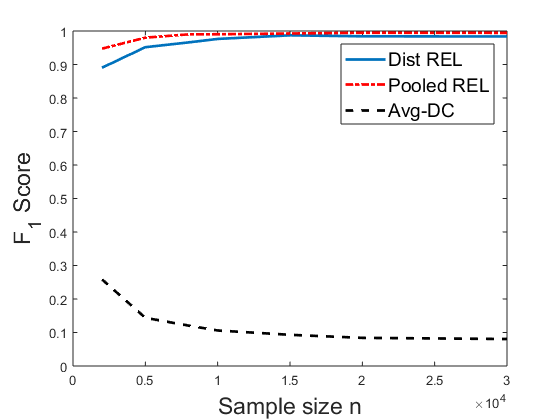}}
	\caption{The $F_1$-score versus the sample size $n$, with the local sample size fixed to $m=500$.}\label{n_f1}
\end{figure}
From Figure \ref{m} we can see that the $\ell_2$-error of distributed REL is close to that of pooled REL when $m$ is not too small, and both of them outperform the Avg-DC estimator. From Figure \ref{n} we observe that the $\ell_2$-error of distributed REL is close to that of pooled REL and both errors decrease as the sample size $n$ becomes large. However, the $\ell_2$-error of the Avg-DC estimator stays large and fails to converge as the sample size $n$ increases. From Figure \ref{m_f1} and \ref{n_f1} we can see that the $F_1$-score of both distributed REL and pooled REL are close to 1, while the Avg-DC approach clearly fails in support recovery in high-dimensional settings.

\subsection{Sensitivity Analysis for the Bandwidth}\label{sec:sensitivity}
In this section, we study the sensitivity of the scaling constant in the bandwidth of the proposed REL. Recall that the bandwidth is $h = ca_{n,g}$ where $a_{n,g}$ is defined in \eqref{eq:a} with $c>0$ being the scaling constant. We vary the sample size $n$ and the constant $c$ from $0.5$ to 10 and compute the $F_1$-score and the $\ell_2$-error of the distributed REL, pooled REL, and the Avg-DC estimator. Due to space limitations, we report the Cauchy noise case as an example. For other noises, the performance is even less sensitive. The results are shown in Table \ref{bandwidth}. 
\begin{table}
	\caption{The $F_1$-score and $\ell_2$-error of the distributed REL, pooled REL, and Avg-DC  under different sample size $n$ and choices of bandwidth constant $c$. Local sample size $m=500$. Noises are generated from Cauchy distribution.\label{bandwidth}}
	\centering
	\resizebox{0.85\textwidth}{!}{%
		\begin{tabular}{c|c|cc|cc|cc}
			\hline
			\multirow{2}{*}{$n$} & \multirow{2}{*}{$c$} & \multicolumn{2}{c|}{Dist REL} & \multicolumn{2}{c|}{Pooled REL} & \multicolumn{2}{c}{Avg-DC} \\ \cline{3-8} 
			&  & $F_1$-score & $\ell_2$-error & $F_1$-score & $\ell_2$-error & $F_1$-score & $\ell_2$-error \\ \hline
			5000 & 0.5 & 0.99 & 0.249 & 0.96 & 0.236 & 0.17 & 0.377 \\
			10000 & 0.5 & 1.00 & 0.183 & 0.99 & 0.171 & 0.12 & 0.356 \\
			20000 & 0.5 & 0.99 & 0.130 & 0.99 & 0.123 & 0.09 & 0.348 \\ \hline
			5000 & 1 & 0.99 & 0.253 & 0.96 & 0.241 & 0.16 & 0.373 \\
			10000 & 1 & 0.99 & 0.179 & 0.98 & 0.170 & 0.11 & 0.345 \\
			20000 & 1 & 1.00 & 0.125 & 0.98 & 0.117 & 0.09 & 0.328 \\ \hline
			5000 & 2 & 0.99 & 0.259 & 0.97 & 0.245 & 0.16 & 0.38 \\
			10000 & 2 & 1.00 & 0.188 & 0.98 & 0.177 & 0.11 & 0.347 \\
			20000 & 2 & 1.00 & 0.131 & 0.99 & 0.124 & 0.09 & 0.332 \\ \hline
			5000 & 5 & 0.99 & 0.255 & 0.97 & 0.239 & 0.16 & 0.378 \\
			10000 & 5 & 1.00 & 0.185 & 0.98 & 0.173 & 0.11 & 0.349 \\
			20000 & 5 & 1.00 & 0.138 & 0.98 & 0.124 & 0.09 & 0.339 \\ \hline
			5000 & 10 & 1.00 & 0.270 & 0.99 & 0.252 & 0.16 & 0.382 \\
			10000 & 10 & 1.00 & 0.194 & 0.99 & 0.180 & 0.1 & 0.346 \\
			20000 & 10 & 1.00 & 0.136 & 0.98 & 0.121 & 0.09 & 0.331 \\ \hline
		\end{tabular}
	}
\end{table}

From Table \ref{bandwidth}, we observe that both distributed REL and pooled REL exhibit good performance under all choices of bandwidth constant. Therefore even under a suboptimal choice of bandwidth constant, the distributed REL still achieves small $\ell_2$-error and good support recovery.

\subsection{Effect of the Sparsity}
In this section we investigate how the performance of the distributed REL algorithm  changes with the sparsity level of the true coefficient $\be^*$. We fix the sample size $n = 10000$ and the local sample size $m = 500$, and we set the constant $c_0$ in $h_g$ to be 0.01. Recall that the true coefficient is set to be $$\be^* = (\frac{10}{s},\frac{20}{s},\frac{30}{s},\ldots,\frac{10(s-1)}{s},10,0,0\ldots,0).$$ We vary the sparsity level $s$ in $\{5,10,20,30,50,100\}$ and report the precision, recall and $\ell_2$-error. Since the $\ell_2$-norm of the true coefficient $\be^*$ changes with the sparsity level $s$, we also report the relative $\ell_2$-error which is defined by $|\widehat{\be}-\be^*|_2/|\be^*|_2$. The results are shown in Table \ref{s_homo}, \ref{s_cauchy} and \ref{s_exp}.\\
From the result, we can observe that the $\ell_2$-errors of all three estimators become larger as the sparsity level $s$ increases and the distributed REL algorithm performs much better than the Avg-DC algorithm. Moreover, the performance of the distributed REL is very close to the performance of the pooled REL.
\begin{table}
	\caption{The $\ell_2$-error, precision, and recall of the three estimators with different sparsity level $s$. Noises are generated from normal distribution. The local sample size is fixed to $m=500$.\label{s_homo}}
	\centering
	\begin{tabular}{c|c|cccccc}
		\hline
		\multicolumn{2}{c|}{Sparsity $s$} & 5 & 10 & 20 & 30 & 50 & 100 \\ \hline
		\multirow{4}{*}{\begin{tabular}[c]{@{}c@{}}Pooled\\ REL\end{tabular}} & Precision & 0.98 & 0.96 & 0.86 & 0.82 & 0.73 & 0.66 \\
		& Recall & 1.00 & 1.00 & 1.00 & 1.00 & 1.00 & 1.00 \\
		& $\ell_2$-error & 0.063 & 0.080 & 0.096 & 0.117 & 0.141 & 0.191 \\
		& Relative $\ell_2$-error($\times 10^{-2}$) & 0.426 & 0.408 & 0.360 & 0.361 & 0.341 & 0.329 \\ \hline
		\multirow{4}{*}{\begin{tabular}[c]{@{}c@{}}Dist\\ REL\end{tabular}} & Precision & 1.00 & 0.98 & 0.94 & 0.93 & 0.91 & 0.88 \\
		& Recall & 1.00 & 1.00 & 1.00 & 1.00 & 1.00 & 1.00 \\
		& $\ell_2$-error & 0.065 & 0.082 & 0.101 & 0.123 & 0.150 & 0.202 \\
		& Relative $\ell_2$-error($\times 10^{-2}$) & 0.441 & 0.418 & 0.379 & 0.379 & 0.363 & 0.347 \\ \hline
		\multirow{4}{*}{\begin{tabular}[c]{@{}c@{}}Avg\\ DC\end{tabular}} & Precision & 0.02 & 0.03 & 0.06 & 0.08 & 0.11 & 0.20 \\
		& Recall & 1.00 & 1.00 & 1.00 & 1.00 & 1.00 & 1.00 \\
		& $\ell_2$-error & 0.147 & 0.175 & 0.204 & 0.243 & 0.280 & 0.368 \\
		& Relative $\ell_2$-error($\times 10^{-2}$) & 0.988 & 0.890 & 0.760 & 0.751 & 0.675 & 0.633 \\ \hline
	\end{tabular}
\end{table}
\begin{table}
	\caption{The $\ell_2$-error, precision, and recall of the three estimators with different sparsity level $s$. Noises are generated from Cauchy distribution. The local sample size is fixed to $m=500$.\label{s_cauchy}}
	\centering
	\begin{tabular}{c|c|cccccc}
		\hline
		\multicolumn{2}{c|}{Sparsity $s$} & 5 & 10 & 20 & 30 & 50 & 100 \\ \hline
		\multirow{4}{*}{\begin{tabular}[c]{@{}c@{}}Pool\\ QR\end{tabular}} & Precision & 0.95 & 0.91 & 0.79 & 0.73 & 0.66 & 0.64 \\
		& Recall & 1.00 & 1.00 & 1.00 & 1.00 & 1.00 & 1.00 \\
		& $\ell_2$-error & 0.103 & 0.129 & 0.156 & 0.186 & 0.230 & 0.318 \\
		& Relative $\ell_2$-error($\times 10^{-2}$) & 0.696 & 0.656 & 0.581 & 0.574 & 0.555 & 0.547 \\ \hline
		\multirow{4}{*}{\begin{tabular}[c]{@{}c@{}}Dist\\ QR\end{tabular}} & Precision & 0.97 & 0.95 & 0.91 & 0.87 & 0.86 & 0.84 \\
		& Recall & 1.00 & 1.00 & 1.00 & 1.00 & 1.00 & 1.00 \\
		& $\ell_2$-error & 0.105 & 0.132 & 0.163 & 0.194 & 0.239 & 0.330 \\
		& Relative $\ell_2$-error($\times 10^{-2}$) & 0.709 & 0.674 & 0.608 & 0.598 & 0.578 & 0.567 \\ \hline
		\multirow{4}{*}{\begin{tabular}[c]{@{}c@{}}Avg\\ DC\end{tabular}} & Precision & 0.02 & 0.04 & 0.06 & 0.07 & 0.11 & 0.20 \\
		& Recall & 1.00 & 1.00 & 1.00 & 1.00 & 1.00 & 1.00 \\
		& $\ell_2$-error & 0.264 & 0.319 & 0.347 & 0.419 & 0.542 & 0.885 \\
		& Relative $\ell_2$-error($\times 10^{-2}$) & 1.779 & 1.628 & 1.295 & 1.293 & 1.308 & 1.252 \\ \hline
	\end{tabular}
\end{table}
\begin{table}
	\caption{The $\ell_2$-error, precision, and recall of the three estimators with different sparsity level $s$. Noises are generated from exponential distribution. The local sample size is fixed to $m=500$.\label{s_exp}}
	\centering
	\begin{tabular}{c|c|cccccc}
		\hline
		\multicolumn{2}{c|}{Sparsity $s$} & 5 & 10 & 20 & 30 & 50 & 100 \\ \hline
		\multirow{4}{*}{\begin{tabular}[c]{@{}c@{}}Pooled\\ REL\end{tabular}} & Precision & 0.97 & 0.97 & 0.95 & 0.92 & 0.87 & 0.79 \\
		& Recall & 1.00 & 1.00 & 1.00 & 1.00 & 1.00 & 1.00 \\
		& $\ell_2$-error & 0.026 & 0.034 & 0.043 & 0.049 & 0.062 & 0.080 \\
		& Relative $\ell_2$-error($\times 10^{-2}$) & 0.178 & 0.171 & 0.160 & 0.151 & 0.149 & 0.138 \\ \hline
		\multirow{4}{*}{\begin{tabular}[c]{@{}c@{}}Dist\\ REL\end{tabular}} & Precision & 0.99 & 0.99 & 0.98 & 0.98 & 0.98 & 0.99 \\
		& Recall & 1.00 & 1.00 & 1.00 & 1.00 & 1.00 & 1.00 \\
		& $\ell_2$-error & 0.027 & 0.035 & 0.045 & 0.052 & 0.066 & 0.092 \\
		& Relative $\ell_2$-error($\times 10^{-2}$) & 0.185 & 0.180 & 0.169 & 0.161 & 0.160 & 0.158 \\ \hline
		\multirow{4}{*}{\begin{tabular}[c]{@{}c@{}}Avg\\ DC\end{tabular}} & Precision & 0.02 & 0.04 & 0.05 & 0.07 & 0.11 & 0.20 \\
		& Recall & 1.00 & 1.00 & 1.00 & 1.00 & 1.00 & 1.00 \\
		& $\ell_2$-error & 0.054 & 0.065 & 0.083 & 0.099 & 0.113 & 0.151 \\
		& Relative $\ell_2$-error($\times 10^{-2}$) & 0.365 & 0.329 & 0.311 & 0.305 & 0.273 & 0.260 \\ \hline
	\end{tabular}
\end{table}

\subsection{Computation Time Comparison}\label{sec:time}
We further study the computation efficiency of our proposed estimator. We fix the local sample size $m$, dimension $p$, and vary the sample size $n$. In Table \ref{time}, we report the $F_1$-score, $\ell_2$-error, and the computation time of  distributed REL, pooled REL, Avg-DC, and the $\ell_1$-regularized  QR estimator. To solve the $\ell_1$-regularized  QR estimator, we formulate it into a standard linear programming problem (LP) and solve it by Gurobi \citep{gurobi}, which is the state-of-the-art LP solver. We implement the three distributed algorithms (distributed REL, pooled REL and Avg-DC) in a fully synchronized distributed setting.
\begin{table}
	\caption{The $F_1$-score, $\ell_2$-error, and computation time of the distributed REL, pooled REL, Avg-DC, and $\ell_1$-regularized QR estimator under different sample size $n$. Noises are generated from Cauchy distribution. The local sample size is fixed to $m=500$.\label{time}}
	\centering
	\centering
	\begin{tabular}{c|ccc|ccc}
		\hline
		\multirow{2}{*}{$n$} & \multicolumn{3}{c|}{Dist REL} &\multicolumn{3}{c}{Pooled REL}  \\ \cline{2-7} 
		& $F_1$-score & $\ell_2$-error & Time & $F_1$-score & $\ell_2$-error & Time \\ \hline
		5000 & 0.95 & 0.137 & 0.40 &0.90 & 0.132 & 0.44\\
		10000 & 0.97 & 0.099 & 0.42 & 0.92 & 0.095 & 0.45\\
		15000 & 0.98 & 0.083 & 0.42 & 0.95 & 0.080 & 0.47 \\
		20000 & 0.99 & 0.074 & 0.44 & 0.96 & 0.071 & 0.48\\ 
		\hline
		\multirow{2}{*}{$n$} & \multicolumn{3}{c|}{Avg-DC} & \multicolumn{3}{c}{$\ell_1$-QR} \\ \cline{2-7} 
		& $F_1$-score & $\ell_2$-error & Time & $F_1$-score & $\ell_2$-error & Time \\ \hline
		5000 & 0.15 & 0.223 & 2.82 &  0.95&  0.132&  159.6\\
		10000 &  0.10 & 0.202 & 3.08 &  0.97&  0.091&  576.1\\
		15000 &  0.09 & 0.198 & 3.07 &  0.98&  0.077&  1223.1\\
		20000 & 0.08 & 0.192 & 3.15 &  0.99&  0.068&  2059.3\\ \hline
	\end{tabular}%
\end{table}

From Table \ref{time} we can see that the distributed REL is much faster than the $\ell_1$-regularized QR estimator. In fact, for larger sample size (i.e., $n> 20000$), we cannot implement the $\ell_1$-regularized QR method due to memory and computation time issues. We also note that the computation time of the pooled REL is similar to the distributed version. This is  because for the comparison propose, simulated datasets can still be fully stored in memory, and thus pooled REL takes the advantage of solving the entire optimization problem in memory. For large-scale datasets that cannot be stored in memory, the pool REL is no longer applicable.

\section{Conclusions and Future Directions}\label{sec:conclusion}

In this paper, we address the problem of distributed estimation for high-dimensional linear model with the presence of heavy-tailed noise. The proposed method achieves the same convergence rate as the ideal case  with pooled data. Furthermore, we establish the support recovery guarantee of the proposed method. One key insight from this work is that a non-smooth loss can be transformed into a smooth one by constructing a new response. Our method is essentially an iterative refinement approach in a distributed environment, which is superior to the averaging divide-and-conquer scheme.

One important future direction is to further investigate the inference problem. We note that \cite{zhao2014general} first provide the inference result based on averaging de-biased QR local estimators. As we mentioned, this approach might suffer from heavy computational cost and requires a condition on the number of machines. It would be interesting to develop computationally efficient inference approaches without any restriction on the number of machines. Moreover, the idea of transforming to $\ell_1$-regularized least-squares problem and the iterative distributed implementation can be generalized other high-dimensional problems, e.g., $\ell_1$-regularized Huber regression in robust statistics. Our algorithm can also be  generalized to handle other sparsity-inducing penalties, such as SCAD or MCP \citep{Fan-Li01,Zhang10}. Deriving the corresponding theoretical results for other sparsity-inducing penalties would be another interesting future direction.

\newpage
\appendix

\section*{Appendix}
The appendix is organized as follows. In Section \ref{sec:proofsupp}, we provide the proof of the main results and some technical lemmas. In Section \ref{sec:simsupp}, we provide additional simulation studies for distributed REL.

\section{Proof of Results}\label{sec:proofsupp}
In this section, we provide the proofs of our main results and  some technical lemmas. 
\subsection{Proof of Proposition \ref{prop:0}}
\begin{repproposition}{prop:0}
	Assume the following conditions hold
	\begin{eqnarray*}
		|\A\be^{*}-\b|_{\infty}\leq \lambda_{n}/2,
	\end{eqnarray*}
	\begin{eqnarray*}
		\min_{\delta: |\delta|_{1}\leq c_{1}\sqrt{s}|\delta|_{2}}\frac{\delta^{\mathrm{T}}\A\delta}{|\delta|^{2}_{2}}\geq c_{2},\quad c_{1},c_{2}>0.
	\end{eqnarray*}
	where $s$ is the sparsity of $\be^{*}$, i.e., $s=\sum_{j=0}^{p}\ind{\beta^{*}_{j}\neq 0}$. Then we have
	\begin{eqnarray*}
		|\hbe-\be^{*}|_{2}\leq c\sqrt{s}\lambda_{n}, 
	\end{eqnarray*}
	for some constant $c>0$.
\end{repproposition}
\begin{proof}
	We first show that $\abs{\hbe-\be^*}_1\leq 4\sqrt{s} \abs{\hbe-\be^*}_2$. Let $S$ be the support of $\be$. By the definition of $\hbe$, we have
	\begin{equation*}
		\begin{aligned}
			\frac{1}{2}\hbe^{\rm T} \A\hbe - \hbe^{\rm T} \b -(\frac{1}{2}\be^{*\rm T} \A\be^{*} - \be^{*\rm T} \b) \leq & \lambda_n(\abs{\be^*}_1-\abs{\hbe}_1)\\
			= & \lambda_n (\abs{\be^*_S}_1 - \abs{\hbe_S}_1-\abs{\hbe_{S^C}}_1)\\
			\leq & \lambda_n\abs{(\be^*-\hbe)_S}_1-\lambda_n\abs{(\be^*-\hbe)_{S^C}}_1.
		\end{aligned}
	\end{equation*}
	
	Since $\A$ is non-negative definite, we have 
	\begin{equation*}
		\begin{aligned}
			\frac{1}{2}\hbe^{\rm T} \A\hbe - \hbe^{\rm T} \b -(\frac{1}{2}\be^{*\rm T} \A\be^{*} - \be^{*\rm T} \b) \geq & (\A\be^*-\b)(\hbe-\be^*)\\
			\geq& -\abs{\A \be^*-\b}_\infty\abs{\hbe-\be^*}_1\\
			\geq& -\lambda_n\abs{\hbe-\be^*}_1/2.
		\end{aligned}
	\end{equation*} 
	Combine the two inequalities and we get $\abs{(\hbe-\be^*)_{S^C}}_1 \leq 3\abs{(\hbe-\be^*)_S}_1$ and this implies $\abs{\hbe-\be^*}_1 \leq 4\abs{(\hbe-\be^*)_S}_1 \leq 4\sqrt{s}\abs{(\hbe-\be^*)_S}_2 \leq  4\sqrt{s}\abs{\hbe-\be^*}_2$.
	
	By the definition of $\hbe$ and the first order condition, we have $\abs{\A\hbe-\b}_\infty\leq \lambda_n$. Combine this with \eqref{cd1} and we have $\abs{\A(\hbe-\be^*)}_\infty \leq 2 \lambda_n$. Together with the condition \eqref{cd2} we have
	\[
	\abs{\hbe-\be^*}_2 \leq c(\hbe-\be^*)^{\rm T} \A(\hbe-\be^*) \leq 2c\lambda_n \abs{\hbe-\be^*}_1 \leq 8c \lambda_n\sqrt{s}\abs{\hbe-\be^*}_2.
	\]
\end{proof}

\subsection{Proof of Some Technical Lemmas}
In this section, we introduce some technical lemmas which will be used in our main proof.\\ 
Let 
\begin{eqnarray}\label{eq:Un}
	U_{n}=\underset{\Abs{\bbeta_{S}-\bbeta_{S}^{\ast}}_{2}\le a_{n}}{\sup}\Big{|}\frac{1}{n}\sum_{i=1}^{n}\left[\bX_{i}\Ind{e_{i}\le \bX_{i,S}^{\rm T}\left(\bbeta_{S}-\bbeta_{S}^{\ast}\right)}-\bX_{i}F\left( \bX_{i,S}^{\rm T}\left(\bbeta_{S}-\bbeta_{S}^{\ast}\right)\right)\right]\nonumber\\
	-\frac{1}{n}\sum_{i=1}^{n}\left[\bX_{i}\Ind{e_{i}\le 0}-\bX_{i}F\left(0\right)\right]\Big{|}_{\infty}.
\end{eqnarray}
\begin{lemma}\label{lem:Un}
	For any $\gamma>0$, there exists a constant $c>0$ such that
	\[
	\pr\left(U_{n}\ge c\sqrt{\frac{sa_{n}\log n}{n}}\right)=O\left(n^{-\gamma}\right).
	\]
\end{lemma}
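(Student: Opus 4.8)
The plan is to view $U_n$ as the supremum of a centered empirical process indexed by the $s$-dimensional ball $B=\{u\in\R^s:\,|u|_2\le a_n\}$, and to control it by a discretization-plus-Bernstein argument together with a union bound over the $p+1$ coordinates. Writing $u=\bbeta_S-\bbeta_S^{\ast}$ and
$$V_{nj}(u)=\frac1n\sum_{i=1}^n X_{ij}\big(\Ind{e_i\le \bX_{i,S}^{\rm T}u}-F(\bX_{i,S}^{\rm T}u)\big)-\frac1n\sum_{i=1}^n X_{ij}\big(\Ind{e_i\le 0}-F(0)\big),$$
we have $U_n=\max_{0\le j\le p}\sup_{u\in B}|V_{nj}(u)|$, and $\mathbb{E} V_{nj}(u)=0$ because $e_i$ is independent of $\bX_i$ and $F$ is the c.d.f. of $e_i$. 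The key quantitative observation, which is what produces the $\sqrt{a_n}$ factor in the claimed rate, is that each summand of $V_{nj}(u)$ has variance of order $a_n$: by (C1), $(\Ind{e_i\le t}-\Ind{e_i\le 0})^2$ has conditional expectation $|F(t)-F(0)|\le C_L|t|$, hence $\Var\big(X_{ij}(\Ind{e_i\le \bX_{i,S}^{\rm T}u}-\Ind{e_i\le0})\big)\le C_L\,\mathbb{E}[X_{ij}^2|\bX_{i,S}^{\rm T}u|]\le C a_n$ by (C4) and Cauchy--Schwarz.

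First I would truncate the sub-Gaussian envelope. Set $\tau_n=C_1\sqrt{\log n}$; by (C4) and a union bound, $\pr(\max_{i,j}|X_{ij}|>\tau_n)=O(n(p+1)e^{-c\tau_n^2})=O(n^{-\gamma})$ for $C_1$ large, so outside an event of probability $O(n^{-\gamma})$ we may replace every $X_{ij}$ by $X_{ij}\Ind{|X_{ij}|\le\tau_n}$ (the re-centering this induces is deterministically negligible). Next I would cover $B$ by an $\epsilon$-net $\mathcal N$ with $\epsilon=a_n n^{-A}$, so $|\mathcal N|\le(3n^A)^s$. For a net point $u^k$ and fixed $j$, each summand is bounded by $2\tau_n$ and has variance $\le Ca_n$ by the bound above; Bernstein's inequality then gives $\pr(|V_{nj}(u^k)|\ge t)\le 2\exp\!\big(-\tfrac{nt^2/2}{Ca_n+\tau_n t/3}\big)$. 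Taking $t=c\sqrt{sa_n(\log n)/n}$ and using $\tau_n t=o(a_n)$ (which follows from (C5), since $\sqrt{sm}\,\mathrm{polylog}(n)=o(n)$), the exponent is at least $c_3 c^2 s\log n$; a union bound over the $(p+1)|\mathcal N|\le(p+1)(3n^A)^s$ pairs $(j,u^k)$ costs a factor whose logarithm is $\le C_2 s\log n$ (using $p=O(n^\nu)$ and $s\ge 1$), so for $c$ large enough the grid maximum is $\le\tfrac12 c\sqrt{sa_n(\log n)/n}$ with probability $1-O(n^{-\gamma})$.

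It remains to bound the oscillation $\sup_{|u-u^k|_2\le\epsilon}|V_{nj}(u)-V_{nj}(u^k)|$. Here I would use the monotonicity of $t\mapsto\Ind{e_i\le t}$ together with the Lipschitz bound on $F$: on the truncation event $|\bX_{i,S}^{\rm T}u-\bX_{i,S}^{\rm T}u^k|\le|\bX_{i,S}|_2\,\epsilon\le\sqrt s\,\tau_n\epsilon$, so the indicator terms can differ only when $e_i$ falls in an interval of length $\le 2\sqrt s\,\tau_n\epsilon$, while the $F$-terms differ by at most $C_L\sqrt s\,\tau_n\epsilon$. A crude bound using $\sum_i X_{ij}^2\le n\tau_n^2$ after truncation, together with a short auxiliary Bernstein/union-bound argument for the number of $e_i$ lying in any such narrow band, shows this oscillation is $O_{\pr}(n^{-A/2}\,\mathrm{polylog}(n))$, which is $o(\sqrt{sa_n(\log n)/n})$ once $A$ is chosen large. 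Combining the three pieces --- truncation, grid maximum, and oscillation --- yields $\pr(U_n\ge c\sqrt{sa_n(\log n)/n})=O(n^{-\gamma})$.

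I expect the delicate point to be the oscillation/continuity-in-$u$ step carried out with an unbounded envelope: choosing the truncation level $\tau_n$ and the net resolution $\epsilon$ consistently so that the discretization error does not erode the sharp $\sqrt{a_n}$ factor. The variance computation that delivers that $\sqrt{a_n}$ (rather than a constant) is the other place where care is needed, since it is precisely this gain that makes the recursion in Theorems \ref{thm:betainf}--\ref{thm:betainft} close.
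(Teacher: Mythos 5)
Your proposal is correct and follows essentially the same route as the paper's proof: discretize the ball $\{|\bbeta_S-\bbeta_S^*|_2\le a_n\}$ into an $n^{-M}$-resolution grid of cardinality $n^{O(s)}$, exploit the bounded density of $e$ to get the per-point variance of order $a_n$, apply an exponential inequality plus a union bound over grid points and the $p+1$ coordinates, and control the off-grid oscillation via the narrow-band indicator (the paper's $\bG_{i,k}$) together with the Lipschitz bound on $F$. The only difference is cosmetic: you truncate the covariates at the $\sqrt{\log n}$ level to use bounded Bernstein, whereas the paper invokes the exponential inequality of Cai and Liu directly for the unbounded summands.
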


\begin{proof}[Proof of Lemma \ref{lem:Un}]	
	Let
	\begin{eqnarray*}
		\bC_{nj}(\bbeta)&=&\frac{1}{n}\sum_{k=1}^{n}\left[X_{kj}\Ind{e_{k}\le \bX_{k,S}^{\rm T}\left(\bbeta_{S}-\bbeta_{S}^{\ast}\right)}-X_{kj}F\left( \bX_{k,S}^{\rm T}\left(\bbeta_{S}-\bbeta_{S}^{\ast}\right)\right)\right]\cr
		& &-\frac{1}{n}\sum_{k=1}^{n}\left[X_{kj}\Ind{e_{k}\le 0}-X_{kj}F\left(0\right)\right].
	\end{eqnarray*}
	For notation briefness, we denote $\bbeta^{*}_{S}=(\beta^{*}_{1},\dots,\beta^{*}_{s})^{\rm T}$. For every $i$, we divide the interval $[\beta^{*}_{i}-a_{n},\beta^{*}_{i}+a_{n}]$ into $n^{M}$ small subintervals and each has length $2a_{n}/n^{M}$, where $M$ is a large positive constant. Therefore, there exists a set of points in $\mbR^{p+1}$, $\{\bbeta_{k},1\le k\le q_{n}\}$ with $q_{n}\le n^{Ms}$, such that for any $\bbeta$ in the ball $\abs{\bbeta_{S}-\bbeta^{*}_{S}}_{2} \le a_{n}$, we have $\abs{\bbeta_{S}-\bbeta_{k,S}}_{2}\le 2\sqrt{s}a_{n}/n^{M}$ for some $1\le k\le q_{n}$ and 
	$|\bbeta_{k,S}-\bbeta^{*}_{S}|_{2}\leq a_{n}$. We can see that
	\[
	\Abs{F\left( \bX_{i,S}^{\rm T}\left(\bbeta_{k,S}-\bbeta^{*}_{S}\right)\right)-F\left( \bX_{i,S}^{\rm T}\left(\bbeta_{S}-\bbeta^{*}_{S}\right)\right)}\le C \sqrt{s}a_{n}n^{-M}\Abs{\bX_{i,S}}_{2},
	\]
	and
	\begin{align*}
		&\Abs{\Ind{e_{i}\le \bX_{i,S}^{\rm T}\left(\bbeta_{k,S}-\bbeta^{*}_{S}\right)}-\Ind{e_{i}\le \bX_{i,S}^{\rm T}\left(\bbeta_{S}-\bbeta^{*}_{S}\right)}}\nonumber\\\le&
		\Ind{\bX_{i,S}^{\rm T}\left(\bbeta_{k,S}-\bbeta^{*}_{S}\right)-2\Abs{\bX_{i,S}}_{2}\sqrt{s}a_{n}n^{-M}\le e_{i}\le \bX_{i,S}^{\rm T}\left(\bbeta_{k,S}-\bbeta^{*}_{S}\right)+2\Abs{\bX_{i,S}}_{2}\sqrt{s}a_{n}n^{-M}}\\
		=:&\bG_{i,k}.
	\end{align*}
	Denote the right hand of the above equation by $\bG_{i,k}$ and let $\mbE_{*}(\cdot)$ be the conditional expectation
	given $\{\bX_{i},1\leq i\leq n\}$. Then we have
	\begin{equation*}
		\begin{aligned}
			\mbE_{*}\left(\bG_{i,k}\right)=&F\left(\bX_{i,S}^{\rm T}\left(\bbeta_{k,S}-\bbeta^{*}_{S}\right)+2\Abs{\bX_{i,S}}_{2}\sqrt{s}a_{n}n^{-M}\right)\\&-F\left(\bX_{i,S}^{\rm T}\left(\bbeta_{k,S}-\bbeta^{*}_{S}\right)-2\Abs{\bX_{i,S}}_{2}\sqrt{s}a_{n}n^{-M}\right).
		\end{aligned}
	\end{equation*}

	It is straightforward to conclude that $|\mbE(|X_{ij}|\bG_{i,k})|\le C\sqrt{s}a_{n}n^{-M}\mbE |X_{ij}|\abs{\bX_{i,S}}_{2}\leq Csa_{n}n^{-M}$ and
	$\mbE(X^{2}_{ij}\bG^{2}_{i,k})\leq Csa_{n}n^{-M}$. By the exponential inequality, we can obtain that for any large $\gamma$, there exists a constant $c$ such that
	\begin{eqnarray*}
		\sup_{k}\mbP\Big{(}\frac{1}{n}\Big{|}\sum_{i=1}^{n}(|X_{ij}|\bG_{ik}-\mbE|X_{ij}|\bG_{ik})\Big{|}\geq c\sqrt{\frac{sa_{n}\log n}{n}}\Big{)}\leq
		Cn^{-\gamma s}.
	\end{eqnarray*}
	Note that
	\begin{eqnarray*}
		\sup_{\Abs{\bbeta_{S}-\bbeta^{*}_{S}}_{2}\le a_{n}}\Abs{C_{n,j}\left(\bbeta\right)}-\sup_{k}\Abs{C_{n,j}\left(\bbeta_{k}\right)}
		&\leq& C \sqrt{s}a_{n}n^{-M}\frac{1}{n}\sum_{i=1}^{n}\Abs{\bX_{i,S}}_{2}\\
		&&+\frac{1}{n}\Big{|}\sum_{i=1}^{n}(|X_{ij}|\bG_{ik}-\mbE|X_{ij}|\bG_{ik})\Big{|}\cr
		& &+\frac{1}{n}\Big{|}\sum_{i=1}^{n}\mbE(|X_{ij}|\bG_{ik})\Big{|}.	
	\end{eqnarray*}
	Therefore
	\begin{equation}\label{eqn:Cdiff}
		\sup_{j}\Big{[}\sup_{\Abs{\bbeta_{S}-\bbeta^{*}_{S}}_{2}\le a_{n}}\Abs{C_{n,j}\left(\bbeta\right)}-\sup_{k}\Abs{C_{n,j}\left(\bbeta_{k}\right)}\Big{]}=O_{\pr}\left(\sqrt{\frac{sa_{n}\log n}{n}}\right).
	\end{equation}
	It is enough to show that $\sup_{j}\sup_{k}\abs{C_{n,j}(\bbeta_{k})}$ satisfies the bound in the lemma.  Since the density function of $e_{k}$ is bounded, we have
	\begin{eqnarray*}
		\mbE(C_{n,j}(\bbeta_{k}))^{2}\leq Cn^{-1}\Abs{\bbeta_{k,S}-\bbeta^{*}_{S}}_{2}\leq Cn^{-1}a_{n}.
	\end{eqnarray*}
	By the exponential inequality (Lemma 1 in \cite{Cai-Liu11}) and the fact that $\sqrt{s\log n}=o(\sqrt{na_{n}})$, we have
	\[
	\sup_{j}\sup_{k}\mbP\left(\Abs{\bC_{n,j}\left(\bbeta_{k}\right)}\ge C\sqrt{\frac{sa_{n}\log n}{n}}\right)= O \left(n^{-\gamma s}\right).
	\]
	We complete the proof of the lemma.
\end{proof}
\begin{lemma}\label{prop:f0}
	Assume that (C1)-(C6) hold. Let $\abs{\bhbeta_{0}-\bbeta^{*}}_{2}=O_{\pr}(a_{n})$ and $\pr( supp(\bhbeta_{0})\subseteq S)$ $\rightarrow 1$. Let  $h\geq cs(\log n)/n$ for some $c>0$ and $h=O(a_{n})$. We have 
	\[
	\Abs{\widehat{f}\left(0\right)-f\left(0\right)}=O_{\pr}\left(\sqrt{\frac{s\log n}{nh}}+a_{n}\right).
	\]
\end{lemma}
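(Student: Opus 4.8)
The plan is to condition on the high‑probability event $\mathcal{A}_n := \{\Abs{\bhbeta_0-\bbeta^*}_2 \le C a_n,\ \text{supp}(\bhbeta_0)\subseteq S\}$, which by condition (C6) satisfies $\pr(\mathcal{A}_n)\to 1$, so it suffices to bound $\Abs{\widehat{f}(0)-f(0)}$ on $\mathcal{A}_n$. On $\mathcal{A}_n$ the residuals take the form $Y_i-\bX_i^{\rm T}\bhbeta_0 = e_i - \bX_{i,S}^{\rm T}\bm\gamma$ with $\bm\gamma:=(\bhbeta_0-\bbeta^*)_S$ and $\Abs{\bm\gamma}_2\le Ca_n$, hence
\[
\Abs{\widehat{f}(0)-f(0)} \;\le\; \sup_{\Abs{\bm\gamma}_2\le Ca_n}\Abs{\frac{1}{nh}\sum_{i=1}^n K\Big(\frac{e_i-\bX_{i,S}^{\rm T}\bm\gamma}{h}\Big)-f(0)}.
\]
I would split the quantity inside the supremum as $T_{1,n}(\bm\gamma)+T_{2,n}(\bm\gamma)$, where $T_{2,n}(\bm\gamma):=\ep\big[\tfrac1h K((e-\bX_{S}^{\rm T}\bm\gamma)/h)\big]-f(0)$ is a deterministic bias term and $T_{1,n}(\bm\gamma)$ collects the difference between the empirical average and its expectation. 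The two contributions will come out as $O(a_n)$ and $O_{\pr}(\sqrt{s\log n/(nh)})$, giving the claim.

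For the bias, a change of variables gives $\ep\big[\tfrac1h K((e-a)/h)\,\big|\,\bX\big]=\int K(v)f(a+hv)\,\mathrm{d}v$ with $a=\bX_S^{\rm T}\bm\gamma$; since $\int K=1$, $K$ is supported on $[-1,1]$ (C3), and $f$ is bounded and Lipschitz (C1), this equals $f(\bX_S^{\rm T}\bm\gamma)+O(h)$ uniformly in $a$. Taking expectations over $\bX$ and using $\ep\Abs{\bX_S^{\rm T}\bm\gamma}\le (\bm\gamma^{\rm T}\bSigma_{S\times S}\bm\gamma)^{1/2}\le\sqrt{\Lambda_{\max}(\bSigma)}\,\Abs{\bm\gamma}_2\le\sqrt{c_0}\,\Abs{\bm\gamma}_2$ together with $\Abs{f(\bX_S^{\rm T}\bm\gamma)-f(0)}\le C_L\Abs{\bX_S^{\rm T}\bm\gamma}$ yields $\sup_{\Abs{\bm\gamma}_2\le Ca_n}\Abs{T_{2,n}(\bm\gamma)}=O(h+a_n)=O(a_n)$, the last equality using the hypothesis $h=O(a_n)$.

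The main work is the uniform control of $\sup_{\Abs{\bm\gamma}_2\le Ca_n}\Abs{T_{1,n}(\bm\gamma)}$, and here I would reuse the discretization device from the proof of Lemma \ref{lem:Un}. Cover the $s$‑dimensional ball $\{\Abs{\bm\gamma}_2\le Ca_n\}$ (recall $\bm\gamma$ is supported on $S$) by points $\{\bm\gamma_\ell:1\le\ell\le q_n\}$ with $q_n\le n^{Ms}$ and mesh $O(\sqrt{s}\,a_n n^{-M})$. Since $K$ is differentiable with bounded derivative (C3), the oscillation of both $\frac{1}{nh}\sum_i K((e_i-\bX_{i,S}^{\rm T}\bm\gamma)/h)$ and its mean between $\bm\gamma$ and the nearest $\bm\gamma_\ell$ is at most $\|K'\|_\infty h^{-1}$ times $\Abs{\bm\gamma-\bm\gamma_\ell}_2$ times an $O_{\pr}(\sqrt{s})$ average of $\Abs{\bX_{i,S}}_2$; using $h\ge cs(\log n)/n$, this is negligible once $M$ is large, so it suffices to bound $\max_{\ell}\Abs{T_{1,n}(\bm\gamma_\ell)}$. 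At a fixed $\bm\gamma_\ell$ the summands $W_i:=\tfrac1h K((e_i-\bX_{i,S}^{\rm T}\bm\gamma_\ell)/h)$ are i.i.d., bounded by $\|K\|_\infty/h$ (finite since $K$ is Lipschitz with compact support), and satisfy $\Var(W_i)=O(1/h)$ by the same change of variables and boundedness of $f$ and $K$. Bernstein's inequality then gives, for each $\ell$, a deviation of order $\sqrt{\log n/(nh)}+\log n/(nh)$; a union bound over the $q_n\le n^{Ms}$ net points replaces $\log n$ by $s\log n$, and the hypothesis $h\ge cs(\log n)/n$ ensures $s\log n/(nh)\le\sqrt{s\log n/(nh)}$, so the square‑root term dominates. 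Combining with the bias bound yields the lemma.

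I expect the delicate points to be (i) correctly exploiting the randomness of $\bhbeta_0$ by passing to the event $\mathcal{A}_n$ and arguing uniformly over the $\bm\gamma$‑ball rather than substituting the data‑dependent $\bm\gamma$ directly, and (ii) the bookkeeping between the net resolution $n^{-M}$, the bandwidth lower bound $h\ge cs(\log n)/n$, and the target rate, so that both the discretization error and the Bernstein correction term are dominated. The bias expansion and the variance computation for $W_i$ are routine.
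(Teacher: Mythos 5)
Your proposal follows essentially the same route as the paper's proof: condition on the high-probability event from (C6), reduce to a supremum over the ball $\{\abs{\bm{\gamma}}_2\le Ca_n\}$, discretize with an $n^{Ms}$-point net exploiting the Lipschitz kernel (the paper's oscillation bound carries an $h^{-2}$ factor where you write $h^{-1}$, but either is made negligible by taking $M$ large), bound the bias by $O(h+a_n)=O(a_n)$ via a change of variables and the Lipschitz continuity of $f$, and control the fluctuation at the net points by a Bernstein-type exponential inequality plus a union bound, with $h\ge cs(\log n)/n$ ensuring the $\sqrt{s\log n/(nh)}$ term dominates. This is the same decomposition and the same technical machinery as in the paper, so the proposal is correct.
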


\begin{proof}[Proof of Lemma \ref{prop:f0}]
	Denote $\widehat{S}$=supp($\bhbeta_{0}$) and let
	\[
	D_{n,h}\left(\bbeta\right)=\frac{1}{nh}\sum_{i=1}^{n}K\left(\frac{Y_{i}-\bX_{i,S}^{\rm T}\bbeta_{S}}{h}\right).
	\]
	We have $\abs{\bbeta^{*}_{S}-\bhbeta_{0,S}}_{2}=O_{\pr}(a_{n})$. To prove the proposition,
	without loss of generality, we can assume that $\abs{\bbeta^{*}_{S}-\bhbeta_{0,S}}_{2}\leq a_{n}$ and $\widehat{S}\subseteq S$.
	It follows that $\widehat{f}(0)=D_{n,h}(\bhbeta_{0})$ and 	
	\[
	\Abs{\widehat{f}\left(0\right)-f\left(0\right)}\le\underset{\abs{\bbeta_{S}-\bbeta^{*}_{S}}_{2}\le a_{n}}{\sup}\Abs{D_{n,h}\left(\bbeta\right)-f\left(0\right)}.
	\]
	Recall the definition of $\{\bbeta_k,1\leq k \leq q_n\}$ in the proof of Lemma \ref{lem:Un}. We have 
	\[
	\Abs{\frac{1}{h}K\left(\frac{Y_{i}-\bX_{i,S}^{\rm T}\bbeta_{S}}{h}\right)-\frac{1}{h}K\left(\frac{Y_{i}-\bX_{i,S}^{\rm T}\bbeta_{k,S}}{h}\right)}\le Ch^{-2}\Abs{\bX_{i,S}^{\rm T}\left(\bbeta_{S}-\bbeta_{k,S}\right)}.
	\]
	This yields that
	\[
	\underset{\abs{\bbeta_{S}-\bbeta^{*}_{S}}_{2}\le a_{n}}{\sup}\Abs{D_{n,h}\left(\bbeta\right)-f\left(0\right)}-\underset{1\leq k\leq q_{n}}{\sup}\Abs{D_{n,h}\left(\bbeta_{k}\right)-f\left(0\right)}\le\frac{C\sqrt{s}a_{n}}{n^{M+1}h^{2}}\sum_{i=1}^{n}\Abs{\bX_{i,S}}_{2}.
	\]
	Since $\max_{i,j}\mbE\abs{X_{i,j}}^{2}<\infty$ (due to the sub-Gaussian condition (C4)), for any $\gamma>0$, by letting $M$ large enough, we have 
	\begin{equation}\label{eqn:Ddiff}
		\underset{\abs{\bbeta_{S}-\bbeta^{*}_{S}}_{2}\le a_{n}}{\sup}\Abs{D_{n,h}\left(\bbeta\right)-f\left(0\right)}-\underset{1\leq k\leq n^{Ms}}{\sup}\Abs{D_{n,h}\left(\bbeta_{k}\right)-f\left(0\right)}=O_{\pr}\left(n^{-\gamma}\right).
	\end{equation}
	It is enough to show that $\sup_{k}\abs{D_{n,h}(\bbeta_{k})-\mbE D_{n,h}(\bbeta_{k})}$ and $\sup_{k}\abs{\mbE D_{n,h}(\bbeta_{k})-f(0)}$  satisfy the bound in the proposition. Let $\mbE_{\ast}(\cdot)$ denote the conditional expectation given $\{\bX_{k}\}$.
	We have
	\begin{align*}
		\mbE_{\ast}\left\{\frac{1}{h}K\left(\frac{e_{i}-\bX_{i,S}^{\rm T}\left(\bbeta_{S}-\bbeta^{*}_{S}\right)}{h}\right)\right\}=&\int_{-\infty}^{\infty}K\left(x\right)f\left\{hx+\bX_{i,S}^{\rm T}\left(\bbeta_{S}-\bbeta^{*}_{S}\right)\right\}dx\\
		=&f\left(0\right)+O\left(h+\Abs{\bX_{i,S}^{\rm T}\left(\bbeta_{S}-\bbeta^{*}_{S}\right)}\right).
	\end{align*}
	Since $\sup_{|\balpha|_{2}=1}\mbE |\balpha^{\rm T}\bX|\leq C$, we have
	\begin{align*}
		\Abs{\mbE D_{n,h}\left(\bbeta_{k}\right)-f\left(0\right)}\le C\left(h+\Abs{\bbeta_{k,S}-\bbeta^{*}_{S}}_{2}\right)=O(h+a_{n}).
	\end{align*}
	It remains to bound  $\sup_{k}\abs{D_{n,h}(\bbeta_{k})-\mbE D_{n,h}(\bbeta_{k})}$. Put	\[
	\xi_{i,k}=K\left(\frac{e_{i}-\bX_{i,S}^{\rm T}\left(\bbeta_{k,S}-\bbeta^{*}_{S}\right)}{h}\right).
	\]
	We have
	\begin{eqnarray*}
		\mbE_{\ast} \xi^{2}_{i,k}
		=h\int_{-\infty}^{\infty}\left\{K\left(x\right)\right\}^{2}f\left\{hx+\bX_{i,S}^{\rm T}\left(\bbeta_{k,S}-\bbeta^{*}_{S}\right)\right\}dx\le Ch.
	\end{eqnarray*}
	Since $K(x)$ is bounded, we have,  
	by the exponential inequality (Lemma 1 in \cite{Cai-Liu11}) and the fact that $s\log n=O(nh)$, for any $\gamma>0$, there exists a constant $C>0$ such that
	\[
	\sup_{k}\mbP\left(\left |\sum_{i=1}^{n}(\xi_{i,k}-\mbE \xi_{i,k}) \right |\ge C\sqrt{nhs\log n}\right)= O \left(n^{-\gamma s}\right).
	\]
	By letting $\gamma>M$, we can obtain that
	\[
	\Abs{\sup_{k}\abs{D_{n,h}(\bbeta_{k})-\mbE D_{n,h}(\bbeta_{k})}}=O_{\pr}\left(\sqrt{\frac{s\log n}{nh}}\right).
	\]
	This completes the proof.
\end{proof}

\begin{lemma} \label{lem:str}
	We have
	\begin{eqnarray*}
		\max_{1\leq j\leq p}\left\|n^{-1}\sum_{k=1}^{n}|X_{kj}|\bX_{k,S}\bX_{k,S}^{\rm T}\right\|_{\mathrm{op}}=O_{\pr}(1).
	\end{eqnarray*}
\end{lemma}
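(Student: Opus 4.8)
The plan is to control each $s\times s$ matrix $\bM_{j}:=n^{-1}\sum_{k=1}^{n}|X_{kj}|\bX_{k,S}\bX_{k,S}^{\rm T}$ by a covering argument on the unit sphere of $\mbR^{s}$, and then take a union bound over $j$. First I would fix $1\le j\le p$; since $\bM_{j}$ is symmetric and positive semidefinite, for a $1/4$-net $\mcN$ of the Euclidean unit sphere in $\mbR^{s}$ (which can be taken with $|\mcN|\le 9^{s}$) the standard estimate $\norm{\bM_{j}}_{\mathrm{op}}\le 2\max_{\bu\in\mcN}\bu^{\rm T}\bM_{j}\bu$ holds. Thus it suffices to show that $\bu^{\rm T}\bM_{j}\bu=n^{-1}\sum_{k=1}^{n}|X_{kj}|(\bu^{\rm T}\bX_{k,S})^{2}$ stays bounded, uniformly over $1\le j\le p$ and $\bu\in\mcN$, with probability tending to one.

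For fixed $j$ and $\bu$ the summands $W_{k}:=|X_{kj}|(\bu^{\rm T}\bX_{k,S})^{2}$ are i.i.d. and nonnegative, and by Cauchy--Schwarz together with the sub-Gaussian assumption (C4) (applied once with $\btheta$ a standard basis vector and once with $\btheta$ the zero-extension of $\bu$ to $\mbR^{p+1}$) one has $\mbE W_{k}\le(\mbE X_{kj}^{2})^{1/2}(\mbE(\bu^{\rm T}\bX_{k,S})^{4})^{1/2}\le C_{0}$ and, similarly, $\mbE W_{k}^{2}\le C_{0}$, with $C_{0}$ independent of $j,\bu,n$. Writing $W_{k}$ as the product of the three sub-Gaussian factors $|X_{kj}|$, $|\bu^{\rm T}\bX_{k,S}|$, $|\bu^{\rm T}\bX_{k,S}|$ and applying the generalized H\"older inequality for Orlicz norms, $W_{k}$ is sub-Weibull with exponent $2/3$ and uniformly bounded $\psi_{2/3}$-norm, so $\pr(W_{k}>x)\le 2\exp(-cx^{2/3})$.

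Next I would establish the concentration bound $\pr(\bu^{\rm T}\bM_{j}\bu\ge 2C_{0})\le\exp(-c\,n^{2/5})$, uniformly in $j,\bu$, by truncating the summands at level $M=n^{3/5}$. On the event that no $W_{k}$ exceeds $M$ — which has probability at least $1-2n\exp(-c\,n^{2/5})$ by the sub-Weibull tail — one has $n^{-1}\sum_{k}W_{k}=n^{-1}\sum_{k}(W_{k}\wedge M)$, and since $\mbE(W_{k}\wedge M)\le C_{0}$ and $\mathrm{Var}(W_{k}\wedge M)=O(1)$, Bernstein's inequality for bounded variables gives $\pr(n^{-1}\sum_{k}(W_{k}\wedge M)-\mbE(W_{k}\wedge M)\ge C_{0})\le 2\exp(-c\min(n,n/M))=2\exp(-c\,n^{2/5})$. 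Combining the two estimates yields the claim. A union bound over the $p\cdot|\mcN|\le p\,9^{s}$ pairs $(j,\bu)$ then gives
\[
\pr\Big(\max_{1\le j\le p}\norm{\bM_{j}}_{\mathrm{op}}\ge 4C_{0}\Big)\le p\,9^{s}\exp(-c\,n^{2/5}),
\]
and since (C5) forces $p=O(n^{\nu})$ and $s=O(m^{r})=O(n^{r})$ with $r<1/3$, we get $\log(p\,9^{s})=O(\log n)+O(n^{r})=o(n^{2/5})$, so the right-hand side tends to $0$; hence $\max_{1\le j\le p}\norm{\bM_{j}}_{\mathrm{op}}=O_{\pr}(1)$.

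The main obstacle is the concentration step: because each $W_{k}$ is only sub-Weibull with exponent $2/3$ rather than sub-exponential, no direct Bernstein bound applies, and the truncation level $M$ must be chosen to simultaneously make the Bernstein exponent $n/M$ dominate the entropy term $s\lesssim n^{r}$ of the net and make the discarded tail $n\exp(-cM^{2/3})$ negligible; any $M=n^{\theta}$ with $3r/2<\theta<1-r$ works because $r<1/3$, and $\theta=3/5$ is one convenient choice. (Alternatively, one may invoke an off-the-shelf Bernstein inequality for sums of independent sub-Weibull random variables to package this step.) The remaining ingredients — the net bound for the operator norm and the moment computations from (C4) — are routine.
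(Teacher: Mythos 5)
Your proposal is correct and follows essentially the same route as the paper's proof: a $1/4$-net of the unit sphere in $\mbR^{s}$ with cardinality at most $9^{s}$, Cauchy--Schwarz plus the sub-Gaussian condition (C4) to bound the first two moments of $|X_{kj}|(\bu^{\rm T}\bX_{k,S})^{2}$, a truncation step so that Bernstein's inequality for bounded variables applies, and a union bound over the $p\cdot 9^{s}$ pairs made harmless by $p=O(n^{\nu})$ and $s=O(m^{r})$ with $r<1/3$. The only difference is cosmetic: you truncate the product at $n^{3/5}$ using its sub-Weibull$(2/3)$ tail, while the paper truncates $X_{kj}$ at $\log n$ and the product at $(s+1)(\log n)^{3}$, yielding exponents $n^{2/5}$ versus $n/((s+1)(\log n)^{3})$, both of which dominate the entropy term.
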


\begin{proof}[Proof of Lemma \ref{lem:str} ] 
	
	For a unit ball $B$ in $\R^{s}$, we have the fact that there exist $q_{s}$ balls with centers $\x_{1},\ldots,\x_{q_{s}}$ and radius $z$ (i.e., $B_{i}=\{\x\in \R^{s}:|\x-\x_{i}|\leq z\}$, $1\leq i\leq q_{s}$) such that $B\subseteq \cup_{i=1}^{q_{s}}B_{i}$ and $q_{s}$ satisfies $q_{s}\leq (1+2/z)^{s}$. 
	So for any $|\x|_{2}=1$ in the unit sphere, there exists some $\x_{i}$ such that $|\x-\x_{i}|_{2}\leq z$ and so this $\x_{i}$ satisfies $1-z\leq |\x_{i}|_{2}\leq 1+z$.
	Therefore,  there exists a subset $K\subset\{1,2,\ldots,q_{s}\}$ such that $\{\x:|\x|_{2}=1\}\subseteq \cup_{i\in K}B_{i}$ and $1-z\leq |\x_{i}|_{2}\leq 1+z$ for $i\in K$.
	We have $d_{s}:=|K|\leq q_{s}\leq (1+2/z)^{s}$.
	
	For any $s\times s$ symmetric matrix $\A$, we have 
	\begin{eqnarray*}
		|\x^{\mathrm{T}}\A\x|-|\y^{\mathrm{T}}\A\y|\leq |(\x-\y)^{\mathrm{T}}\A(\x+\y)|.
	\end{eqnarray*}
	So $\norm{\A}_{\mathrm{op}}=\sup_{|\x|_{2}=1}|\x^{\mathrm{T}}\A\x|\leq \max_{i\in K}|\x^{\mathrm{T}}_{i}\A\x_{i}|+ z(2+z)\norm{\A}_{\mathrm{op}}$. Now take $z=1/4$, we have 
	$\norm{\A}_{\mathrm{op}}\leq 3\max_{i\in K}|\x^{\mathrm{T}}_{i}\A\x_{i}|$ and $d_{s}\leq 9^{s}$. It is enough to prove that
	\begin{eqnarray*}
		\max_{1\leq j\leq p}\max_{i\in K}\frac{1}{n}\sum_{k=1}^{n}|X_{kj}|(\x^{\mathrm{T}}_{i}\bX_{k,S})^{2}=O_{\pr}(1).
	\end{eqnarray*}
	Define $\widehat{X}_{kj}=X_{kj}\ind{|X_{kj}|\leq \log n}$.
	By the sub-Gaussian condition on $\X$, it is enough to show that
	\begin{eqnarray*}
		\max_{1\leq j\leq p}\max_{i\in K}\frac{1}{n}\sum_{k=1}^{n}|\widehat{X}_{kj}|(\x^{\mathrm{T}}_{i}\bX_{k,S})^{2}=O_{\pr}(1).
	\end{eqnarray*}
	Set 
	\begin{eqnarray*}
		Y_{kij}=|\widehat{X}_{kj}|(\x^{\mathrm{T}}_{i}\bX_{k,S})^{2} \ind{|\widehat{X}_{kj}|(\x^{\mathrm{T}}_{i}\bX_{k,S})^{2}\leq (s+1)(\log n)^{3}}.
	\end{eqnarray*}
	Note that
	\begin{eqnarray*}
		np9^{s}\max_{k,j}\max_{i\in K}\pr\left(|\widehat{X}_{kj}|(\x^{\mathrm{T}}_{i}\bX_{k,S})^{2}\geq (s+1)(\log n)^{3}\right)=o(1).
	\end{eqnarray*}
	It suffices to prove that $\max_{1\leq j\leq p}\max_{i\in K}\frac{1}{n}\sum_{k=1}^{n}Y_{kij}=O_{p}(1)$. It is easy to see that
	$\mbE Y_{kij}\leq \mbE |X_{kj}|(\x^{\mathrm{T}}_{i}\bX_{k,S})^{2}\leq C(\mbE X^{2}_{kj})^{1/2}\sup_{|\x|_{2}=1}(\mbE (\x^{\mathrm{T}}\bX_{k,S})^{4})^{1/2}=O(1)$ and similarly, 
	$\mbE Y^{2}_{kij}=O(1)$, uniformly in $k,i,j$.
	By Bernstein's inequality,
	\begin{eqnarray*}
		\pr\left(\Big{|}\frac{1}{n}\sum_{k=1}^{n}(Y_{kij}-\mbE Y_{kij})\Big{|}\geq 1\right)\leq e^{-c_{1}n}+e^{-c_{2}\frac{n}{(s+1)(\log n)^{3}}},
	\end{eqnarray*}
	for some positive constants $c_{1}$ and $c_{2}$ uniformly in $i,j$. Since $s=O(m^{r})$ for some $0<r<1/3$, we have 
	\begin{eqnarray*}
		np9^{s}\left(e^{-c_{1}n}+e^{-c_{2}\frac{n}{(s+1)(\log n)^{3}}}\right)=o(1).
	\end{eqnarray*}
	This proves $\max_{1\leq j\leq p}\max_{i\in K}\frac{1}{n}\sum_{k=1}^{n}Y_{kij}=O_{p}(1)$.
\end{proof}

\subsection{Proof of Theorem \ref{thm:betainf} and Theorem \ref{thm:betainft}}
We first state a proposition for the proof of our main theorems.

\begin{proposition}\label{prop:Bnbeta0}
	Assume that (C1)-(C6) hold. Let $\abs{\bhbeta_{0}-\bbeta^{*}}_{2}=O_{\pr}(a_{n})$ and $h\asymp a_{n}$. We have 
	\[
	\Abs{\z_{n}-\bhSigma\bbeta^{*}}_{\infty}=O_{\pr}\left(\sqrt{\frac{\log n}{n}}+a_{n}^{2}\right).
	\]
\end{proposition}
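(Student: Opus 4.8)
The plan is to decompose $\z_n - \bhSigma\bbeta^*$ into terms that can each be controlled using the technical lemmas already established. Recall that $\tY_i = \bX_i^{\rm T}\bhbeta_0 - \widehat{f}^{-1}(0)(\ind{Y_i \le \bX_i^{\rm T}\bhbeta_0} - \tau)$, so that
\[
\z_n - \bhSigma\bbeta^* = \frac{1}{n}\sum_{i=1}^n \bX_i\bX_i^{\rm T}(\bhbeta_0 - \bbeta^*) - \widehat{f}^{-1}(0)\cdot\frac{1}{n}\sum_{i=1}^n \bX_i\left(\ind{Y_i \le \bX_i^{\rm T}\bhbeta_0} - \tau\right).
\]
Since $Y_i \le \bX_i^{\rm T}\bhbeta_0$ is equivalent to $e_i \le \bX_i^{\rm T}(\bhbeta_0 - \bbeta^*)$, and since under (C6) we may restrict to the event $\mathrm{supp}(\bhbeta_0)\subseteq S$ and $|\bhbeta_{0,S}-\bbeta^*_S|_2 \le a_n$, the indicator becomes $\ind{e_i \le \bX_{i,S}^{\rm T}(\bhbeta_{0,S}-\bbeta^*_S)}$. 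First I would add and subtract the population CDF $F(\bX_{i,S}^{\rm T}(\bhbeta_{0,S}-\bbeta^*_S))$ and the term $F(0) = \tau$, writing the centered empirical-process piece plus a smooth piece. The empirical-process piece is exactly controlled by Lemma \ref{lem:Un}, giving a bound of order $\sqrt{sa_n\log n/n}$; the smoothness piece, via a Taylor expansion of $F$ using (C1) (Lipschitz density), contributes $\frac{1}{n}\sum_i \bX_i f(0)\bX_{i,S}^{\rm T}(\bhbeta_{0,S}-\bbeta^*_S) + O(\cdot)$ where the remainder is $O(\frac{1}{n}\sum_i |\bX_i| |\bX_{i,S}^{\rm T}(\bhbeta_{0,S}-\bbeta^*_S)|^2)$, bounded using Lemma \ref{lem:str} by $O_{\pr}(a_n^2)$.

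The key cancellation is this: the leading smooth term $\widehat{f}^{-1}(0) f(0)\cdot\frac{1}{n}\sum_i \bX_i\bX_{i,S}^{\rm T}(\bhbeta_{0,S}-\bbeta^*_S)$, when $\widehat{f}(0)$ is replaced by $f(0)$, exactly matches $\frac{1}{n}\sum_i \bX_i\bX_i^{\rm T}(\bhbeta_0-\bbeta^*)$ (recall the support restriction makes $\bX_i^{\rm T}(\bhbeta_0-\bbeta^*) = \bX_{i,S}^{\rm T}(\bhbeta_{0,S}-\bbeta^*_S)$), so these two terms cancel. What remains from the density-estimation error is $(\widehat{f}^{-1}(0) - f^{-1}(0))\cdot f(0)\cdot\bhSigma_{\cdot S}(\bhbeta_{0,S}-\bbeta^*_S)$, whose sup-norm I would bound by $|\widehat{f}(0)^{-1}-f(0)^{-1}|\cdot|\bhSigma_{\cdot S}(\bhbeta_{0,S}-\bbeta^*_S)|_\infty$. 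Using Lemma \ref{prop:f0} with $h \asymp a_n$ (so $\sqrt{s\log n/(nh)} \le \sqrt{s\log n/(na_n)} \lesssim a_n$ since $a_n^2 \asymp s\log n/m \ge s\log n/n$), we get $|\widehat{f}(0)-f(0)| = O_{\pr}(a_n)$, hence $|\widehat{f}(0)^{-1}-f(0)^{-1}| = O_{\pr}(a_n)$ because $f(0) > c > 0$ by (C1); and $|\bhSigma_{\cdot S}(\bhbeta_{0,S}-\bbeta^*_S)|_\infty = O_{\pr}(\sqrt{s}\,a_n)$ or so using $\bhSigma \approx \bSigma$ and $|\bhbeta_{0,S}-\bbeta^*_S|_2 \le a_n$. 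Finally the $\tau$-centering with $F(0)=\tau$ and the piece $\frac{\widehat{f}^{-1}(0)}{n}\sum_i \bX_i(\ind{e_i\le 0}-\tau)$ is an i.i.d. mean-zero average, bounded by $O_{\pr}(\sqrt{\log n/n})$ via a standard sub-exponential tail bound (the Cai--Liu exponential inequality), multiplied by the bounded factor $\widehat{f}^{-1}(0)$.

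Collecting the pieces: the final bound is $O_{\pr}(\sqrt{\log n/n} + \sqrt{sa_n\log n/n} + a_n^2 + a_n\cdot\sqrt{s}a_n)$, and under (C5) the middle cross terms are dominated by $\sqrt{\log n/n} + a_n^2$ (e.g., $\sqrt{sa_n\log n/n} = \sqrt{s\log n/n}\cdot a_n^{1/2} \le a_n \cdot a_n^{1/2} \cdot \sqrt{m/n}^{\,?}$ — more carefully, $\sqrt{sa_n\log n/n} \lesssim \max\{\sqrt{\log n/n}, a_n^2\}$ follows from $sa_n\log n/n \le \max\{\log n/n, a_n^4\}$, which holds since either $sa_n \le 1$ giving the first, or else $s a_n \log n / n \le s^2 a_n^2 (\log n/n) \le a_n^2 \cdot s^2\log n/n \le a_n^2 \cdot a_n^2$ using $s^2\log n/n \le s^2\log n/m \asymp a_n^2$). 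I expect the main obstacle to be the bookkeeping around the density-estimation error term: one must verify that the factor multiplying $(\widehat f(0)^{-1} - f(0)^{-1})$ is genuinely $O_{\pr}(\sqrt{s}\,a_n)$ in sup-norm (not $O_{\pr}(1)$), so that the product is $O_{\pr}(\sqrt{s}\,a_n^2) = o(a_n^2\cdot\text{something})$ — actually this gives $\sqrt{s}\,a_n^2$ which must be absorbed, and indeed $\sqrt s a_n^2 \lesssim a_n^2 \cdot \sqrt s$ is NOT automatically $O(a_n^2)$; here one needs to be careful and perhaps keep the sharper statement, or re-examine whether the correct bound on $|\bhSigma_{\cdot S}(\bhbeta_{0,S}-\bbeta^*_S)|_\infty$ is $O_{\pr}(a_n)$ rather than $O_{\pr}(\sqrt s a_n)$ by exploiting row-wise concentration of $\bhSigma - \bSigma$ together with $|\bSigma_{\cdot S} v|_\infty \le \|\bSigma\|_{\max}|v|_1 \le c_0\sqrt s |v|_2$. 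Reconciling this so that everything lands within $\sqrt{\log n/n} + a_n^2$ is the delicate part, and I would handle it by carrying the intermediate bounds symbolically and invoking (C5) only at the very end.
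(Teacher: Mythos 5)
Your route is the paper's route: the same decomposition of $\z_n-\bhSigma\bbeta^*$, the same centering at $F(\bX_{i,S}^{\rm T}(\bhbeta_{0,S}-\bbeta^*_S))$ and $F(0)=\tau$, control of the empirical-process part by Lemma \ref{lem:Un}, of the Taylor remainder by Lemma \ref{lem:str}, of $\widehat f(0)$ by Lemma \ref{prop:f0}, and the cancellation of the leading smooth term against $\bhSigma(\bhbeta_0-\bbeta^*)$ up to the factor $\widehat f^{-1}(0)f(0)-1$. The genuine gap is exactly the point you flag at the end and do not close: you must show that the quantity multiplying the density-estimation error, namely $\Abs{\frac{1}{n}\sum_{k}\bX_k\bX_{k,S}^{\rm T}(\bbeta^*_S-\bhbeta_{0,S})}_{\infty}$, is $O_{\pr}(a_n)$ and not $O_{\pr}(\sqrt{s}\,a_n)$. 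With the weaker $\sqrt{s}\,a_n$ bound your collected error contains $\sqrt{s}\,a_n^2$ (and also $s\sqrt{a_n\log n/n}$), and these are \emph{not} dominated by $\sqrt{\log n/n}+a_n^2$ under (C5) alone (e.g.\ $m=n^{0.9}$, $s\asymp m^{0.3}$ gives $\sqrt{s}\,a_n^2\gg \sqrt{\log n/n}$), so the proposition's rate would be lost. Moreover the remedy you sketch, $\abs{\bSigma_{\cdot S}v}_{\infty}\le\abs{\bSigma}_{\infty}\abs{v}_1\le c\sqrt{s}\abs{v}_2$, still carries the $\sqrt s$. The paper's fix is elementary but different: coordinatewise, $\abs{\bSigma_{j,S}(\bbeta^*_S-\bhbeta_{0,S})}\le\abs{\bSigma_{j,S}}_2\abs{\bbeta^*_S-\bhbeta_{0,S}}_2\le\Lambda_{\max}(\bSigma)\,a_n\le c_0a_n$ by (C2), while the sample deviation contributes $\abs{\bhSigma-\bSigma}_{\infty}\abs{\bbeta^*_S-\bhbeta_{0,S}}_1=O_{\pr}\bigl(\sqrt{\log n/n}\cdot\sqrt{s}\,a_n\bigr)=o_{\pr}(a_n)$; hence the multiplier is $O_{\pr}(a_n)$ with no $\sqrt{s}$.

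A second, smaller flaw: your claim $\abs{\widehat f(0)-f(0)}=O_{\pr}(a_n)$ is not justified. With $h\asymp a_n$ the variance term in Lemma \ref{prop:f0} is $\sqrt{s\log n/(nh)}$, and your inequality would require $s\log n/n\lesssim a_n^{3}$, which fails in admissible regimes (e.g.\ $m=n^{0.9}$, $s=n^{0.1}$). The correct bookkeeping, as in the paper, keeps the product form: $\bigl(\sqrt{s\log n/(nh)}+a_n\bigr)\cdot O_{\pr}(a_n)=O_{\pr}\bigl(\sqrt{sa_n\log n/n}+a_n^2\bigr)$, and $\sqrt{sa_n\log n/n}=o\bigl(\sqrt{\log n/n}\bigr)$ because $sa_n=o(1)$ under (C5) — the same observation that also absorbs the Lemma \ref{lem:Un} cross term $\sqrt{sa_n\log n/n}$ directly, without the case analysis you attempted. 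Once you replace the $\sqrt{s}\,a_n$ bound by the $c_0a_n$ bound above and keep the density error in product form, your argument coincides with the paper's proof and yields the stated rate.
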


\begin{proof}[Proof of Proposition \ref{prop:Bnbeta0}]
	Recall the definition of $U_n$ in \eqref{eq:Un}.
	For the initial estimator, we have $\bhbeta_{0,S^{c}}=0$ with high probability. Due to the fact that $\bbeta^{*}_{S^{c}}=\bm{0}$ and $\bbeta_{0,S^{c}}=\bm{0}$, by $\abs{\bbeta^{*}-\bhbeta_{0}}_{2}=O_{\pr}(a_{n})$, we have 
	\begin{eqnarray*}
		&&\Abs{\z_{n}-\bhSigma\bbeta^{*}}_{\infty}\cr &=&\Abs{-\frac{\widehat{f}^{-1}\left(0\right)}{n}\sum_{k=1}^{n}\bX_{k}\left(\Ind{Y_{k}\le \bX_{k}^{\rm T}\bhbeta_{0}}-\tau\right)+\bhSigma\left(\bhbeta_{0}-\bbeta^{*}\right)}_{\infty}\\
		&\le& \Abs{\frac{\widehat{f}^{-1}\left(0\right)}{n}\sum_{k=1}^{n}\bX_{k}\left\{F\left( \bX_{k,S}^{\rm T}\left(\bbeta_{S}-\bhbeta_{0,S}\right)\right)-F\left(0\right)\right\}+\frac{1}{n}\sum_{k=1}^{n}\bX_{k}\bX_{k,S}^{\rm T}\left(\bhbeta_{0,S}-\bbeta^{*}_{S}\right)}_{\infty}\\
		&&+\Abs{\widehat{f}^{-1}(0)}\Abs{\frac{1}{n}\sum_{k=1}^{n}\left[\bX_{k}\Ind{e_{k}\le 0}-\bX_{k}F\left(0\right)\right]}_{\infty}+\Abs{\widehat{f}^{-1}\left(0\right)}U_{n}.
	\end{eqnarray*}
	For the last term, by Lemma \ref{lem:Un}, we have $\abs{\widehat{f}^{-1}\left(0\right)}U_{n}=O_{\pr}(\sqrt{s a_{n}(\log n)/n})$.
	For the second term of the right hand side, we have 
	\begin{eqnarray*}
		\Abs{\widehat{f}^{-1}(0)}\Abs{\frac{1}{n}\sum_{k=1}^{n}\left[\bX_{k}\Ind{e_{k}\le 0}-\bX_{k}F\left(0\right)\right]}_{\infty}=O_{\pr}\Big{(}\sqrt{\frac{\log p}{n}}\Big{)}.
	\end{eqnarray*}
	Denote the first term of the right hand side to be $\bm{H}$. For the first component of $\bm{H}$, by second order Taylor expansion, under (C1) we have
	\begin{align*}
		&\frac{\widehat{f}^{-1}\left(0\right)}{n}\sum_{k=1}^{n}X_{kj}\left\{F\left( \bX_{k,S}^{\rm T}\left(\bbeta_{S}^{*}-\bhbeta_{0,S}\right)\right)-F\left(0\right)\right\}\\
		=& \frac{\widehat{f}^{-1}\left(0\right)f\left(0\right)}{n}\sum_{k=1}^{n}X_{kj}\bX_{k,S}^{\rm T}\left(\bbeta^{*}_{S}-\bhbeta_{0,S}\right)+\frac{C\widehat{f}^{-1}\left(0\right)}{n}\sum_{k=1}^{n}|X_{kj}|\left\{\bX_{k,S}^{\rm T}\left(\bbeta^{*}_{S}-\bhbeta_{0,S}\right)\right\}^2.
	\end{align*}
	It is standard to show that 
	\begin{eqnarray*}
		\pr\left(|\bhSigma-\bSigma|_{\infty}\leq C\sqrt{\frac{\log n}{n}}\right)\rightarrow 1.
	\end{eqnarray*}
	Since $\Lambda_{\text{max}}(\bSigma)\leq c_{0}$,  we have
	\begin{eqnarray*}
		\Abs{\frac{1}{n}\sum_{k=1}^{n}\bX_{k}\bX_{k,S}^{\rm T}\left(\bbeta^{*}_{S}-\bhbeta_{0,S}\right)}_{\infty}&\leq& O_{\pr}\left(\sqrt{\frac{s\log n}{n}}a_{n}\right)
		+\Abs{\bSigma\left(\bbeta^{*}_{S}-\bhbeta_{0,S}\right)}_{\infty}\cr
		&=&O_{\pr}(a_{n}).
	\end{eqnarray*}
	
	Denote $(1,|X_{k1}|,\ldots,|X_{kp}|)^{\mathrm{T}}$ by $|\bX_{k}|$.
	Then by Lemma \ref{prop:f0} and \ref{lem:str}, we have 
	\begin{align*}
		\Abs{\bm{H}}_{\infty}\le&\Abs{\widehat{f}^{-1}\left(0\right)f\left(0\right)-1}\Abs{\frac{1}{n}\sum_{k=1}^{n}\bX_{k}\bX_{k,S}^{\rm T}\left(\bbeta^{*}_{S}-\bhbeta_{0,S}\right)}_{\infty}\\
		&+C\widehat{f}^{-1}\left(0\right)\Abs{\frac{1}{n}\sum_{k=1}^{n}|\bX_{k}|\left\{\bX_{k,S}^{\rm T}\left(\bbeta^{*}_{S}-\bhbeta_{0,S}\right)\right\}^2}_{\infty}\\
		=&O_{\pr}\left(\left(\sqrt{\frac{s\log n}{nh}}+a_{n}\right)a_{n}\right)+O_{\pr}(a^{2}_{n}).
	\end{align*}
	So we can easily have 
	\[
	\Abs{\z_{n}-\bhSigma\bbeta^{*}}_{\infty}=O_{\pr}\left(\sqrt{\frac{\log p}{n}}+\sqrt{\frac{sa_{n}\log n}{n}}+a_{n}\sqrt{\frac{s\log n}{nh}}+a_{n}^{2}\right).
	\]
	Since $h\asymp a_{n}$ and $sa_{n}=o(1)$, we prove the proposition.
\end{proof}

\begin{proof}[Proof of Theorem \ref{thm:betainf} and Theorem \ref{thm:betainft}]
	
	First, we show the results for Theorem \ref{thm:betainf}. Define $\btbeta$ to be the solution of the following optimization problem:
	\[
	\btbeta=\argmin_{\btheta\in\mbR^{p+1},\btheta_{S^{c}}=0}\frac{1}{2}\btheta^{\rm T}\bhSigma_{1}\btheta-\btheta^{\rm T}\left\{\z_{n}+\left(\bhSigma_{1}-\bhSigma\right)\bhbeta_{0}\right\}+\lambda_{n}\Abs{\btheta}_{1},
	\]
	where $\btheta_{S^{c}}$ denotes the subset vector with the coordinates of $\btheta$ in $S^{c}$. Then there exist sub-gradients $\btZ$ with $\abs{\btZ}_{\infty}\le 1$ such that
	\begin{equation}\label{eqn:subg}
		\bhSigma_{1,S\times S}\btbeta_{S}-\left\{\z_{n}+\left(\bhSigma_{1}-\bhSigma\right)\bhbeta_{0}\right\}_{S}+\lambda_{n}\btZ_{S}=0.
	\end{equation}
	It is enough to show that there exist sub-gradients $\bZ$ that satisfy
	\begin{equation}\label{eqn:grad}
		\bhSigma_{1}\bhbeta-\left\{\z_{n}+\left(\bhSigma_{1}-\bhSigma\right)\bhbeta_{0}\right\}+\lambda_{n}\bZ=0,
	\end{equation}
	$\abs{\bZ_{S}}_{\infty}\leq 1$ and $\abs{\bZ_{S^{c}}}_{\infty}<1$, i.e., $\abs{Z_{i}}$ are strictly less than one for $i\in S^{c}$. To construct such $\bZ$, we let $\bZ_{S}=\btZ_{S}$ and 
	\begin{eqnarray*}
		\bZ_{S^{c}}=&-\lambda_{n}^{-1}\left\{\left(\bhSigma_{1}\btbeta\right)_{S^{c}}-\left\{\z_{n}+\left(\bhSigma_{1}-\bhSigma\right)\bhbeta_{0}\right\}_{S^{c}}\right\}.
	\end{eqnarray*}
	\begin{lemma}\label{lem:strictless}
		Under the conditions in Theorem \ref{thm:betainf}, we have, with probability tending to one,
		\[
		\Abs{Z_{i}}\leq v
		\]
		uniformly for $i\in S^{c}$, for some $0<v<1$.
	\end{lemma}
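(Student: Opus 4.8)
The plan is to run a primal--dual witness argument: express the candidate dual vector $\bZ_{S^{c}}$ in closed form and bound it entrywise. Solving the restricted optimality condition \eqref{eqn:subg} gives $\btbeta_{S}=\bhSigma_{1,S\times S}^{-1}(\b_{S}-\lambda_{n}\btZ_{S})$, where I abbreviate $\b:=\z_{n}+(\bhSigma_{1}-\bhSigma)\bhbeta_{0}$; substituting this together with $\btbeta_{S^{c}}=\0$ into the definition of $\bZ_{S^{c}}$ yields
\begin{equation*}
	\bZ_{S^{c}}=\bhSigma_{1,S^{c}\times S}\bhSigma_{1,S\times S}^{-1}\btZ_{S}+\lambda_{n}^{-1}\Big(\b_{S^{c}}-\bhSigma_{1,S^{c}\times S}\bhSigma_{1,S\times S}^{-1}\b_{S}\Big).
\end{equation*}
The key structural observation is that $\bbeta^{*}_{S^{c}}=\0$, so $(\bhSigma_{1}\bbeta^{*})_{S^{c}}-\bhSigma_{1,S^{c}\times S}\bhSigma_{1,S\times S}^{-1}(\bhSigma_{1}\bbeta^{*})_{S}=\0$ identically; hence, writing $\Delta:=\b-\bhSigma_{1}\bbeta^{*}$, I may replace $\b$ by $\Delta$ in the parenthesis. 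Since $\btZ$ is a subgradient of $|\cdot|_{1}$ (so $|\btZ_{S}|_{\infty}\le1$), for every $i\in S^{c}$,
\begin{equation*}
	|Z_{i}|\le\Norm{\bhSigma_{1,S^{c}\times S}\bhSigma_{1,S\times S}^{-1}}_{\infty}+\lambda_{n}^{-1}\Big(1+\Norm{\bhSigma_{1,S^{c}\times S}\bhSigma_{1,S\times S}^{-1}}_{\infty}\Big)\abs{\Delta}_{\infty},
\end{equation*}
so it suffices to control the induced $\ell_{\infty}$ norm of the matrix factor and the size of $\Delta$.

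For the matrix factor I would show $\Norm{\bhSigma_{1,S^{c}\times S}\bhSigma_{1,S\times S}^{-1}}_{\infty}\le1-\alpha/2$ with probability tending to one. The population analogue is $\le1-\alpha$ by the irrepresentable condition (C2). I would pass to the sample version through $\bhSigma_{1,S\times S}^{-1}-\bSigma_{S\times S}^{-1}=-\bhSigma_{1,S\times S}^{-1}(\bhSigma_{1,S\times S}-\bSigma_{S\times S})\bSigma_{S\times S}^{-1}$, using the sub-Gaussian condition (C4) to obtain $\Norm{\bhSigma_{1,S\times S}-\bSigma_{S\times S}}_{\mathrm{op}}=o_{\pr}(1)$ (hence $\bhSigma_{1,S\times S}$ is invertible with $\Norm{\bhSigma_{1,S\times S}^{-1}}_{\mathrm{op}}=O_{\pr}(1)$) and $\max_{j}\abs{(\bhSigma_{1}-\bSigma)_{j,S}}_{1}=O_{\pr}(\sqrt{s^{2}(\log n)/m})=o_{\pr}(1)$ via a union bound over the (at most $p$) rows; the conditions $s=O(m^{r})$, $r<1/3$, $m\ge n^{c}$ from (C5) make all the error terms vanish. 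Lemma \ref{lem:str} is handy for the $\mathrm{op}$-norm type quantities that appear when one expands the product, and the careful tracking of the $\sqrt{s}$ factors incurred in converting between $\ell_{1}$, $\ell_{2}$ and $\ell_{\infty}$ norms is what makes this step delicate.

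For $\abs{\Delta}_{\infty}$ I would use the decomposition $\Delta=(\z_{n}-\bhSigma\bbeta^{*})+(\bhSigma_{1}-\bhSigma)(\bhbeta_{0}-\bbeta^{*})$: Proposition \ref{prop:Bnbeta0} bounds the first summand by $O_{\pr}(\sqrt{(\log n)/n}+a_{n}^{2})$, and on the event $\mathrm{supp}(\bhbeta_{0})\subseteq S$ (which holds with probability tending to one by (C6)) the second is bounded entrywise by $\abs{\bhSigma_{1}-\bhSigma}_{\infty}\,\abs{(\bhbeta_{0}-\bbeta^{*})_{S}}_{1}\le\abs{\bhSigma_{1}-\bhSigma}_{\infty}\sqrt{s}\,a_{n}=O_{\pr}(a_{n}\sqrt{(s\log n)/m})$, using the standard rate $\abs{\bhSigma_{1}-\bhSigma}_{\infty}=O_{\pr}(\sqrt{(\log n)/m})$. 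With $a_{n}=\sqrt{s(\log n)/m}$ these combine to $\abs{\Delta}_{\infty}=O_{\pr}\big(\sqrt{(\log n)/n}+a_{n}\sqrt{(s\log n)/m}\big)\le(C'/C_{0})\lambda_{n}$ for a constant $C'$ independent of $C_{0}$, by the choice of $\lambda_{n}$ in Theorem \ref{thm:betainf}. Plugging both bounds into the display above gives $\max_{i\in S^{c}}|Z_{i}|\le(1-\alpha/2)+2C'/C_{0}$ on an event of probability tending to one, and choosing the constant $C_{0}$ in $\lambda_{n}$ large enough (say $C_{0}>8C'/\alpha$) forces this below $v:=1-\alpha/4<1$, which is the claim.

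The main obstacle I anticipate is exactly the uniform-in-$j\in S^{c}$ control of $\bhSigma_{1,S^{c}\times S}\bhSigma_{1,S\times S}^{-1}$ around its population counterpart in the induced $\ell_{\infty}$ norm: it forces a genuine operator-norm bound on the inverse $\bhSigma_{1,S\times S}^{-1}$, which is the place where the restriction $r<1/3$ on the sparsity in (C5) is really used, and it requires keeping the $\sqrt{s}$-bookkeeping simultaneously uniform over all $p$ rows in $S^{c}$. Everything else is a routine assembly of Proposition \ref{prop:Bnbeta0} and Lemmas \ref{lem:Un}--\ref{lem:str} together with covering-number and Bernstein-type concentration arguments.
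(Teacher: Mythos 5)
Your proposal is correct and follows essentially the same route as the paper's proof: a primal--dual witness argument in which $\bZ_{S^{c}}$ is bounded by the sample version of the irrepresentable quantity $\|\bhSigma_{1,S^{c}\times S}\bhSigma_{1,S\times S}^{-1}\|_{\infty}\le 1-\alpha/2$ plus a remainder of order $\lambda_{n}^{-1}\Abs{\z_{n}+(\bhSigma_{1}-\bhSigma)\bhbeta_{0}-\bhSigma_{1}\bbeta^{*}}_{\infty}=O_{\pr}(1/C_{0})$, controlled via Proposition \ref{prop:Bnbeta0} and covariance concentration. The cancellation you make explicit (replacing $\b$ by $\Delta=\b-\bhSigma_{1}\bbeta^{*}$ using $\bbeta^{*}_{S^{c}}=\0$) is exactly what the paper's expansion in \eqref{zb} performs implicitly, so the two arguments coincide up to presentation.
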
	
	\begin{proof}[Proof of Lemma \ref{lem:strictless}]
		Recall that
		\begin{equation}\label{eqn:ZS}
			\bhSigma_{1,S\times S}\btbeta_{S}-\left\{\z_{n}+\left(\bhSigma_{1}-\bhSigma\right)\bhbeta_{0}\right\}_{S}=-\lambda_{n}\btZ_{S}.
		\end{equation}
		Write \eqref{eqn:ZS} as 
		\begin{equation*}
			\begin{aligned}
				-\lambda_{n}\btZ_{S}=&\bSigma_{S\times S}\left(\btbeta_{S}-\bbeta^{*}_{S}\right)+\left(\bhSigma_{1,S\times S}-\bSigma_{S\times S}\right)\left(\btbeta_{S}-\bbeta^{*}_{S}\right)+\bhSigma_{1,S\times S}\bbeta^{*}_{S}\\&-\left\{\z_{n}+\left(\bhSigma_{1}-\bhSigma\right)\bhbeta_{0}\right\}_{S}.
			\end{aligned}
		\end{equation*}

		This implies that
		\begin{eqnarray*}
			\btbeta_{S}-\bbeta^{*}_{S}&=&\bSigma_{S\times S}^{-1}\Big{\{}-\lambda_{n}\btZ_{S}-\left(\bhSigma_{1,S\times S}-\bSigma_{S\times S}\right)\left(\btbeta_{S}-\bbeta^{*}_{S}\right)\cr
			& &\quad-\bhSigma_{1,S\times S}\bbeta_{S}+\left\{\z_{n}+\left(\bhSigma_{1}-\bhSigma\right)\bhbeta_{0}\right\}_{S}\Big{\}}\\
			&=&\bSigma_{S\times S}^{-1}\Big{\{}-\lambda_{n}\btZ_{S}-\left(\bhSigma_{1,S\times S}-\bSigma_{S\times S}\right)\left(\btbeta_{S}-\bbeta^{*}_{S}\right)\cr
			&&\quad-\left(\bhSigma_{1,S\times S}-\bhSigma_{S\times S}\right)\left(\bbeta^{*}_{S}-\bhbeta_{0,S}\right)+\left(\z_{n}-\bhSigma\bbeta^{*}\right)_{S}\Big{\}}.
		\end{eqnarray*}
		By (\ref{ads}), we have with probability tending to one,
		\begin{eqnarray*}
			\Abs{\btbeta_{S}-\bbeta^{*}_{S}}_{2}&\le& C\sqrt{s}\lambda_{n}+ C\sqrt{\frac{s\log n}{m}}\Abs{\btbeta_{S}-\bbeta^{*}_{S}}_{2}\cr
			&&+C\sqrt{s}\left(\sqrt{\frac{\log n}{m}}+\sqrt{\frac{\log n}{n}}\right)\Abs{\bbeta^{*}_{S}-\bhbeta_{0,S}}_{2}+C\sqrt{s}\Abs{\z_{n}-\bhSigma\bbeta^{*}}_{\infty}.
		\end{eqnarray*}
		By the choice of  $\lambda_{n}$, Proposition \ref{prop:Bnbeta0} and $a_{n}=O(\sqrt{s(\log n)/m})$,  
		\begin{equation}\label{eqn:btbeta}
			\Abs{\btbeta_{S}-\bbeta^{*}_{S}}_{2}\le C\sqrt{s}\lambda_{n},
		\end{equation}
		with probability tending to one.
		
		Due to the definition of $\bZ_{S^{c}}$, we have that
		\begin{align}\label{zb}
			&\nonumber\bZ_{S^{c}}\\\nonumber=&-\lambda_{n}^{-1}\left\{\left(\bhSigma_{1}\btbeta\right)_{S^{c}}-\left\{\z_{n}+\left(\bhSigma_{1}-\bhSigma\right)\bhbeta_{0}\right\}_{S^{c}}\right\}\\
			\nonumber=&-\lambda_{n}^{-1}\bhSigma_{1,S^{c}\times S}\bhSigma_{1,S\times S}^{-1}\left\{\z_{n}+\left(\bhSigma_{1}-\bhSigma\right)\bhbeta_{0}\right\}_{S}+\bhSigma_{1,S^{c}\times S}\bhSigma_{1,S\times S}^{-1}\btZ_{S}\\
			\nonumber&+\lambda_{n}^{-1}\left\{\z_{n}+\left(\bhSigma_{1}-\bhSigma\right)\bhbeta_{0}\right\}_{S^{c}}\\
			\nonumber=&-\lambda_{n}^{-1}\bhSigma_{1,S^{c}\times S}\bhSigma_{1,S\times S}^{-1}\left[\left\{\z_{n}-\bhSigma\bbeta^{*}\right\}_{S}+\left(\bhSigma_{S\times \{1,\ldots,p+1\}}-\bhSigma_{1,S\times \{1,\ldots,p+1\}}\right)\left(\bbeta^{*}-\bhbeta_{0}\right)\right]\\
			\nonumber&+\bhSigma_{1,S^{c}\times S}\bhSigma_{1,S\times S}^{-1}\btZ_{S}
			+\lambda_{n}^{-1}\left\{\z_{n}-\bhSigma\bbeta^{*}\right\}_{S^{c}}\\
			\nonumber&+\lambda_{n}^{-1}\left(\bhSigma_{S^{c}\times \{1,\ldots,p+1\}}-\bhSigma_{1,S^{c}\times \{1,\ldots,p+1\}}\right)\left(\bbeta^{*}-\bhbeta_{0}\right).\\
		\end{align}
		Note that 
		\begin{align*}
			&\bhSigma_{1,S^{c}\times S}\bhSigma_{1,S\times S}^{-1}\\=&\left(\bhSigma_{1,S^{c}\times S}-\bSigma_{S^{c}\times S}\right)\left(\bhSigma_{1,S\times S}^{-1}-\bSigma_{S\times S}^{-1}\right)+\bSigma_{S^{c}\times S}\left(\bhSigma_{1,S\times S}^{-1}-\bSigma_{S\times S}^{-1}\right)\\
			&+\left(\bhSigma_{1,S^{c}\times S}-\bSigma_{S^{c}\times S}\right)\bSigma_{S\times S}^{-1}+\bSigma_{S^{c}\times S}\bSigma_{S\times S}^{-1}.
		\end{align*}
		By the proof of Lemma \ref{lem:str}, we can easily get
		\begin{eqnarray*}
			\Big{\|}\bhSigma_{1,S\times S}-\bSigma_{S\times S}\Big{\|}_{\mathrm{op}}=O_{\pr}\Big{(}\sqrt{\frac{s+\log n}{m}}\Big{)}.
		\end{eqnarray*}
		This yields that
		\begin{eqnarray*}
			\Big{\|}\bhSigma^{-1}_{1,S\times S}-\bSigma^{-1}_{S\times S}\Big{\|}_{\mathrm{op}}=O_{\pr}\Big{(}\sqrt{\frac{s+\log n}{m}}\Big{)}.
		\end{eqnarray*}
		Then
		\begin{align*}
			&\Norm{\left(\bhSigma_{1,S^{c}\times S}-\bSigma_{S^{c}\times S}\right)\left(\bhSigma_{1,S\times S}^{-1}-\bSigma_{S\times S}^{-1}\right)}_{\infty}
			\\\leq& s^{3/2}\Abs{\bhSigma_{1,S^{c}\times S}-\bSigma_{S^{c}\times S}}_{\infty}\Big{\|}\bhSigma^{-1}_{1,S\times S}-\bSigma^{-1}_{S\times S}\Big{\|}_{\mathrm{op}}\\
			=&O_{\pr}\left(s^{2}(\log n)/m\right).
		\end{align*}
		Similarly,
		\begin{align*}
			\Norm{\bSigma_{S^{c}\times S}\left(\bhSigma_{1,S\times S}^{-1}-\bSigma_{S\times S}^{-1}\right)}_{\infty}\leq& s\|\bSigma\|_{\mathrm{op}}\Big{\|}\bhSigma^{-1}_{1,S\times S}-\bSigma^{-1}_{S\times S}\Big{\|}_{\mathrm{op}}\\ =&O_{\pr}\left(s\sqrt{\frac{s+\log n}{m}}\right),
		\end{align*}
		and
		\begin{align*}
			\Norm{\left(\bhSigma_{1,S^{c}\times S}-\bSigma_{S^{c}\times S}\right)\bSigma_{S\times S}^{-1}}_{\infty}\leq& s^{3/2}\Abs{\bhSigma_{1,S^{c}\times S}-\bSigma_{S^{c}\times S}}_{\infty}\Big{\|}\bSigma^{-1}_{S\times S}\Big{\|}_{\mathrm{op}}\\=&O_{\pr}\left(\sqrt{\frac{s^{3}\log n}{m}}\right).
		\end{align*}
		So we have $\norm{\bhSigma_{1,S^{c}\times S}\bhSigma_{1,S\times S}^{-1}}_{\infty}\le o_{\pr}(1)+\norm{\bSigma_{S^{c}\times S}\bSigma_{S\times S}^{-1}}_{\infty}$. Since $C_{0}$ in $\lambda_{n}$ is sufficiently large, we can see that $\lambda_{n}^{-1}\abs{\z_{n}-\bhSigma\bbeta^{*}}_{\infty}$ is small enough.
		
		Since $\norm{\bSigma_{S^{c}\times S}\bSigma_{S\times S}^{-1}}_{\infty}\le 1-\alpha$ and $\abs{\btZ_{S}}_{\infty}\le 1$, we have $\abs{\bhSigma_{1,S^{c}\times S}\bhSigma_{1,S\times S}^{-1}\btZ_{S}}_{\infty}\le 1-\alpha/2$ with probability tending to one. Note that 
		$\pr(\text{supp}(\bhbeta_{0})\subseteq S)\rightarrow 1$, we have	
		\begin{align*}
			&\lambda_{n}^{-1}\Abs{\left(\bhSigma_{1,S\times \{1,\ldots,p+1\}}-\bhSigma_{S\times \{1,\ldots,p+1\}}\right)\left(\bbeta^{*}-\bhbeta_{0}\right)}_{\infty}\\=& O_{\pr}(1) \lambda_{n}^{-1}\sqrt{s(\log n)/m}\Abs{\bbeta^{*}-\bhbeta_{0}}_{2}\\
			=& O_{\pr}\Big{(}\lambda_{n}^{-1}a_{n}\sqrt{s(\log n)/m}\Big{)}\\
			=&O_{\pr}(1/C_{0}),
		\end{align*}
		and
		\begin{align*}
			&\Abs{\left(\bhSigma_{1,S^{c}\times \{1,\ldots,p+1\}}-\bhSigma_{S^{c}\times \{1,\ldots,p+1\}}\right)\left(\bbeta^{*}-\bhbeta_{0}\right)}_{\infty}\\=& 
			O_{\pr}(1) \lambda_{n}^{-1}\sqrt{s(\log n)/m}\Abs{\bbeta^{*}-\bhbeta_{0}}_{2}\\
			=&O_{\pr}(1/C_{0}).
		\end{align*}
		The above arguments, together with (\ref{zb}), imply uniformly for $j\in S^{c}$ and some $v<1$,
		\[
		\Abs{Z_{j}}\le v<1.
		\]
	\end{proof}
	
	By Lemma \ref{lem:strictless}, uniformly for $i\in S^{c}$ and some $v<1$,
	\[
	\Abs{Z_{i}}\le v<1
	\]
	with probability tending to one. By this primal-dual witness construction, we have $\bhbeta=\btbeta$ with probability tending to one.
	Thus 
	\begin{eqnarray*}
		\pr\left(\abs{\bhbeta-\bbeta^{*}}_{1}\le\sqrt{s}\abs{\bhbeta-\bbeta^{*}}_{2}\right)\rightarrow 1.
	\end{eqnarray*}
	It is easy to see that 
	\[
	\Abs{\bhSigma_{1}\bhbeta-\z_{n}-\left(\bhSigma_{1}-\bhSigma\right)\bhbeta_{0}}_{\infty}\le \lambda_{n},
	\]
	due to equation \eqref{eqn:subg} and $\abs{\bZ}_{\infty}\le1$. 
	It is standard to show that for some $C>0$,
	\begin{eqnarray}\label{ads}
		\pr\left(|\bhSigma_{1}-\bSigma|_{\infty}\leq C\sqrt{\frac{\log n}{m}}\right)\rightarrow 1.
	\end{eqnarray}
	Note that $\pr(\text{supp}(\bhbeta_{0})\subseteq S)\rightarrow 1$. By Proposition \ref{prop:Bnbeta0},
	\begin{eqnarray}\label{easytoshow}
		\abs{\bhSigma_{1}\bbeta^{*}-\z_{n}-(\bhSigma_{1}-\bhSigma)\bhbeta_{0}}_{\infty}&\le& \abs{\z_{n}-\bhSigma\bbeta^{*}}_{\infty}
		+\abs{(\bhSigma_{1}-\bhSigma)(\bhbeta_{0}-\bbeta^{*})}_{\infty}\cr
		&=&O_{\pr}\left(\sqrt{\frac{\log n}{n}}+a_{n}^{2}+\sqrt{\frac{s\log n}{m}}a_{n}\right)\cr
		&=&\frac{O_{\pr}(1)}{C_{0}}\lambda_{n}.
	\end{eqnarray}
	Therefore, by letting $C_{0}$ in $\lambda_n$ being sufficiently large,  we have $\abs{\bhSigma_{1}(\bbeta^{*}-\bhbeta)}_{\infty}\le 2\lambda_{n}$ with probability tending to one. By the following condition 
	\begin{equation}\label{eqn:cond2}
		\underset{\delta:\abs{\delta}_{1}\le c_{1}\sqrt{s}\abs{\delta}_{2}}{\min}\frac{\delta^{\rm T}\bhSigma_{1}\delta}{\abs{\delta}_{2}^{2}}\ge c_{2},\qquad c_1,c_2>0,
	\end{equation}
	we can further have
	\[
	\Abs{\bhbeta-\bbeta^{*}}_{2}^{2}\le C\left(\bhbeta-\bbeta^{*}\right)^{\rm T}\bhSigma_1\left(\bhbeta-\bbeta^{*}\right)\le C\lambda_{n}\Abs{\bhbeta-\bbeta^{*}}_{1}\le C\lambda_{n}\sqrt{s}\Abs{\bhbeta-\bbeta^{*}}_{2}.
	\]
	This proves that $\abs{\bhbeta-\bbeta^{*}}_{2}\le C\lambda_{n}\sqrt{s}$.
	
	To prove Theorem \ref{thm:betainf}, it is enough to show that $\bhSigma_{1}$ satisfies condition \eqref{eqn:cond2}. We have, with probability tending to one,
	\begin{align*}
		\delta^{\rm T}\bhSigma_{1}\delta\ge&\Abs{\delta}_{2}^{2}\lambda_{\min}\left(\bSigma\right)-\Abs{\delta}_{1}^{2}\Abs{\bhSigma_{1}-\bSigma}_{\infty}\\
		\ge&\Abs{\delta}_{2}^{2}\lambda_{\min}\left(\bSigma\right)-\Abs{\delta}_{2}^{2}s\Abs{\bhSigma_{1}-\bSigma}_{\infty}\\
		\ge&c\Abs{\delta}_{2}^{2},
	\end{align*}
	for some $c>0$ as $s=o((m/\log n)^{1/2})$. This completes the proof of Theorem \ref{thm:betainf}.
	
	For $t=1$, note that we assume $\abs{\bhbeta_{0}-\bbeta^{*}}_{2}=O_{\pr}(\sqrt{s(\log n)/m})$. Then let $a_{n}=\sqrt{s(\log n)/m}$ in Theorem \ref{thm:betainf} and it is easy to see Theorem \ref{thm:betainft} holds for $t=1$. Now suppose Theorem \ref{thm:betainft} holds for $t=k-1$ with some $k\ge 2$. Then for $t=k$ with initial estimator being $\bhbeta^{(k-1)}$, we have $a_{n,k-1}=\sqrt{\frac{s \log n}{n}}+s^{(2k-1)/2}\left(\frac{\log n}{m}\right)^{k/2}$.
	Hence by Theorem \ref{thm:betainf} again and the condition on $s$,
	\begin{eqnarray*}
		\Abs{\bhbeta^{(k)}-\bbeta^{*}}_{2}&=&O_{\pr}\left(\sqrt{\frac{s\log n}{n}}+a_{n,k-1}\sqrt{\frac{s^{2}\log n}{m}}\right)\cr
		&=&O_{\pr}\left(\sqrt{\frac{s \log n}{n}}+s^{(2k+1)/2}\left(\frac{\log n}{m}\right)^{(k+1)/2}\right).
	\end{eqnarray*}
	This implies that Theorem \ref{thm:betainft} holds for $t=k$. Then it completes the proof of Theorem \ref{thm:betainft}.
\end{proof}

\subsection{Proof of Theorem \ref{thm:support} and Theorem \ref{thm:supportt}}
\begin{proof}[Proof of Theorem \ref{thm:support} and Theorem \ref{thm:supportt}]
	Theorem \ref{thm:support} (i) and \ref{thm:supportt} (i) follow directly from the proof of Theorem \ref{thm:betainf}.
	As for Theorem \ref{thm:support} (ii), note that $\pr(\bhbeta=\btbeta)\rightarrow 1$. Recall \begin{eqnarray*}
		\btbeta_{S}-\bbeta^{*}_{S}
		&=&\bSigma_{S\times S}^{-1}\Big{\{}-\lambda_{n}\btZ_{S}-\left(\bhSigma_{1,S\times S}-\bSigma_{S\times S}\right)\left(\btbeta_{S}-\bbeta^{*}_{S}\right)\cr
		&&\quad-\left(\bhSigma_{1,S\times S}-\bhSigma_{S\times S}\right)\left(\bbeta^{*}_{S}-\bhbeta_{0,S}\right)+\left(\z_{n}-\bhSigma\bbeta^{*}\right)_{S}\Big{\}}.
	\end{eqnarray*}	
	By Equation \eqref{eqn:btbeta}, we obtain that, with probability tending to one,
	\begin{eqnarray*}
		\Abs{\bSigma_{S\times S}^{-1}\left(\bhSigma_{1,S\times S}-\bSigma_{S\times S}\right)\left(\btbeta_{S}-\bbeta^{*}_{S}\right)}_{\infty}&\le& C\|\bSigma_{S\times S}^{-1}\|_{\infty}\|\bhSigma_{1,S\times S}-\bSigma_{S\times S}\|_{\mathrm{op}}\Abs{\btbeta_{S}-\bbeta^{*}_{S}}_{2}\cr
		&\leq &C\|\bSigma_{S\times S}^{-1}\|_{\infty}\sqrt{s(s+\log n)/m}\Abs{\btbeta_{S}-\bbeta^{*}_{S}}_{\infty},
	\end{eqnarray*}
	\[
	\Abs{\bSigma_{S\times S}^{-1}\left(\bhSigma_{1,S\times S}-\bhSigma_{S\times S}\right)\left(\bbeta^{*}_{S}-\bhbeta_{0,S}\right)}_{\infty}\le C\|\bSigma_{S\times S}^{-1}\|_{\infty}a_{n}\sqrt{(s+\log n)/m},
	\]
	and
	\begin{eqnarray*}
		\Abs{\bSigma_{S\times S}^{-1}\left(\z_{n}-\bhSigma\bbeta^{*}\right)_{S}}_{\infty}=O_{\pr}\left(\norm{\bSigma_{S\times S}^{-1}}_{\infty}\sqrt{\frac{\log n}{n}}+\norm{\bSigma_{S\times S}^{-1}}_{\infty}a_{n}^{2}\right).
	\end{eqnarray*}
	With Lemma \ref{prop:Bnbeta0} and the choice of $\lambda_{n}$, we obtain that 
	\[
	\Abs{	\btbeta_{S}-\bbeta^{*}_{S}
	}_{\infty}\le C\norm{\bSigma_{S\times S}^{-1}}_{\infty}\left(\sqrt{\frac{\log n}{n}}+a_{n}\sqrt{\frac{s\log n}{m}}\right).
	\]
	Then Theorem \ref{thm:support} (ii) follows from the above and together with the lower bound condition on $\min_{j\in S}\abs{\beta^{*}_{j}}$.
	
	Theorem \ref{thm:supportt} (ii) follows from the similar proof of Theorem \ref{thm:support} (ii) by replacing the initial estimator as $\bhbeta^{(t-1)}$ and the lower bound condition on $\min_{j\in S}\abs{\beta^{*}_{j}}$.
\end{proof}
\section{Additional Experiments}\label{sec:simsupp}
In this section we provide some additional experiment results using quantile level $\tau=0.5$. The results are reported in Tables \ref{mn_normal0.5}, \ref{mn_cauchy0.5} and \ref{mn_exp0.5}. The observations are similar to the case of $\tau=0.3$ in Section \ref{sec:effect}.
\begin{table}
	\caption{The $\ell_2$-error, precision, and recall of the three estimators under different combinations of the sample size $n$ and local sample size $m$. Noises are generated from normal distribution and quantile level $\tau = 0.5$.\label{mn_normal0.5}}
	\centering
	\resizebox{\textwidth}{!}{%
		\begin{tabular}{c|c|ccc|ccc|ccc}
			\hline
			\multicolumn{2}{c|}{$m$} & \multicolumn{3}{c|}{200} & \multicolumn{3}{c|}{500} & \multicolumn{3}{c}{1000} \\ \hline
			\multicolumn{2}{c|}{$n$} & 5000 & 10000 & 20000 & 5000 & 10000 & 20000 & 5000 & 10000 & 20000 \\ \hline
			
			\multirow{3}{*}{\begin{tabular}[c]{@{}c@{}}Pooled\\ REL\end{tabular}} & Precision      & 0.83  & 0.91  & 0.94  & 0.81  & 0.87  & 0.94  & 0.82  & 0.86  & 0.93  \\
			&Recall         & 1.00  & 1.00  & 1.00  & 1.00  & 1.00  & 1.00  & 1.00  & 1.00  & 1.00  \\
			&$\ell_2$-error & 0.133 & 0.097 & 0.071 & 0.131 & 0.094 & 0.069 & 0.130 & 0.098 & 0.070 \\ \hline
			\multirow{3}{*}{\begin{tabular}[c]{@{}c@{}}Dist\\ REL\end{tabular}} & Precision      & 0.98  & 0.99  & 1.00  & 0.91  & 0.95  & 0.98  & 0.86  & 0.90  & 0.96  \\
			&Recall         & 1.00  & 1.00  & 1.00  & 1.00  & 1.00  & 1.00  & 1.00  & 1.00  & 1.00  \\
			&$\ell_2$-error & 0.149 & 0.112 & 0.088 & 0.137 & 0.098 & 0.073 & 0.132 & 0.100 & 0.072 \\ \hline
			\multirow{3}{*}{\begin{tabular}[c]{@{}c@{}}Avg\\ DC\end{tabular}} & Precision      & 0.05  & 0.04  & 0.04  & 0.07  & 0.05  & 0.04  & 0.14  & 0.08  & 0.05  \\
			&Recall         & 0.99  & 1.00  & 1.00  & 0.98  & 0.99  & 1.00  & 0.97  & 0.99  & 0.99  \\
			&$\ell_2$-error & 0.341 & 0.324 & 0.313 & 0.219 & 0.202 & 0.192 & 0.174 & 0.156 & 0.139 \\ \hline
		\end{tabular}
	}
\end{table}
\begin{table}
	\caption{The $\ell_2$-error, precision, and recall of the three estimators under different combinations of the sample size $n$ and local sample size $m$. Noises are generated from Cauchy distribution and quantile level $\tau = 0.5$.\label{mn_cauchy0.5}}
	\centering
	\resizebox{\textwidth}{!}{%
		\begin{tabular}{c|c|ccc|ccc|ccc}
			\hline
			\multicolumn{2}{c|}{$m$} & \multicolumn{3}{c|}{200} & \multicolumn{3}{c|}{500} & \multicolumn{3}{c}{1000} \\ \hline
			\multicolumn{2}{c|}{$n$} & 5000 & 10000 & 20000 & 5000 & 10000 & 20000 & 5000 & 10000 & 20000 \\ \hline
			\multirow{3}{*}{\begin{tabular}[c]{@{}c@{}}Pooled\\ REL\end{tabular}} & Precision      & 0.82  & 0.88  & 0.94  & 0.85  & 0.91  & 0.95  & 0.83  & 0.89  & 0.93  \\
			&Recall         & 1.00  & 1.00  & 1.00  & 1.00  & 1.00  & 1.00  & 1.00  & 1.00  & 1.00  \\
			&$\ell_2$-error & 0.118 & 0.083 & 0.063 & 0.120 & 0.087 & 0.063 & 0.119 & 0.085 & 0.062 \\ \hline
			\multirow{3}{*}{\begin{tabular}[c]{@{}c@{}}Dist\\ REL\end{tabular}} & Precision      & 0.99  & 0.99  & 1.00  & 0.93  & 0.96  & 0.99  & 0.85  & 0.92  & 0.96  \\
			&Recall         & 1.00  & 1.00  & 1.00  & 1.00  & 1.00  & 1.00  & 1.00  & 1.00  & 1.00  \\
			&$\ell_2$-error & 0.135 & 0.098 & 0.072 & 0.124 & 0.091 & 0.066 & 0.121 & 0.087 & 0.064 \\ \hline
			\multirow{3}{*}{\begin{tabular}[c]{@{}c@{}}Avg\\ DC\end{tabular}} & Precision      & 0.05  & 0.04  & 0.04  & 0.09  & 0.06  & 0.04  & 0.14  & 0.08  & 0.06  \\
			&Recall         & 1.00  & 1.00  & 1.00  & 1.00  & 1.00  & 1.00  & 1.00  & 1.00  & 1.00  \\
			&$\ell_2$-error & 0.303 & 0.302 & 0.280 & 0.196 & 0.176 & 0.166 & 0.159 & 0.133 & 0.122 \\ \hline
		\end{tabular}
	}
\end{table}
\begin{table}
	\caption{The $\ell_2$-error, precision, and recall of the three estimators under different combinations of the sample size $n$ and local sample size $m$. Noises are generated from exponential distribution and quantile level $\tau = 0.5$.\label{mn_exp0.5}}
	\centering
	\resizebox{\textwidth}{!}{%
		\begin{tabular}{c|c|ccc|ccc|ccc}
			\hline
			\multicolumn{2}{c|}{$m$} & \multicolumn{3}{c|}{200} & \multicolumn{3}{c|}{500} & \multicolumn{3}{c}{1000} \\ \hline
			\multicolumn{2}{c|}{$n$} & 5000 & 10000 & 20000 & 5000 & 10000 & 20000 & 5000 & 10000 & 20000 \\ \hline
			\multirow{3}{*}{\begin{tabular}[c]{@{}c@{}}Pooled\\ REL\end{tabular}} & Precision      & 0.83  & 0.87  & 0.98  & 0.82  & 0.93  & 0.95  & 0.83  & 0.92  & 0.95  \\
			&Recall         & 1.00  & 1.00  & 1.00  & 1.00  & 1.00  & 1.00  & 1.00  & 1.00  & 1.00  \\
			&$\ell_2$-error & 0.071 & 0.053 & 0.035 & 0.068 & 0.050 & 0.038 & 0.065 & 0.047 & 0.034 \\ \hline
			\multirow{3}{*}{\begin{tabular}[c]{@{}c@{}}Dist\\ REL\end{tabular}} & Precision      & 0.89  & 0.97  & 0.99  & 0.86  & 0.95  & 0.98  & 0.85  & 0.94  & 0.99  \\
			&Recall         & 1.00  & 1.00  & 1.00  & 1.00  & 1.00  & 1.00  & 1.00  & 1.00  & 1.00  \\
			&$\ell_2$-error & 0.086 & 0.063 & 0.040 & 0.073 & 0.051 & 0.039 & 0.072 & 0.049 & 0.035 \\ \hline
			\multirow{3}{*}{\begin{tabular}[c]{@{}c@{}}Avg\\ DC\end{tabular}} & Precision      & 0.08  & 0.07  & 0.06  & 0.07  & 0.08  & 0.05  & 0.09  & 0.06  & 0.07  \\
			&Recall         & 1.00  & 1.00  & 1.00  & 1.00  & 1.00  & 1.00  & 1.00  & 1.00  & 1.00  \\
			&$\ell_2$-error & 0.188 & 0.188 & 0.179 & 0.100 & 0.098 & 0.095 & 0.085 & 0.073 & 0.063 \\ \hline
			
		\end{tabular}
	}
\end{table}
\newpage
\bibliographystyle{rss}
\bibliography{ref}

\end{document}